\def\LPSP{{\sf LPS}}
\newtheorem{definition}{Definition}[section]
\newtheorem{lemma}[definition]{Lemma}
\newtheorem{theorem}[definition]{Theorem}
\newtheorem{proposition}[definition]{Proposition}
\DeclarePairedDelimiter{\ceil}{\lceil}{\rceil}
\DeclarePairedDelimiter{\floor}{\lfloor}{\rfloor}
\newcommand{\midpalin}[1][n]{{\sf midLPS}$_{\Sigma}[#1]$\xspace}
\newcommand{\palin}[1][n]{{\sf LPS}$_{\Sigma}[#1]$\xspace}
\newcommand{\aerr}{\ensuremath{E}}
\newcommand{\alg}{\mathcal{A}}
\newcommand{\dalg}{\mathcal{D}}
\newcommand{\bigo}{\mathcal{O}}
\newcommand{\etal}{et~al.}
\newcommand{\ttl}{\mathit{ttl}}
\title{Tight Tradeoffs for Real-Time Approximation of Longest Palindromes in Streams}
\author[1]{Pawe\l{} Gawrychowski}
\author[2]{Oleg Merkurev}
\author[2]{Arseny M. Shur}
\author[3]{Przemys\l{}aw Uzna\'nski}
\affil[1]{Institute of Informatics, University of Warsaw, Poland}
\affil[2]{Institute of Mathematics and Computer Science, Ural Federal University\\
  Ekaterinburg, Russia}
\affil[3]{Department of Computer Science, ETH Z\" urich, Switzerland}
\begin{document}

\maketitle

\begin{abstract}
We consider computing a longest palindrome in the streaming model, where the symbols arrive
one-by-one and we do not have random access to the input.
While computing the answer exactly using sublinear space is not possible in such a setting, one can
still hope for a good approximation guarantee. 
Our contribution is twofold. First, we provide lower bounds on the space requirements for randomized
approximation algorithms processing inputs of length $n$.
We rule out Las Vegas algorithms, as they cannot achieve sublinear space complexity.
For Monte Carlo algorithms, we prove a lower bounds of  $\Omega
( M \log\min\{|\Sigma|,M\})$ bits of memory; here $M=n/E$ for approximating the answer with additive error $E$, and $M= \frac{\log n}{\log (1+\varepsilon)}$ for approximating the answer with multiplicative error $(1 + \varepsilon)$. Second, we design three real-time algorithms for this problem. Our Monte Carlo approximation algorithms for both additive and multiplicative versions of the problem use $\bigo(M)$ words of memory. Thus the obtained lower bounds are asymptotically tight up to a logarithmic factor.
The third algorithm is deterministic and finds a longest palindrome exactly if it is short. This algorithm can be run in parallel with a Monte Carlo algorithm to obtain better results in practice. Overall, both the time and space complexity of finding a longest palindrome in a stream are essentially settled.
\end{abstract}

\section{Introduction}
In the streaming model of computation, a very long input arrives sequentially in small portions and cannot be stored in full due to space limitation. While well-studied in general, this is a rather recent trend in
algorithms on strings. The main goals are minimizing the space complexity, i.e., avoiding storing
the already seen prefix of the string explicitly, and designing real-time algorithm, i.e., processing
each symbol in worst-case constant time. However, the algorithms are usually randomized and return
the correct answer with high probability.
The prime example of a problem on string considered in the streaming model
is pattern matching, where we
want to detect an occurrence of a pattern in a given text.
It is somewhat surprising that one can actually solve it using polylogarithmic space in the streaming model,
as proved by Porat and Porat~\cite{PoPo09}. A simpler solution was later given by Erg{\"u}n
\etal~\cite{EJS10}, while Breslauer
and Galil designed a real-time algorithm~\cite{BrGa11}. Similar questions studied in such setting
include multiple-pattern
matching~\cite{CFPSS15}, approximate pattern matching~\cite{CFPSS16}, and
parametrized pattern matching~\cite{JPS13}.

We consider computing a longest palindrome in the streaming model, where
a palindrome is a fragment which reads the same in both directions.
This is one of the basic questions concerning regularities in texts
and it has been extensively studied in the classical non-streaming setting, see~\cite{ABG95,GaSe78,KMP77,Man75}
and the references therein. The notion of palindromes, but with a slightly different meaning, is
very important in
computational biology, where one considers strings over $\{A,T,C,G\}$ and a palindrome is a sequence
equal to its reverse complement (a reverse complement reverses the sequences and interchanges
$A$ with $T$ and $C$ with $G$); see~\cite{GMN14} and the references therein for
a discussion of their algorithmic aspects. Our results generalize to biological palindromes
in a straightforward manner.

We denote by $\LPSP(S)$ 
the problem of finding the maximum length of a palindrome in a string $S$ (and a starting position
of a palindrome of such length in $S$).
Solving $\LPSP(S)$ in the streaming model was recently considered by Berenbrink
\etal~\cite{BEMS14},
who developed tradeoffs between the bound on the error and the space complexity for additive and multiplicative variants of the problem, that is,
for approximating the length of the longest palindrome with either additive or multiplicative error.
Their algorithms were Monte Carlo, i.e., returned the correct answer with high probability. They also proved
that any Las Vegas algorithm achieving additive error $\aerr$ must necessarily use $\Omega(\frac{n}{\aerr}\log|\Sigma|)$ bits of memory,
which matches the space complexity of their solution up to a logarithmic factor in the $\aerr\in [1,\sqrt{n}]$ range,
but leaves a few questions. Firstly, does the lower bound still hold for Monte Carlo algorithms? Secondly, 
what is the best possible space complexity when $\aerr\in (\sqrt{n},n]$ in the additive variant,
and what about the multiplicative version? Finally, are there real-time algorithms achieving these
optimal space bounds? We answer all these questions.

Our main goal is to settle the space complexity of $\LPSP$. We start with the lower bounds in
Sect.~\ref{section:lowerbounds}. First, we show that Las Vegas algorithms cannot achieve sublinear
space complexity at all.
Second, we prove a lower bound of  $\Omega
( M \log\min\{|\Sigma|,M\})$ bits of memory for Monte Carlo algorithms;
here $M=n/E$ for approximating the answer with
additive error $E$, and $M= \frac{\log n}{\log (1+\varepsilon)}$ for approximating the answer
with multiplicative error $(1 + \varepsilon)$. Then, in Sect.~\ref{section:algorithms}
we design real-time Monte Carlo algorithms matching these lower bounds up to a logarithmic
factor. More precisely, our algorithm for $\LPSP$ with additive error $E\in [1,n]$ uses
$\bigo(n/E)$ words of space, while our algorithms for $\LPSP$ with multiplicative
error $\varepsilon \in (0, 1]$ (resp., $\varepsilon \in (1, n]$) uses $\bigo\big(\frac{\log (n\varepsilon)}{\varepsilon}\big)$ (resp., $\bigo(\frac{\log n}{\log (1+\varepsilon)})$) words of space\footnote{Note that (a) $\log (1+\varepsilon)$ is equivalent to $\varepsilon$ whenever $\varepsilon<1$; (b) the space used by the algorithms is $O(n)$ for any values of errors; (c) the multiplicative lower bound applies to $\varepsilon > n^{-0.99}$, and thus is not contradicting the algorithm space usage.}. 
Finally we present, for any $m$, a deterministic $\bigo(m)$-space real-time algorithm solving $\LPSP$ exactly if the answer is less than $m$ and detecting a palindrome of length $\ge m$ otherwise.
The last result implies that if the input stream is fully random, then with high probability its longest palindrome can be found exactly by a real-time algorithm within logarithmic space.

\paragraph{Notation and Definitions.}
Let $S$ denote a string of length $n$ over an alphabet $\Sigma=\{1,\ldots,N\}$, where $N$ is polynomial in $n$. We write $S[i]$ for the $i$th symbol of $S$ and $S[i..j]$ for its \emph{substring} (or \emph{factor}) $S[i] S[i {+} 1] \cdots S[j]$; thus, $S[1..n]=S$. A \emph{prefix} (resp. \emph{suffix}) of $S$ is a substring of the form $S[1..j]$ (resp., $S[j..n]$). A string $S$ is a \emph{palindrome} if it equals its \emph{reversal} $S[n]S[n{-}1]\cdots S[1]$. By $L(S)$ we denote the length of a longest palindrome which is a factor of $S$. The symbol $\log$ stands for the binary logarithm. 

We consider the \emph{streaming model} of computation: the input string $S[1..n]$ (called the \emph{stream}) is read left to right, one symbol at a time, and cannot be stored, because the available space is sublinear in $n$. The space is counted as the number of $\bigo(\log n)$-bit machine words. An algorithm is \emph{real-time} if the number of operations between two reads is bounded by a constant. An approximation algorithm for a maximization problem has \emph{additive error} $E$ (resp., \emph{multiplicative error} $\varepsilon$) if it finds a solution with the cost at least $OPT-E$ (resp., $\frac{OPT}{1+\varepsilon}$), where $OPT$ is the cost of optimal solution; here both $E$ and $\varepsilon$ can be functions of the size of the input. In the $\LPSP(S)$ problem, $OPT=L(S)$.

A \emph{Las Vegas algorithm} always returns a correct answer, but its working time and memory usage on the inputs of length $n$ are random variables. A \emph{Monte Carlo algorithm} gives a correct answer with high probability (greater than $1-1/n$)  and has deterministic working time and space.

\section{Lower Bounds}
\label{section:lowerbounds}
In this section we use Yao's minimax principle~\cite{Yao77} to prove
lower bounds on the space complexity of the $\LPSP$ problem in the streaming model, where the length $n$ and the alphabet $\Sigma$ of the input stream are specified. We denote this problem by \palin.

\begin{theorem}[Yao's minimax principle for randomized algorithms]
\label{yao}
Let $\mathcal{X}$ be the set of inputs for a problem and $\mathcal{A}$ be the set of all deterministic algorithms solving it. Then,
for any $x \in \mathcal{X}$ and $A \in \mathcal{A}$, the cost of running $A$ on $x$ is denoted by $c(a,x) \ge 0$.

Let $p$ be the probability distribution over $\mathcal{A}$, and let $A$ be an algorithm chosen at random according to $p$. Let $q$ be
the probability distribution over $\mathcal{X}$, and let $X$ be an input chosen at random according to $q$. Then
$\max_{x \in \mathcal{X}} \mathbf{E}[c(A,x)] \ge \min_{a \in \mathcal{A}} \mathbf{E}[c(a,X)].$
\end{theorem}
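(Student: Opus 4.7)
The plan is to chain two elementary inequalities. First, the maximum of a function dominates any of its averages, and second, an average dominates a minimum; between these two steps I would invoke Fubini's theorem (which here is trivial, since $A$ and $X$ are drawn independently) to swap the order in which the expectations over $A$ and over $X$ are taken. In essence the proof is the weak-duality direction of von Neumann's minimax theorem, so no substantial machinery is needed.

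First I would fix any distribution $q$ over $\mathcal{X}$ and apply ``max $\ge$ average'' to bound the left-hand side:
\[
\max_{x \in \mathcal{X}} \mathbf{E}[c(A,x)] \;\ge\; \mathbf{E}_{X \sim q}\!\big[\,\mathbf{E}[c(A,X)]\,\big].
\]
Since $A \sim p$ and $X \sim q$ are independent, I would then swap the order of expectation (Fubini) and rewrite the right-hand side as $\mathbf{E}_{A \sim p}\!\big[\,\mathbf{E}_{X \sim q}[c(A,X)]\,\big]$.

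Second, I would apply ``average $\ge$ min'' to the outer expectation over $A$, using that for every realization $a \in \mathcal{A}$ one has $\mathbf{E}_{X \sim q}[c(a,X)] \ge \min_{a' \in \mathcal{A}} \mathbf{E}_{X \sim q}[c(a',X)]$:
\[
\mathbf{E}_{A \sim p}\!\big[\,\mathbf{E}_{X \sim q}[c(A,X)]\,\big] \;\ge\; \min_{a \in \mathcal{A}} \mathbf{E}_{X \sim q}[c(a,X)].
\]
Chaining the two bounds yields the stated inequality, and the choice of $q$ is free so the bound is nontrivial.

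The only potential obstacle I anticipate is a measurability/Fubini concern when $\mathcal{A}$ or $\mathcal{X}$ is uncountable, but in the streaming setting of this paper both sets are effectively finite (bounded-space deterministic algorithms on finite streams over a finite alphabet, with costs $c \ge 0$), so the swap of expectations is immediate and there is no real technical difficulty. Consequently I expect the proof to be essentially a two-line chain of inequalities, and the ``hard part'' is merely the correct bookkeeping of which variable is random at which stage.
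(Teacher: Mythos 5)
Your proof is correct, and it is the standard two-line argument: for any fixed $q$, $\max_{x}\mathbf{E}_{A\sim p}[c(A,x)]\ge\mathbf{E}_{X\sim q}\mathbf{E}_{A\sim p}[c(A,X)]=\mathbf{E}_{A\sim p}\mathbf{E}_{X\sim q}[c(A,X)]\ge\min_{a}\mathbf{E}_{X\sim q}[c(a,X)]$, with the middle equality by independence (Fubini, which is trivial here because both sets are finite). Note, however, that the paper does not prove this statement at all; it is Yao's minimax principle, stated as background and attributed to Yao's 1977 paper, so there is no ``paper's proof'' to compare against. Your self-contained derivation of the weak-duality direction is exactly what a textbook would give, and your remark that the finiteness of the relevant algorithm and input spaces removes any measurability concern is the right observation. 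One small presentational point: the paper's statement mixes $c(a,x)$ and $c(A,x)$ somewhat loosely and says ``the probability distribution'' where it means ``a probability distribution'' (the inequality holds for all choices of $p$ and $q$); you interpreted this correctly, and it is worth keeping in mind that the strength of the principle in applications comes precisely from being free to pick $q$ adversarially, which you also note.
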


We use the above theorem for both Las Vegas and Monte Carlo algorithms. For Las Vegas algorithms, we consider only correct algorithms, and $c(x,a)$ is the
memory usage.
For Monte Carlo algorithms, we consider all algorithms (not necessarily correct) with memory usage not exceeding a certain threshold,
and $c(x,a)$ is the correctness indicator function, i.e., $c(x,a)=0$ if the algorithm is correct and $c(x,a)=1$ otherwise.

Our proofs will be based on appropriately chosen padding. The padding requires a constant number
of fresh characters.
If $\Sigma$ is twice as large as the number of required fresh characters, we can still use half of
it to construct a difficult input instance, which does not affect the asymptotics. Otherwise,
we construct a difficult input instance over $\Sigma$, then add enough new fresh characters
to facilitate the padding, and finally reduce the resulting larger alphabet to binary at the expense
of increasing the size of the input by a constant factor.

\begin{lemma}
\label{alphabetreduction}
For any alphabet $\Sigma=\{1,2,\ldots,\sigma\}$ there exists a morphism $h : \Sigma^* \rightarrow \{0,1\}^*$ such that, for any $c\in\Sigma$, $|h(c)| = 2\sigma+6$ and, for any string $w$, $w$ contains a palindrome
of length $\ell$ if and only if $h(w)$ contains a palindrome of length $(2\sigma+6)\cdot \ell$.
\end{lemma}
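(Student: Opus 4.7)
The plan is to give an explicit morphism where each letter maps to a palindromic block and then argue that any palindrome in the image must respect the block structure up to at most a half-shift. A natural choice is
\[
h(c) \;=\; 1 \cdot 0^{c} \cdot 1 \cdot 0^{2\sigma+2-2c} \cdot 1 \cdot 0^{c} \cdot 1
\]
for $c\in\{1,\dots,\sigma\}$: each $h(c)$ has length exactly $2\sigma+6$, is itself a palindrome, and the map is injective because $c$ is recoverable from the positions of the inner $1$'s.

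For the forward direction, if $w[i..j]$ is a palindromic factor of length $\ell$ then $w[i+k]=w[j-k]$, so the block sequence $h(w[i]),\dots,h(w[j])$ is symmetric, and since each block is itself a palindrome the concatenation $h(w[i..j])$ is a palindromic factor of $h(w)$ of length $(2\sigma+6)\ell$. For the reverse direction, suppose $P=h(w)[p..p+(2\sigma+6)\ell-1]$ is a palindrome and let $r=(p-1)\bmod(2\sigma+6)$ denote its block-offset. The first key step is to observe that within any single block the four $1$'s are separated by at least one $0$ (because $1\le c\le\sigma$), so the pattern $\mathtt{11}$ appears in $h(w)$ \emph{exactly} at the positions $((2\sigma+6)i,\,(2\sigma+6)i+1)$ between consecutive blocks. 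The mirror-symmetry of $P$ then forces the $\mathtt{11}$-occurrences inside $P$ to pair up, which after a short arithmetic calculation gives $2r\equiv 0\pmod{2\sigma+6}$ and hence $r\in\{0,\sigma+3\}$. If $r=0$ then $P=h(w[i..i+\ell-1])$ and palindromicity of $P$ together with injectivity of $h$ immediately yields a palindrome of length $\ell$ in $w$. If $r=\sigma+3$ then $P$ decomposes as $v\cdot Q\cdot u$, where $v,u$ are the corresponding halves of the two partially covered end-blocks and $Q=h(w[i+1..i+\ell-1])$ is block-aligned; the palindrome condition forces both $u=v^{R}$ (which gives $w[i]=w[i+\ell]$) and $Q$ to be a palindrome, so that $w[i..i+\ell]$ is a palindrome of length $\ell+1$ in $w$, and in particular $w$ contains a palindrome of length at least $\ell$.

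The main obstacle is the alignment analysis. Establishing that $\mathtt{11}$ is a faithful block-boundary marker is what pins the encoding down, and translating the mirror-symmetry of $\mathtt{11}$-occurrences into the constraint $r\in\{0,\sigma+3\}$ is the crucial algebraic step. The shifted case $r=\sigma+3$ is the genuine subtlety, since such a $P$ produces a palindrome of length $\ell+1$ in $w$ rather than one of length exactly $\ell$; the statement is therefore most cleanly read as the equivalent claim $L(h(w))=(2\sigma+6)\,L(w)$, which is the form invoked by the subsequent lower-bound reductions.
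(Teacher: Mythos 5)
Your morphism is a valid dual of the paper's (you mark block boundaries with $\mathtt{11}$ where the paper marks block centres with $\mathtt{00}$), and your alignment analysis forcing the offset $r$ of any palindrome of length $(2\sigma+6)\ell$, $\ell\ge 2$, into $\{0,\sigma+3\}$ is correct. You also rightly flag the subtlety that trips up both your proof and the paper's: in the half-shifted case $r=\sigma+3$ the palindrome of $h(w)$ witnesses a palindrome of length $\ell+1$ in $w$, not one of length $\ell$, so the lemma's ``iff'' is false as literally stated. Concretely, for $\Sigma=\{1,2\}$ and $w=121$, the factor $h(w)[\sigma+4\,..\,5\sigma+15]$ is a palindrome of length $2(2\sigma+6)$, yet $w$ has no palindrome of length~$2$. (The paper's terse case analysis implicitly assumes that a palindrome centred inside a block corresponds to odd $\ell$, which is unjustified; this is the same gap.)

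The repair you propose, $L(h(w))=(2\sigma+6)L(w)$, is however also false, so the gap is not closed. A maximal palindrome of $h(w)$ centred at the middle of a block $h(w[m])$ leaks into the two neighbouring blocks by the length of their longest common prefix, which can be positive even when $w[m-1]\ne w[m+1]$. For $\Sigma=\{1,2,3\}$ and $w=123$, the palindrome centred at the middle of $h(2)$ picks up the common prefix $\mathtt{10}$ of $h(1)$ and $h(3)$ and has length $(2\sigma+6)+4=16$, while $(2\sigma+6)L(w)=12$. What your argument actually establishes, and what an alphabet reduction needs, is the monotone statement: $w$ has a palindrome of length $\ge\ell$ iff $h(w)$ has a palindrome of length $\ge(2\sigma+6)\ell$, equivalently $(2\sigma+6)L(w)\le L(h(w))<(2\sigma+6)\bigl(L(w)+1\bigr)$. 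Your $r=0$ case gives a palindrome of length $\ell$, your $r=\sigma+3$ case gives one of length $\ell+1\ge\ell$, and the overhang of any palindrome beyond a whole number of blocks on each side is always strictly less than one block; together these prove the monotone version, and it is this, not the exact multiplicative equality, that should replace the lemma.
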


\begin{proof}
We set:
\[h(c) = 1 1^s 0 1^{s-c} 1 00 1 1^{s-c} 0 1^c 1.\]
Clearly $|h(c)|=2\sigma+6$ and, because every $h(c)$ is a palindrome, if $w$ contains a palindrome
of length $\ell$ then $h(w)$ contains a palindrome of length $(2\sigma+6)\cdot\ell$. Now assume that
$h(w)$ contains a palindrome of length $(2\sigma+6)\cdot\ell$, where $\ell \geq 1$. 
If $\ell=1$ then we obtain that $w$ should contain a palindrome of length $1$, which always holds.
Otherwise, the palindrome contains $00$ inside and we consider two cases.
\begin{enumerate}
\item The palindrome is centered inside $00$. Then it corresponds to an odd palindrome of length
$\ell$ in $w$.
\item The palindrome maps some $00$ to another $00$. Then it corresponds to an even palindrome
of length $\ell$ in $w$.
\end{enumerate}
In either case, the claim holds.
\end{proof}

For the padding we will often use an infinite string $\nu = 0^11^10^21^20^31^3\ldots$, or more precisely
its prefixes of length $d$, denoted $\nu(d)$. Here $0$ and $1$ should be understood as two characters
not belonging to the original alphabet. The longest
palindrome in $\nu(d)$ has length $\bigo(\sqrt{d})$. 

\begin{theorem}[Las Vegas approximation]
\label{th:lv}
Let $\alg$ be a Las Vegas streaming algorithms solving \palin
with additive error $\aerr \le 0.99 n$ or multiplicative error $(1+\varepsilon) \le 100$
using $s(n)$ bits of memory.
Then $\mathbb{E}[s(n)]=\Omega(n \log |\Sigma|)$.
\end{theorem}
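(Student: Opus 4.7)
The plan is to invoke Yao's principle (Theorem~\ref{yao}) with the uniform distribution $q$ on $\Sigma^n$ and to exhibit, for any deterministic correct Las~Vegas algorithm $a$, a memory bottleneck at time $m:=\lfloor n/2\rfloor$. For such an $a$, the internal state $Y(w)$ reached after reading a prefix $w\in\Sigma^m$ is a deterministic function of $w$, and it suffices to show that $H(Y)=\Omega(n\log|\Sigma|)$; combining Shannon's inequality $\mathbb{E}[|Y|]\ge H(Y)-O(1)$ (under a prefix-free encoding of states) with the trivial bound $s(a,X)\ge |Y(X[1..m])|$ then gives the theorem.

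Distinguishability will come from correctness: if $Y(w_1)=Y(w_2)$, then for every continuation $u\in\Sigma^{n-m}$ the algorithm's output on $w_1u$ equals its output on $w_2u$, and this common value must be a valid approximation of both $L(w_1u)$ and $L(w_2u)$. In the additive case this forces $|L(w_1u)-L(w_2u)|\le E\le 0.99n$; in the multiplicative case the two $L$-values differ by a factor of at most $1+\varepsilon\le 100$. The continuation I will plug in is $u=w_1^R$: then $w_1\cdot w_1^R$ is a palindrome of length $n$, so $L(w_1u)=n$, while any palindrome in $w_2\cdot w_1^R$ either lies entirely inside one of the halves or is centered at the boundary between them; in the latter case it has length exactly $2k$, where $k$ is the length of the longest common suffix of $w_1$ and $w_2$. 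Hence $L(w_2u)\le 2k+L(w_1)+L(w_2)$.

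Call $w\in\Sigma^m$ \emph{typical} if $L(w)=O(\log n)$; a standard first-moment calculation shows this fails with probability at most $1/n$. For typical $w_1,w_2$ lying in the same $Y$-equivalence class, the distinguishability inequality applied to $u=w_1^R$ forces $n-2k-O(\log n)\le E\le 0.99n$ in the additive case, and $n\le 100\cdot(2k+O(\log n))$ in the multiplicative case; either way $k\ge cn$ for an absolute constant $c>0$. Since any two typical members of a single $Y$-class share a suffix of length $\ge\lceil cn\rceil$, transitivity implies all typical members of that class agree on one common suffix of that length; hence each class contains at most $|\Sigma|^{m-cn}$ typical prefixes. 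Thus $Y$ takes at least $(1-o(1))|\Sigma|^{cn}=|\Sigma|^{\Omega(n)}$ distinct values on typical inputs, and $H(Y)=\Omega(n\log|\Sigma|)$.

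The main obstacle I anticipate is tightening the information-theoretic conversion so that the gap between the combinatorial statement (every equivalence class is small on typical inputs) and the quantitative entropy bound $H(Y)=\Omega(n\log|\Sigma|)$ goes through cleanly: this amounts to verifying that restricting to typical inputs costs only a lower-order term in the entropy calculation, and that $\mathbb{E}[|Y|]\ge H(Y)-O(1)$ can be arranged by an appropriate prefix-free encoding of the algorithm's configuration. Once these standard ingredients are in place, Yao's principle immediately lifts the per-algorithm bound $\mathbb{E}_q[s(a,X)]=\Omega(n\log|\Sigma|)$ to the claimed worst-case expected-memory lower bound for the original randomized algorithm $\alg$.
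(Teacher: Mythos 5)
Your overall strategy (Yao's principle plus a fooling-set argument at the midpoint $m=\lfloor n/2\rfloor$) is the same one the paper uses, and the information-theoretic conversion you flag as a worry at the end is in fact routine. The genuine gap is earlier, in the combinatorial step you treat as easy: the claim that ``any palindrome in $w_2\cdot w_1^R$ either lies entirely inside one of the halves or is centered at the boundary between them'' is false, and the bound $L(w_2u)\le 2k+L(w_1)+L(w_2)$ that you derive from it fails dramatically. Take $q$ a palindrome of length $\ell=\Theta(\log n)$ whose last character is, say, $a$; take $s$ a ``generic'' string of length $m-\ell$ whose last character is $b\neq a$; and take $t$ a generic string of length $\ell$. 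Set $w_1=sq$ and $w_2=ts$. Both are typical ($L(w_i)=O(\log n)$), and since $w_1$ ends in $a$ while $w_2$ ends in $b$, their longest common suffix has length $k=0$. But $w_1^R=q\,s^R$, so $w_2w_1^R=t\,s\,q\,s^R$ contains the palindrome $s\,q\,s^R$ of length $2|s|+|q|=n-\ell$, which spans the boundary without being centered there (its center sits inside $q$, at distance $\Theta(\log n)$ from the midpoint). Thus $L(w_2w_1^R)=n-\Theta(\log n)$ while your claimed upper bound is $O(\log n)$, a gap of order $n/\log n$. In particular $|L(w_1u)-L(w_2u)|=O(\log n)\le E$, so the distinguishability step yields nothing, and the ensuing chain (long shared suffix for all typical class members, transitivity, small classes, entropy lower bound) collapses.

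The paper avoids this by not testing the algorithm on bare midpoint splits: it feeds inputs of the form $\nu(\tfrac{E}{2})\,x\,\$\$\,y\,\nu(\tfrac{E}{2})^R$, where $\$$ is a fresh symbol not in $\Sigma$ and $\nu$ is a palindrome-poor padding string. The two $\$$'s appear nowhere else, so every palindrome that contains a $\$$ must map one $\$$ to the other and hence be centered exactly between them; every palindrome avoiding the $\$$'s has length at most $\max\{O(\sqrt{E}),\,n'\}<2n'+2$. Meanwhile the $\nu$-padding inflates the total length to $n=2n'+E+2$, so that outputting $\ge n-E=2n'+2$ forces the answer to be the middle palindrome, which in turn forces $x=y$ outright. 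Without a fresh separator to pin down the center and without padding to absorb the additive error, the structural control you need on $L(w_2w_1^R)$ is simply not available, which is exactly why the paper spends its first paragraphs constructing the morphism $h$ (to manufacture fresh characters when the alphabet is tight) and the string $\nu$.
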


\begin{proof}
By Theorem~\ref{yao}, it is enough to construct a probability distribution $\mathcal{P}$ over $\Sigma^n$ such that
for any deterministic algorithm $\dalg$, its expected memory usage on a string chosen according to $\mathcal{P}$ is
$\Omega(n \log |\Sigma|)$ in bits.

Consider solving \palin with additive error $\aerr$.
We define $\mathcal{P}$ as the uniform distribution over $\nu(\frac{\aerr}{2}) x \$\$ y \nu(\frac{\aerr}{2})^R$, where $x,y \in \Sigma^{n'}$, $n' = \frac{n}{2}-\frac{\aerr}{2}-1$, and $\$$ is a special character not in $\Sigma$.
Let us look at the memory usage of $\dalg$ after having read $\nu(\frac{\aerr}{2}) x$. We say that $x$ is ``good'' when the memory usage 
is at most $\frac{n'}{2}\log |\Sigma|$ and ``bad'' otherwise.
Assume that $\frac{1}{2}|\Sigma|^{n'}$ of all $x$'s are good, then there are two strings
$x \not= x'$ such that the state of $\dalg$ after having read both $\nu(\frac{\aerr}{2}) x$ and $\nu(\frac{\aerr}{2}) x'$ is exactly the same. Hence the behavior of $\dalg$ on
$\nu(\frac{\aerr}{2}) x\$\$ x^R \nu(\frac{\aerr}{2})^R$ and $\nu(\frac{\aerr}{2}) x'\$\$ x^R \nu(\frac{\aerr}{2})^R$ is exactly the same. The former is a palindrome of length $n = 2n'+\aerr+2$, so $\dalg$ 
must answer at least  $2n'+2$, and consequently the latter also must contain a palindrome of length at least $2n'+2$.
A palindrome inside $\nu(\frac{\aerr}{2}) x'\$\$ x^R \nu(\frac{\aerr}{2})^R$ is either fully contained within
$\nu(\frac{\aerr}{2})$, $x'$, $x^R$ or it is a middle palindrome.
But the longest palindrome inside $\nu(\frac{\aerr}{2})$ is of length $\bigo(\sqrt{\aerr})<2n'+2$ (for $n$ large enough)
and the longest palindrome inside $x$ or $x^R$ is of length $n'<2n'+2$, so since we have exluced other possibilities,
$\nu(\frac{\aerr}{2}) x'\$\$ x^R \nu(\frac{\aerr}{2})^R$
contains a middle palindrome of length $2n'+2$. This implies that $x=x'$, which is a contradiction.
Therefore, at least $\frac{1}{2}|\Sigma|^{n'}$ of all $x$'s are bad. But then the expected memory usage of $\dalg$ is at least
$\frac{n'}{4}\log |\Sigma|$, which for $\aerr\le0.99 n$ is $\Omega(n \log |\Sigma|)$ as claimed.

Now consider solving \palin with multiplicative error $(1+\varepsilon)$. 
An algorithm with multiplicative error $(1+\varepsilon)$ can also be considered as having additive error $\aerr=n \cdot \frac{\varepsilon}{1+\varepsilon}$, so if the expected memory usage of
such an algorithm is $o(n\log |\Sigma|)$ and $(1+\varepsilon) \le 100$ then we obtain
an algorithm with additive error $\aerr \le 0.99n$ and expected memory usage $o(n\log |\Sigma|)$, which we already know to be impossible.
\end{proof}

Now we move to Monte Carlo algorithms. We first consider exact algorithms solving \palin;
lower bounds on approximation algorithms will be then obtained by padding the input
appropriately. We introduce an auxiliary problem \midpalin, which is to compute the length of the middle palindrome in a string of even length $n$ over an alphabet $\Sigma$. 

\begin{lemma}
\label{lowerbound:exact}
There exists a constant  $\gamma$ such that any randomized Monte Carlo streaming algorithm
$\alg$ solving \midpalin or \palin exactly with probability $1-\frac{1}{n}$
uses at least $\gamma \cdot n \log\min \{|\Sigma|, n\}$ bits of memory.
\end{lemma}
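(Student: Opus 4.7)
My plan is to reduce from the well-studied one-way communication complexity of \emph{augmented indexing}. In that problem, Alice holds a string $x \in \Sigma'^m$, Bob holds an index $i \in [m]$ together with the suffix $x_{i+1} \cdots x_m$, and Bob must output $x_i$; the one-way randomized complexity of this task with constant error is $\Omega(m \log |\Sigma'|)$ bits.

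The reduction relies on the identity that the middle palindrome of $x \,\$\$\, y$ (with $|x|=|y|=m$ and $\$ \notin \Sigma'$) has length exactly $2 + 2 \cdot \mathrm{lcp}(x^R, y)$. Given a streaming algorithm $\alg$ for \midpalin using $s(n)$ bits with error $\leq 1/n$, Alice runs $\alg$ on $x \,\$\$$ and transmits the resulting $s(n)$-bit state to Bob. Bob, for each candidate $c \in \Sigma'$, copies the state and completes the simulation on $y_c := x_m x_{m-1} \cdots x_{i+1}\, c\, \#^{i-1}$, where $\# \notin \Sigma'$ is a fresh padding symbol that cannot accidentally extend the lcp. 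By the identity, $\alg$'s answer equals $2(m-i)+2$ when $c \neq x_i$ and $2(m-i)+4$ when $c = x_i$, so Bob outputs the unique $c$ that wins. A union bound over the $|\Sigma'|$ simulations gives total failure probability at most $|\Sigma'|/n$; provided $|\Sigma'| \leq n/3$, this lies in the constant-error regime of augmented indexing. Taking $m = \Theta(n)$ and $|\Sigma'| = \min\{|\Sigma|, n\} - O(1)$ yields $s(n) = \Omega(n \log \min\{|\Sigma|, n\})$, and Lemma~\ref{alphabetreduction} handles the case when $|\Sigma|$ is too small to spare the reserved symbols.

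For \palin I would sandwich the midLPS instance between $\nu$-blocks drawn from yet another disjoint alphabet: the input becomes $\nu(D)\, x\, \$\$\, y_c\, \nu(D)^R$, with the $\nu$-alphabet disjoint from $\Sigma' \cup \{\$, \#\}$ and $D$ chosen so that every palindrome internal to a $\nu$-block has length $O(\sqrt{D})$ well below $m$. Disjointness of alphabets guarantees that no palindrome can span a $\nu \leftrightarrow$ central-block boundary. Restricting the augmented-indexing queries to indices $i \leq m/2$ ensures that the middle palindrome, of length at least $m+2$, strictly exceeds any palindrome internal to $x$ or $y_c$; the LPS value then equals the middle palindrome value and the same protocol recovers $x_i$. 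Since $\Theta(m)$ queryable indices still force $\Omega(m \log |\Sigma'|)$ of communication, we retrieve the desired $\Omega(n \log \min\{|\Sigma|, n\})$ bound.

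The hard part will be the \palin bookkeeping: ruling out spurious palindromes at the $\nu \leftrightarrow x$ and $y_c \leftrightarrow \nu^R$ interfaces, as well as inside the sandwiched block, requires careful alphabet discipline, and $D$ must be tuned so that the total length $|\nu(D)\, x\, \$\$\, y_c\, \nu(D)^R|$ equals~$n$ while leaving $m = \Theta(n)$ for the central word. Once these details are in place, the reduction to augmented indexing goes through verbatim, and the $1/n$-error Monte Carlo model is absorbed without loss by the union bound over the $|\Sigma'|$ simulations.
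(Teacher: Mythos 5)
Your proposal is correct but takes a genuinely different route. The paper proves the lemma directly from Yao's minimax principle: it constructs an explicit hard distribution over strings $w(x,k,c)$, argues via a state-collision (pairing) argument that any deterministic algorithm with fewer than $\lfloor \frac{n}{2}\log|\Sigma| \rfloor$ bits must err with probability at least $\frac{1}{n|\Sigma|}$, and then boosts this weak bound to the stated one via a majority-vote amplification trick (running $\Theta(1/\kappa)$ parallel copies to reconcile the $\frac{1}{n|\Sigma|}$ error with the target $\frac{1}{n}$). Your route instead reduces from the one-way public-coin communication complexity of augmented indexing over alphabet $\Sigma'$, using the identity between the middle-palindrome length of $x\,\$\$\,y$ and $2+2\cdot\mathrm{lcp}(x^R,y)$, plus Bob trying all $|\Sigma'|$ continuations $y_c$. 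This cleanly absorbs the $1/n$ Monte-Carlo error via a union bound, avoiding the amplification step entirely; the price is that you import the $\Omega(m\log|\Sigma'|)$ augmented-indexing bound as a black box, whereas the paper's argument is self-contained. Your treatment of \palin is also simpler than the paper's padding-by-zeroes reduction: once you restrict queries to $i\le m/2$, the middle palindrome has length $\ge m+2$ and strictly dominates any palindrome internal to $x$ or $y_c$, so the $\nu$-blocks you propose are actually unnecessary (you can take $D=0$ and $m=(n-2)/2$; the two $\$$ symbols already prevent any non-middle palindrome from crossing the center).

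Two small points to tighten: first, you write $|\Sigma'|=\min\{|\Sigma|,n\}-O(1)$, but that can exceed $n/3$ and break the union bound; you should cap it at, say, $\min\{|\Sigma|,n/3\}-2$, which still gives $\log|\Sigma'|=\Theta(\log\min\{|\Sigma|,n\})$. Second, when $|\Sigma|$ is too small to spare the two reserved symbols, your appeal to Lemma~\ref{alphabetreduction} is the right move, but note that it inflates the stream length by a factor $2\sigma+6$; since $\sigma$ is then $O(1)$, this only changes constants, but it needs to be said so the $m=\Theta(n)$ claim survives.
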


\begin{proof}
First we prove that if $\alg$ is a Monte Carlo streaming algorithm solving \midpalin exactly using less than $\lfloor\frac{n}{2} \log |\Sigma| \rfloor$ bits
of memory, then its error probability is at least $\frac{1}{n|\Sigma|}$.

By Theorem~\ref{yao}, it is enough to construct probability distribution $\mathcal{P}$ over $\Sigma^n$ such that for any deterministic
algorithm $\dalg$ using less than $\lfloor\frac{n}{2} \log |\Sigma| \rfloor$ bits of memory, the expected probability of error on a string chosen
according to $\mathcal{P}$ is at least $\frac{1}{n|\Sigma|}$.

Let $n' = \frac{n}{2}$. For any $x\in\Sigma^{n'}$, $k\in\{1,2,\ldots,n'\}$ and $c\in\Sigma$ we define
\[w(x,k,c) = x[1] x[2] x[3]  \ldots x[n'] x[n'] x[n'-1] x[n'-2] \ldots  x[k+1]  c  0^{k-1}.\]
Now $\mathcal{P}$ is the uniform distribution over all such $w(x,k,c)$.

Choose an arbitrary maximal matching of strings from $\Sigma^{n'}$ into pairs $(x,x')$ such that $\dalg$ is in the same state after reading either $x$ or $x'$. At most one string per state of $\dalg$ is left unpaired, that is at most $2^{\lfloor\frac{n}{2} \log |\Sigma| \rfloor-1}$ strings in total. Since there are $|\Sigma|^{n'}=2^{n' \log |\Sigma|} \ge 2\cdot 2^{\lfloor\frac{n}{2} \log |\Sigma| \rfloor-1}$ possible strings of length $n'$, at least half of the strings are paired.
Let $s$ be longest common suffix of $x$ and $x'$, so $x = v c s$ and $x' = v' c' s$, where $c \not= c'$ are single characters.  Then
$\dalg$ returns the same answer on $w(x,n'-|s|,c)$ and $w(x',n'-|s|,c)$, even though the length of the middle palindrome is exactly $2|s|$ in one of
them, and at least $2|s|+2$ in the other one. Therefore, $\dalg$ errs on at least one of these two inputs. Similarly, it errs on either
$w(x,n'-|s|,c')$ or $w(x,n'-|s|,c')$. Thus the error probability is at least $\frac{1}{2n'|\Sigma|} = \frac1{n|\Sigma|}$. 

Now we can prove the lemma for \midpalin with a standard amplification trick.
Say that we have a~Monte Carlo streaming algorithm, which solves \midpalin exactly
with error probability $\varepsilon$ using $s(n)$ bits of memory. Then we can run its $k$ instances simultaneously and return
the most frequently reported answer. The new algorithm needs $\bigo(k\cdot s(n))$ bits of memory and its error probability $\varepsilon_{k}$ satisfies:
\[\varepsilon_k \le \sum_{2i < k} \binom{k}{i}(1-\varepsilon)^i \varepsilon^{k-i} \le 2^k \cdot \varepsilon^{k/2} = (4 \varepsilon)^{k/2}.\]
Let us choose $\kappa = \frac{1}{6}\frac{\log(4/n)}{\log(1/(n|\Sigma|))} = \frac16 \frac{1-o(1)}{1+\log|\Sigma|/\log n} = \Theta(\frac{\log n}{\log n + \log |\Sigma|}) = \gamma \cdot \frac{1}{\log |\Sigma|} \log \min \{|\Sigma|, n\}$, for some constant $\gamma$.
Now we can prove the theorem. Assume that $\alg$ uses less than $\kappa \cdot n\log |\Sigma| = \gamma \cdot n \log \min  \{|\Sigma|, n\}$ bits of memory. Then running $\left\lfloor \frac{1}{2\kappa} \right\rfloor \ge \frac34\frac1{2\kappa}$ (which holds since $\kappa < \frac16$)
instances of $\alg$ in parallel requires less than $\lfloor \frac{n}{2} \log |\Sigma| \rfloor$ bits of memory. But then 
the error probability of the new algorithm is bounded from above by
\[\left(\frac{4}{n}\right)^{\frac{3}{16\kappa}} = \left(\frac{1}{n|\Sigma|}\right)^{\frac{18}{16}} \le \frac{1}{n|\Sigma|}\]
which we have already shown to be impossible.

The lower bound for \midpalin can be translated into a lower bound for solving \palin exactly
by padding the input so that the longest palindrome is centered in the middle.
Let $x=x[1]x[2]\ldots x[n]$ be the input for \midpalin. We define
\[w(x)= x[1] x[2] x[3]  \ldots x[n/2]  1 \underbracket{000 \ldots 0}_{n} 1 x[n/2+1] \ldots x[n].\]
Now if the length of the middle palindrome in $x$ is $k$, then $w(x)$ contains a palindrome of length
at least $n+k+2$. In the other direction, any palindrome inside $w(x)$ of length $\ge n$ must be centered somewhere
in the middle block consisting of only zeroes and both ones are mapped to each other, so it must be
the middle palindrome.
Thus, the length of the longest palindrome inside $w(x)$ is exactly $n+k+2$, so we have reduced solving \midpalin to solving
\palin[2n+2]. We already know that solving \midpalin[n] with probability $1-\frac{1}{n}$
requires $\gamma \cdot n \log \min \{|\Sigma|, n\}$ bits of memory,
so solving \palin[2n+2] with probability $1-\frac{1}{2n+2}\geq 1-\frac{1}{n}$ requires $\gamma \cdot n \log \{|\Sigma|,n\} \geq \gamma' \cdot (2n+2) \log \min \{|\Sigma|, 2n+2\}$ bits of memory.
Notice that the reduction needs $\bigo(\log n)$ additional bits of memory to count up to $n$, but for large $n$ this is
much smaller than the lower bound if we choose $\gamma' < \frac{\gamma}{4}$.
\end{proof}

To obtain a lower bound for Monte Carlo additive approximation, we observe that any algorithm
solving \palin with additive error $\aerr$ can be used to solve \palin[\frac{n-\aerr}{\aerr+1}] exactly
by inserting $\frac{\aerr}{2}$ zeroes between every two characters, in the very beginning, and in the very end.
However, this reduction requires $\log (\frac{\aerr}{2})\leq \log n$ additional bits of memory for counting up to $\frac{\aerr}{2}$
and cannot be used when the desired lower bound on the required number of bits
$\Omega(\frac{n}{\aerr}\log\min(|\Sigma|,\frac{n}{E})$ is significantly smaller than $\log n$. 
Therefore, we need a separate technical lemma which implies that both additive and multiplicative
approximation with error probability $\frac{1}{n}$ require $\Omega(\log n)$ bits of space.

\begin{lemma}
\label{hashing_lb}
Let $\alg$ be any randomized Monte Carlo streaming algorithm solving \palin with additive error at most $0.99 n$ or multiplicative error at most
$n^{0.49}$  and error probability $\frac{1}{n}$.
Then $\alg$ uses $\Omega(\log n)$ bits of memory.
\end{lemma}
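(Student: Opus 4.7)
The plan is to apply Yao's minimax principle (Theorem~\ref{yao}) to a hard distribution inspired by one-way randomized equality testing. Consider the distribution $\mathcal{P}$ on binary strings of length $n$: with probability $\tfrac{1}{2}$, draw $x \in \{0,1\}^{n/2}$ uniformly and output the palindrome $S = x \cdot x^R$ (so $L(S) = n$); with probability $\tfrac{1}{2}$, draw $x, y \in \{0,1\}^{n/2}$ independently and uniformly and output $S = xy$. A union bound over the $\bigo(n)$ starting positions shows that a uniformly random binary string of length $n$ has $L(S) \leq c\log n$ except with probability $o(1/n)$. Choose a threshold $T$ separating the two cases: $T := 0.01 n$ in the additive regime, and $T := \tfrac12 n^{0.51}$ in the multiplicative regime (using $n/(1+n^{0.49}) \geq \tfrac12 n^{0.51}$). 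A correct approximation algorithm must then, with probability $1 - o(1/n)$, output a value $\geq T$ in case 1 and $< T$ in case 2.

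Next, fix a deterministic algorithm $\dalg$ using $s$ bits, let $M(x)$ denote its memory state after reading a length-$n/2$ prefix $x$, and partition $\{0,1\}^{n/2}$ into classes $A_M := M^{-1}(M)$ of relative size $p_M$. For each state $M$, let $Y_M$ be the set of length-$n/2$ second halves $y$ for which $\dalg$ (continuing from state $M$) eventually outputs a value $\geq T$, and set $q_M := |Y_M|/2^{n/2}$. Denote by $\alpha_1$ and $\alpha_2$ the conditional error rates of $\dalg$ in cases~1 and~2. Case-1 correctness for $x \in A_M$ forces $x^R \in Y_M$, so $q_M \geq p_M - \epsilon_M$ with $\sum_M \epsilon_M = \alpha_1$. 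Since the algorithm's output must be the length of an actual palindrome in $S$, any output $\geq T$ while $L(xy) < T$ is an error; hence $\alpha_2$ is at least $\sum_M p_M q_M - o(1/n)$, the $o(1/n)$ absorbing the rare event that a random $xy$ has $L(xy)\geq T$.

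Putting these together and using $\sum_M p_M \epsilon_M \leq \sum_M \epsilon_M = \alpha_1$, we get
\[
\alpha_1 + \alpha_2 \;\geq\; \sum_M p_M^2 - o(1/n) \;\geq\; 2^{-s} - o(1/n)
\]
by Cauchy--Schwarz across at most $2^s$ states. The Monte Carlo guarantee forces $\tfrac12(\alpha_1+\alpha_2) \leq 1/n$ under $\mathcal{P}$, so $2^{-s} = \bigo(1/n)$ and hence $s = \Omega(\log n)$.

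The only technical subtlety is the probabilistic bound on longest palindromic factors of uniformly random strings, which is a routine union bound over starting positions; the rest is the standard ``compressing-set'' argument underlying the $\Omega(\log n)$ one-way randomized communication complexity of equality.
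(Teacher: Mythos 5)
Your proof is correct, and it takes a genuinely different route from the paper's. The paper confines the ``informative'' symbols to a window of length $n'=s(n)+1$ bracketed by the low-palindromicity padding $\nu(\frac{n}{2}-n')^R\cdots\nu(\frac{n}{2}-n')$, selects an arbitrary family of $2\cdot 2^{s(n)}$ prefixes, and applies a pigeonhole/pairing argument: at least $|S|/2$ prefixes collide in state with a partner, so $\dalg$ cannot tell $w(x,x')$ from $w(x,x)$, giving error $\geq \frac{1}{4\cdot2^{s(n)}}$ directly. You instead dispense with padding, use full half-length prefixes, and mix two sub-distributions (a guaranteed palindrome $xx^R$ versus a uniform random $xy$); the bound then comes from a Cauchy--Schwarz (power-mean) estimate $\sum_M p_M^2\geq 2^{-s}$ over memory classes, combined with the high-probability $O(\log n)$ bound on palindromes of random strings (which, incidentally, is also available in the paper as Lemma~\ref{longpal}). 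Your version is essentially the standard one-way randomized communication lower bound for \textsc{Equality} specialized to this setting, which some readers will find more recognizable; the paper's version buys a cleaner error bound with explicit constants and sidesteps the random-string lemma by using the deterministic $\nu$-padding, at the cost of some case analysis of where a long palindrome in $w(x,y)$ can live. Two small points worth making explicit in your write-up: (i) you implicitly use that the algorithm's output never exceeds $L(S)$ (it must report an actual palindrome), which is what makes an output $\geq T$ in case~2 an error; and (ii) the step $\sum_M p_M\epsilon_M\leq\alpha_1$ uses $p_M\leq 1$ --- trivial, but worth stating since the whole inequality chain depends on it.
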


\begin{proof}
By Theorem~\ref{yao}, it is enough to construct a probability distribution $\mathcal{P}$ over $\Sigma^n$, such that for
any deterministic algorithm $\dalg$ using at most $s(n)=o(\log n)$ bits of memory, the expected probability of error on a string chosen according
to $\mathcal{P}$ is $\frac{1}{2^{s(n)+2}}$.

Let $n' = s(n)+1$.  For any $x,y \in \Sigma^{n'}$, let $w(x,y) = \nu(\frac{n}{2}-n')^R x y^R \nu(\frac{n}{2}-n')$.
Observe that if $x=y$ then $w(x,y)$ contains a palindrome of length $n$, and otherwise the longest palindrome there has
length at most $2n'+O(\sqrt{n}) = O(\sqrt{n})$, thus any algorithm with additive error of at most $0.99 n$ or with a multiplicative error at most $n^{0.49}$
must be able to distinguish between these two cases (for $n$ large enough).

Let $S \subseteq \Sigma^{n'}$ be an arbitrary family of strings of length $n'$ such that $|S| = 2 \cdot 2^{s(n)}$, and let $\mathcal{P}$ be 
the uniform distribution on all strings of the form $w(x,y)$, where $x$ and $y$ are chosen uniformly and independently from $S$.
By a counting argument, we can create at least $\frac{|S|}{4}$ pairs $(x,x')$ of elements
from $S$ such that the state of $\dalg$ is the same after having read $\nu(\frac{n}{2}-n')^Rx$ and $\nu(\frac{n}{2}-n')^Rx'$. 
(If we create the pairs greedily, at most one such $x$ per state of memory can be left unpaired, so at least $|S| - 2^{s(n)} = \frac{|S|}2$ elements are paired.)
Thus, $\dalg$ cannot
distinguish between $w(x,x')$ and $w(x,x)$, and between $w(x',x')$ and $w(x',x)$, so its error probability must be at least
$\frac{|S|/2}{|S|^2} = \frac{1}{4\cdot 2^{s(n)}}$. Thus if $s(n) = o(\log n)$, the error rate is at least $\frac{1}{n}$ for $n$ large enough, a contradiction.
\end{proof}

Combining the reduction with the technical lemma and 
taking into account that we are reducing to a problem with string length of $\Theta(\frac{n}{E})$,
we obtain the following.

\begin{theorem}[Monte Carlo additive approximation]
\label{th:montecarlo_additive_lowerbound}
Let $\alg$ be any randomized Monte Carlo streaming algorithm solving \palin with additive error $\aerr$
with probability $1-\frac{1}{n}$. If $\aerr \le 0.99n$ then $\alg$ uses
$\Omega(\frac{n}{\aerr} \log \min \{|\Sigma|,\frac{n}{\aerr}\} )$ bits of memory.
\end{theorem}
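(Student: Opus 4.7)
The plan is to realize the reduction sketched just before the statement: transform an instance $x$ of exact \palin[n'] with $n'=\Theta(n/\aerr)$ into an instance $w(x)$ of approximate \palin[n] by inserting a block of $z=\aerr$ fresh zero symbols between every two consecutive characters of $x$ and at both ends (using two symbols from $\Sigma$ if it is large enough, otherwise first extending $\Sigma$ and afterwards compressing to binary via Lemma~\ref{alphabetreduction}, as described in the paragraph preceding that lemma). By construction, a palindrome of length $k$ in $x$ extends to a palindrome of length $(z+1)k+z$ in $w(x)$ by absorbing the adjacent zero blocks, while any palindrome of $w(x)$ whose center lies strictly inside one zero block and fails to reach a character on both sides simultaneously has length at most $z$. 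Hence for $L(x)\ge 2$ we get $L(w(x))=(z+1)L(x)+z$ with consecutive values separated by $z+1>\aerr$, so any additive-$\aerr$ estimate for $w(x)$ recovers $L(x)$ exactly by a single division and rounding step.

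The reduction is streaming: while reading $x$ we emit the appropriate zero symbols using a single $O(\log n)$-bit counter and feed $w(x)$ to $\alg$. Any Monte Carlo algorithm $\alg$ for \palin[n] with additive error $\aerr$ and success probability $1-1/n$ therefore yields an exact Monte Carlo algorithm for \palin[n'] with success probability $1-1/n\ge 1-1/n'$ using $s(\alg)+O(\log n)$ bits of memory. Lemma~\ref{lowerbound:exact} then forces
\[
s(\alg)+O(\log n)=\Omega\bigl(n'\log\min\{|\Sigma|,n'\}\bigr)=\Omega\bigl(\tfrac{n}{\aerr}\log\min\{|\Sigma|,\tfrac{n}{\aerr}\}\bigr),
\]
which gives the claimed bound for $\alg$ whenever the right-hand side exceeds a sufficiently large constant multiple of $\log n$. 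In the opposite regime, $\aerr$ is so close to $n$ that $\tfrac{n}{\aerr}\log\min\{|\Sigma|,\tfrac{n}{\aerr}\}=O(\log n)$; here I would invoke Lemma~\ref{hashing_lb} directly, whose precondition $\aerr\le 0.99n$ matches our hypothesis, to obtain the matching $\Omega(\log n)$ bound. Taking the maximum of the two bounds covers all $\aerr\in[1,0.99n]$.

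The main obstacle I anticipate is the combinatorial check underpinning the first paragraph: verifying that no palindrome of $w(x)$ can sneak in extra length beyond $(z+1)L(x)+z$ by straddling several character positions of $x$ together with their intervening zero blocks. This reduces to a short case analysis on the location of the palindrome center (on a character position $(z+1)j$, equidistant from two characters $x[j]$ and $x[j+1]$, or strictly off-center inside a zero block), using the fact that the non-zero symbols of $w(x)$ occupy an arithmetic progression with common difference $z+1$. Once this is settled the rest is a clean deterministic reduction combined with the two lemmas already proven.
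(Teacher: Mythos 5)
Your proposal is correct and follows essentially the same approach as the paper: reduce exact \palin on strings of length $\Theta(n/E)$ to additive-$E$ \palin on length-$n$ strings by inserting blocks of fresh zeros, apply Lemma~\ref{lowerbound:exact}, and fall back on Lemma~\ref{hashing_lb} so the $O(\log n)$ counter overhead of the reduction is absorbed when the target bound itself is only $O(\log n)$. The only cosmetic difference is that you use blocks of $E$ zeros rather than the paper's $E/2$, which makes consecutive values of $L(w(x))$ differ by $E+1>E$ and hence renders the rounding step slightly cleaner.
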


\begin{proof}
Define $\sigma = \min \{|\Sigma|,\frac{n}{\aerr}\}.$

Because of Lemma~\ref{hashing_lb}
it is enough to prove that $\Omega(\frac{n}{\aerr} \log \sigma)$ is a lower bound when 
\begin{equation}\label{foo} \aerr \le \frac{\gamma}2 \cdot \frac{n}{\log n} \log \sigma. \end{equation} 
Assume that there is a Monte Carlo streaming algorithm $\alg$ solving \palin with additive error $\aerr$ using $o(\frac{n}{\aerr}\log \sigma)$ 
bits of memory and probability $1-\frac{1}{n}$. 
Let $n' = \frac{n-\aerr/2}{\aerr/2+1} \ge \frac{n}{\aerr}$ (the last inequality, equivalent to $n \ge  \aerr\cdot \frac{\aerr}{\aerr-2}$ holds because $\aerr \le 0.99n$ and because we can assume that $\aerr \ge 200$).  Given a string $x[1] x[2] \ldots x[n']$, we can simulate running $\alg$
on $0^{\aerr} x[1] 0^{\aerr/2} x[2] 0^{\aerr/2} x[3] \ldots 0^{\aerr/2} x[n'] 0^{\aerr/2}$ to calculate $R$ (using $\log (\aerr/2) \leq \log n$ additional bits of memory), and then return $\left\lfloor \frac{R}{\aerr/2+1}\right\rfloor$.
We call this new Monte Carlo streaming algorithm $\alg'$.
Recall that $\alg$ reports the length of the longest palindrome with additive error $\aerr$. Therefore, if the original string contains
a palindrome of length $r$, the new string contains a palindrome of length $\frac{\aerr}{2}\cdot(r+1)+r$, so $R\geq r(\aerr/2+1)$ and $\alg'$
will return at least $r$. In the other direction, if $\alg'$ returns $r$, then the new string contains a palindrome of length $r(\aerr/2+1)$.
If such palindrome is centered so that $x[i]$ is matched with $x[i+1]$ for some $i$, then
it clearly corresponds to a palindrome of length $r$ in the original string. But otherwise every
$x[i]$ within the palindrome is matched with $0$, so in fact the whole palindrome corresponds to a streak
of consecutive zeroes in the new string and can be extended to the left and to the right to start and end
with $0^{\aerr}$, so again it corresponds to a palindrome of length $r$ in the original string.
Therefore, $\alg'$ solves \palin[n'] exactly with probability
$1-\frac{1}{(n'(\aerr/2+1)+\aerr/2)} \ge 1 - \frac{1}{n'}$ and uses
$o(\frac{n'(\aerr/2+1)+\aerr/2}{\aerr/2} \log \sigma)+\log n = o(n' \log \sigma) + \log n$ bits of memory. Observe that by Lemma~\ref{lowerbound:exact} we get a lower bound 
\[
\gamma \cdot n' \log \min \{|\Sigma|,n'\} \geq \frac{\gamma}{2}\cdot n' \log \sigma+ \frac{\gamma}{2}\cdot \frac{n}{E} \log\sigma
\ge \frac{\gamma}{2}\cdot n' \log \sigma + \log n
\label{eq:lb}
\]
 (where the last inequality holds because of Eq.\eqref{foo}). Then, for large $n$
we obtain contradiction as follows
\[
o(n' \log\sigma) + \log n < \frac{\gamma}{2} \cdot n' \log\sigma + \log n. \qedhere
\]
\end{proof}

Finally, we consider multiplicative approximation. 
The proof follows the same basic idea as of Theorem~\ref{th:montecarlo_additive_lowerbound},
however is more technically involved. The main difference is that due to uneven padding, we are
reducing to \midpalin[n'] instead of \palin[n'].

\begin{theorem}[Monte Carlo multiplicative approximation]
\label{th:montecarlo_multiplicative_lowerbound}
Let $\alg$ be any randomized Monte Carlo streaming algorithm solving \palin with multiplicative error $(1+\varepsilon)$
with probability $1-\frac{1}{n}$. If $n^{-0.98} \le \varepsilon \le n^{0.49}$ then $\alg$ uses $\Omega(\frac{\log n}{\log(1+\varepsilon)}\log \min \{|\Sigma|,\frac{\log n}{\log(1+\varepsilon)}\})$ bits of memory.
\end{theorem}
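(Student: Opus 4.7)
The plan is to follow the strategy of Theorem~\ref{th:montecarlo_additive_lowerbound}, with two adaptations dictated by the multiplicative nature of the error: the uniform padding is replaced by a geometric one, and the reduction is from \midpalin[n'] rather than \palin[n']. Set $M = \log n / \log(1+\varepsilon)$ and take $n' = \Theta(M)$ calibrated so that the padded words stay of length at most $n$; concretely, $n' = 2\lfloor\log(n\varepsilon)/(2\log(1+\varepsilon))\rfloor$ is of order $\Theta(M)$ throughout the range $n^{-0.98}\le\varepsilon\le n^{0.49}$.

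Given $x=x[1]\cdots x[n']$ for \midpalin[n'], I construct a symmetric padded word whose $j$-th block from the center has length $a_j=\lceil(1+\varepsilon)^{2j}\rceil$:
\[
w(x) = 0^{a_{n'/2}} x[1]\, 0^{a_{n'/2-1}} x[2] \cdots 0^{a_1} x[n'/2]\, x[n'/2+1]\, 0^{a_1} \cdots x[n']\, 0^{a_{n'/2}}.
\]
A direct counting shows that if the middle palindrome of $x$ has length $2k$, then the palindrome of $w(x)$ centered at the same place has length $\mathrm{Len}(k)=2k+2\sum_{j=1}^{k}a_j$, and summing the geometric series yields $\mathrm{Len}(k+1) \ge (1+\varepsilon)\cdot(1+\Omega(\varepsilon))\cdot\mathrm{Len}(k)$; thus the values $\mathrm{Len}(k)$ lie in disjoint intervals of multiplicative width at least $1+\varepsilon$, and rounding the answer of $\alg$ on $w(x)$ uniquely identifies $k$.

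The main technical obstacle is bounding parasitic palindromes of $w(x)$ that could rival the middle one: with a plain zero padding, palindromes centered at an $x$-character or between two equal neighbors can extend across symmetric zero blocks and reach length comparable to the outermost block. I would overcome this by replacing each zero block of length $a_j$ by the prefix $\nu(a_j)$ of the string $\nu=0^1 1^1 0^2 1^2\cdots$ introduced earlier (oriented as $\nu(a_j)$ on the right and $\nu(a_j)^R$ on the left), so that block-to-block matching between symmetric layers still goes through, while any non-middle palindrome is trapped within a single layer and therefore has length $O(\sqrt{a_{n'/2}})$. Since $\mathrm{Len}(k)$ strictly dominates this quantity once $k$ exceeds a constant fraction of $n'/2$, we retain $\Theta(n')$ distinguishable values of $k$, enough for the reduction to feed into Lemma~\ref{lowerbound:exact} after a shift of the range; the binary alphabet is restored via Lemma~\ref{alphabetreduction} if necessary.

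Simulating $\alg$ on $w(x)$ with $O(\log n)$ additional bits for padding counters then gives a Monte Carlo algorithm for \midpalin[n'] with error probability at most $1/n\le 1/n'$ and $O(s(n)+\log n)$ bits of memory, where $s(n)$ is the space of $\alg$. Lemma~\ref{lowerbound:exact} then forces $s(n)=\Omega(n'\log\min\{|\Sigma|,n'\})-O(\log n)=\Omega(M\log\min\{|\Sigma|,M\})-O(\log n)$, which already closes the proof whenever the target bound dominates $\log n$; in the remaining regime, Lemma~\ref{hashing_lb} supplies the $\Omega(\log n)=\Omega(M\log\min\{|\Sigma|,M\})$ lower bound directly (applicable since $\varepsilon\le n^{0.49}$), completing the argument across the entire range.
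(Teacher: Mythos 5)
Your proposal is essentially the paper's own argument: a geometric $\nu$-padding that reduces from \midpalin to \palin, with the multiplicative gap between consecutive answer-lengths used to extract the exact middle-palindrome length, Lemma~\ref{lowerbound:exact} supplying the main bound, and Lemma~\ref{hashing_lb} covering the regime where the target bound collapses to $\Theta(\log n)$. The only substantive difference is in how the center of $w(x)$ is handled. The paper defines the block lengths $i_0,\dots,i_{n'}$ by \emph{cumulative} sums $i_0+\cdots+i_d=\lceil(1+2\varepsilon)^{d+1}n^{0.99}\rceil$, so that the innermost block $\nu(i_0)^R\nu(i_0)$ already has length $\Theta(n^{0.99})$; this guarantees that the middle palindrome of $w(x)$ dominates all non-middle palindromes (which are $O(\sqrt{n})$) uniformly in $k$, including $k=0$, and also makes the attained lengths exactly $2\lceil(1+2\varepsilon)^{k+1}n^{0.99}\rceil$ so the rounding argument is clean. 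You instead take $a_j=\lceil(1+\varepsilon)^{2j}\rceil$ with no center block, so $\mathrm{Len}(k)$ starts near $0$ and the parasitic $O(\sqrt{a_{n'/2}})$ palindromes overwhelm the signal for small $k$; your ``shift of the range'' patch (only using $k$ above a constant fraction of $n'/2$) works in principle but forces you to re-embed a smaller \midpalin instance with a forced palindromic core, which is exactly the extra bookkeeping the paper's cumulative normalization sidesteps. Also be a bit careful that the per-block ceiling in $a_j=\lceil(1+\varepsilon)^{2j}\rceil$ does not spoil the multiplicative gap $\mathrm{Len}(k+1)>(1+\varepsilon)\,\mathrm{Len}(k)$ for small $k$; the paper avoids this by rounding the cumulative sums rather than the increments. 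These are implementation details rather than a different route, so the two proofs should be regarded as the same.
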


\begin{proof}
For $\varepsilon \ge n^{0.001}$ the claimed lower bound reduces to $\Omega(1)$ bits,
which obviously holds. Thus we can assume that $\varepsilon < n^{0.001}$.
Define \[\sigma = \min \{|\Sigma|, \frac{1}{50} \frac{\log n}{\log(1+2\varepsilon)}-2\}.\]
First we argue that it is enough to prove that $\mathcal{A}$ uses
$\Omega(\frac{\log n}{\log(1+\varepsilon)}\log\sigma)$ bits of memory.
Since $\log (1+2\varepsilon) \le 0.001 \log n + o(\log n)$, we have that:
\begin{equation}
\label{eq_18}
\frac{1}{50}\frac{\log n}{\log(1+2\varepsilon)}-2  \ge 18 - o(1)
\end{equation}
and consequently:
\begin{equation}
\label{eq:sigma1}
\frac{1}{50}\frac{\log n}{\log(1+2\varepsilon)}-2 = \Theta(\frac{\log n}{\log(1+2\varepsilon)}).
\end{equation}
Finally, observe that:
 \begin{equation}
 \label{eq:2eps}
 \log(1+2\varepsilon) = \Theta(\log(1+\varepsilon))
 \end{equation}
because $\log 2(1+\varepsilon) = \Theta(\log(1+\varepsilon))$ for $\varepsilon \ge 1$,
and $\log(1+\varepsilon) = \Theta(\varepsilon)$ for $\varepsilon < 1$.
From \eqref{eq:sigma1} and \eqref{eq:2eps} we conclude that:
\begin{equation}
\label{eq:sigma2}
 \log\sigma = \Theta(\log \min\{|\Sigma|,  \frac{\log n}{\log(1+\varepsilon)} \}).
 \end{equation}

Because of Lemma~\ref{hashing_lb} and equations \eqref{eq:2eps} and \eqref{eq:sigma2},
it is enough to prove that $\Omega(\frac{\log n}{\log(1+\varepsilon)}\log\sigma)$ is a lower bound when
\begin{equation}
\label{foo2}
\log(1+2\varepsilon) \le \gamma \cdot \frac{\log\sigma}{100},
\end{equation} 
as otherwise $\Omega(\frac{\log n}{\log (1+\varepsilon)} \log \sigma ) = \Omega( \frac{\log n}{\log (1+2\varepsilon)} \log \sigma) = \Omega(\log n)$.

Assume that there is a Monte Carlo streaming algorithm $\alg$ solving \palin with multiplicative error $(1+\varepsilon)$
with probability $1-\frac{1}{n}$ using $o(\frac{\log n}{\log(1+\varepsilon)}\log \sigma )$ bits of memory.
Let $x = x[1] x[2] \ldots x[n'] x[n'+1] \ldots x[2n']$ be an input for \midpalin[2n']. We choose $n'$ so that $n=(1+2\varepsilon)^{n'+1} \cdot n^{0.99} $. Then $n'= \log_{(1+2\varepsilon)} (n^{0.01})-1 = \frac{1}{100} \frac{\log n}{\log(1+2\varepsilon)}-1$.
We choose $i_0,i_1, i_{2},i_{3},\ldots,i_{n'}$ so that  $i_0+\ldots+i_d = \lceil(1+2\varepsilon)^{d+1} \cdot n^{0.99}\rceil$ for any $0 \le d \le n'$.

(Observe that for $\varepsilon = \Omega(n^{-0.98})$ we have $i_0 > n^{0.99}$ and $i_1,\ldots,i_d > 2n^{0.01}-1$.) Finally we define:
\[w(x) = \nu(i_{n'})^R x[1] \nu(i_{n'-1})^R  \ldots x[n'] \nu(i_{0})^R \nu(i_{0}) x[n'+1] \nu(i_{1}) \ldots \nu(i_{n'-1}) x[2n'] \nu(i_{n'}).\]

If $x$ contains a middle palindrome of length exactly $2k$, then $w(x)$ contains a middle palindrome of length  $2 (1+2\varepsilon)^{k+1} \cdot n^{0.99}$.
Also, based on the properties of $\nu$, any non-middle centered palindrome in $w(x)$ has length at most $O(\sqrt{n})$, which is less than $n^{0.99}$ for $n$ large 
enough. Since $\lceil 2(1+2\varepsilon)^{k}\cdot n^{0.99}\rceil \cdot (1+\varepsilon) < (2(1+2\varepsilon)^{k}\cdot n^{0.99}+1) \cdot (1+\varepsilon) <2(1+2\varepsilon)^{k+1}\cdot n^{0.99}$, value of $k$ can be extracted from the answer of $\alg$.
Thus, if $\alg$ approximates the middle palindrome in $w(x)$ with multiplicative error $(1+\varepsilon)$ with probability $1-\frac{1}{n}$
using
$o(\frac{\log n}{\log(1+\varepsilon)}\log\sigma)$ bits of memory, we can construct a new algorithm $\alg'$ solving \midpalin[2n'] exactly with
probability $1-\frac{1}{n}>1-\frac{1}{2n'}$ using
\begin{equation}
\label{eq:ub2}
o(\frac{\log n}{\log(1+\varepsilon)}\log\sigma) + \log n
\end{equation}
bits of memory.
By Lemma~\ref{lowerbound:exact}  we get a lower bound
\begin{eqnarray}
\gamma \cdot 2n' \log \min \{|\Sigma|,2n'\}  &=& \frac{\gamma}{50} \cdot \frac{\log n}{\log(1+2\varepsilon)} \log\sigma - 2\gamma\log\sigma\nonumber \\
&\ge &\frac{\gamma}{100} \cdot \frac{\log n}{\log(1+2\varepsilon)} \log\sigma + \log n - 2\gamma\log\sigma
\label{eq:lb2}
\end{eqnarray}
(where the last inequality holds because of~(\ref{foo2})). On the other hand, for large $n$
\begin{eqnarray*}
&\frac{\gamma}{100}\cdot \frac{\log n}{\log(1+2\varepsilon)}\log\sigma  - 2\gamma\log\sigma + \log n=\left(\frac{1}{100}\frac{\log n}{\log(1+2\varepsilon)}-2\right)\gamma\log\sigma +\log n\\
&=\Theta\left(\frac{\log n}{\log(1+\varepsilon)}\log\sigma\right) + \log n
\end{eqnarray*}
so~\eqref{eq:lb2} exceeds~\eqref{eq:ub2}, a contradiction.
\end{proof}

\section{Real-Time Algorithms}
\label{section:algorithms}

In this section we design real-time Monte Carlo algorithms within the space bounds matching the lower bounds from Sect.~\ref{section:lowerbounds} up to a factor bounded by $\log n$. The algorithms make use of the hash function known as the \emph{Karp-Rabin fingerprint} \cite{KaRa87}. Let $p$ be a fixed prime from the range $[n^{3 + \alpha}, n^{4 + \alpha}]$ for some $\alpha> 0$, and $r$ be a fixed integer randomly chosen from $\{1,\ldots, p{-}1 \}$. For a string $S$, its forward hash and reversed hash are defined, respectively, as
\[
\phi^F (S) = \left(\sum\limits_{i = 1}^n  { S[i] \cdot r^{i} }\right) \bmod p\ \text{ and }\ 
\phi^R (S) = \left(\sum\limits_{i = 1}^n  { S[i] \cdot r^{n - i + 1} }\right) \bmod p\,.
\]
Clearly, the forward hash of a string coincides with the reversed hash of its reversal. Thus, if $u$ is a palindrome, then $\phi^F (u) = \phi^R (u)$. The converse is also true modulo the (improbable) collisions of hashes, because for two strings $u\neq v$ of length $m$, the probability that $\phi^F(u) = \phi^F(v)$ is
at most $m/p$.
This property allows one to detect palindromes with high probability by comparing hashes. (This approach is somewhat simpler than the one of \cite{BEMS14}; in particular, we do not need ``fingerprint pairs'' used there.) In particular, a real-time algorithm makes $\bigo(n)$ comparisons and thus faces a collision with probability $\bigo(n^{-1-\alpha})$ by the choice of $p$. All further considerations assume that no collisions happen. For an input stream $S$, we denote $F^F(i,j) = \phi^F(S[i..j])$ and $F^R(i,j) = \phi^R(S[i..j])$. The next observation is quite important.

\begin{proposition}[\cite{BrGa11}] \label{FFR} 
The following equalities hold:
\begin{align*}
F^F(i,j) = &\ r^{ -(i-1) } \left( F^F(1,j) - F^F(1, i {-} 1) \right) \bmod p \,,\\
F^R(i,j) = &\ F^R(1,j) - r^{j-i+1} F^R(1,i{-}1) \bmod p \,.
\end{align*}
\end{proposition}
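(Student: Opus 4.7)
The proof plan is to verify both identities by directly expanding the hash definitions and performing a change-of-variable on the index. This is essentially a bookkeeping exercise, so the main obstacle is keeping the exponents of $r$ straight when the substring $S[i..j]$ is re-indexed to start at position $1$.

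For the forward hash, I would start by observing that $F^F(i,j) = \phi^F(S[i..j])$, where $S[i..j]$ is viewed as a string $T$ with $T[k] = S[i+k-1]$ for $k = 1, \ldots, j-i+1$. Expanding the definition of $\phi^F$ gives
\[
F^F(i,j) = \sum_{k=1}^{j-i+1} S[i+k-1] \cdot r^{k} \pmod{p},
\]
and the substitution $m = i+k-1$ turns this into $\sum_{m=i}^{j} S[m] \cdot r^{m-i+1} = r^{-(i-1)} \sum_{m=i}^{j} S[m] \cdot r^{m}$. Since $p$ is prime and $r \in \{1,\ldots,p-1\}$, the inverse $r^{-(i-1)}$ exists modulo $p$. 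Finally, telescoping the definitions of $F^F(1,j)$ and $F^F(1,i-1)$ gives $F^F(1,j) - F^F(1,i-1) = \sum_{m=i}^{j} S[m] \cdot r^{m}$, which combines with the previous step to yield the first identity.

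For the reversed hash, the same template applies, but the re-indexing is slightly more delicate because the exponent depends on the length of the substring. I would again write $F^R(i,j) = \sum_{k=1}^{j-i+1} S[i+k-1]\cdot r^{(j-i+1)-k+1}$, substitute $m = i+k-1$, and simplify the exponent to obtain $F^R(i,j) = \sum_{m=i}^{j} S[m] \cdot r^{j-m+1}$. Then I would compute $r^{j-i+1} F^R(1,i-1) = r^{j-i+1} \sum_{m=1}^{i-1} S[m] \cdot r^{i-m} = \sum_{m=1}^{i-1} S[m] \cdot r^{j-m+1}$, so subtracting this from $F^R(1,j) = \sum_{m=1}^{j} S[m] \cdot r^{j-m+1}$ leaves exactly $\sum_{m=i}^{j} S[m] \cdot r^{j-m+1} = F^R(i,j)$, as required.

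The only nontrivial bit to double-check is that the exponent of $r$ matches on both sides after re-indexing; the presence of $r^{j-i+1}$ (rather than $r^{-(i-1)}$) in the second identity reflects the fact that in $\phi^R$ the exponents decrease with $k$ from the substring length rather than increase from $1$. No new ideas beyond definition-chasing are needed.
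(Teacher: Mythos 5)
Your proof is correct, and since the paper simply cites~\cite{BrGa11} for this proposition without reproducing a proof, the direct verification by definition-chasing and re-indexing is exactly what is called for. Both identities follow by the substitution $m = i+k-1$ and telescoping, as you carry out, and the invertibility of $r$ modulo the prime $p$ is correctly invoked for the forward-hash case.
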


Let $I(i)$ denote the tuple $(i, F^F(1,i{-}1), F^R(1,i{-}1), r^{-(i - 1)}\bmod p, r^{i}\bmod p)$. The proposition below is immediate from definitions and Proposition~\ref{FFR}.

\begin{proposition} \label{fast}
1) Given $I(i)$ and $S[i]$, the tuple $I(i{+}1)$ can be computed in $\bigo(1)$ time.\\
2) Given $I(i)$ and $I(j{+}1)$, the string $S[i..j]$ can be checked for being a palindrome in $\bigo(1)$ time.
\end{proposition}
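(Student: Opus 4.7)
My plan is to verify both parts by direct unpacking of the definitions of $F^F$, $F^R$, and $I(\cdot)$, using Proposition~\ref{FFR} as the main tool. The point to keep in mind throughout is that a single modular multiplication in $\mathbb{Z}_p$ is a constant-time machine operation, since $p < n^{4+\alpha}$ fits in $\bigo(1)$ words, and that $r^{-1} \bmod p$ is a fixed quantity that may be precomputed once at initialization and treated as a constant.

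For part 1, I would show that each of the five coordinates of $I(i{+}1)$ is obtainable in $\bigo(1)$ arithmetic steps from the five coordinates of $I(i)$ together with $S[i]$. The index $i{+}1$ is immediate. Writing out the sums gives $F^F(1,i) = F^F(1,i{-}1) + S[i] \cdot r^{i} \bmod p$, and here $r^{i} \bmod p$ is already stored in $I(i)$. For the reversed hash, unrolling the definition yields $F^R(1,i) = r \cdot F^R(1,i{-}1) + r \cdot S[i] \bmod p$, which is again one multiplication and one addition. Finally, $r^{-i} \bmod p = r^{-(i-1)} \cdot r^{-1} \bmod p$ and $r^{i+1} \bmod p = r^i \cdot r \bmod p$, both constant-time given the precomputed $r^{-1}$.

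For part 2, I would apply Proposition~\ref{FFR} directly. The quantities $F^F(1,i{-}1)$, $F^R(1,i{-}1)$, and $r^{-(i-1)} \bmod p$ sit in $I(i)$, while $F^F(1,j)$, $F^R(1,j)$, and $r^{j+1} \bmod p$ sit in $I(j{+}1)$. Plugging these into the first formula of Proposition~\ref{FFR} yields $F^F(i,j)$ in $\bigo(1)$ arithmetic operations. For the second formula, I need $r^{j-i+1} \bmod p$, which I compute as $r^{j+1} \cdot r^{-(i-1)} \cdot r^{-1} \bmod p$, again using the precomputed $r^{-1}$; this also takes constant time. Having both $F^F(i,j)$ and $F^R(i,j)$, the palindrome test is simply the equality check $F^F(i,j) = F^R(i,j)$, which (under the standing assumption of no hash collisions) correctly identifies palindromes.

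There is really no main obstacle here: the statement is a bookkeeping consequence of the update rules for Karp--Rabin fingerprints. The only subtle design choice is that $r^{-(i-1)}$ rather than $r^{i-1}$ is carried in $I(i)$, which is exactly what makes the prefix-to-substring conversion in Proposition~\ref{FFR} $\bigo(1)$; storing $r^{i-1}$ instead would force an inversion per query. Assuming $r^{-1} \bmod p$ is available as a global constant, every quantity needed is a product or sum of $\bigo(1)$ previously-stored or freshly-derived values, so both claims follow.
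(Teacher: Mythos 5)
Your proof is correct and matches the paper's intent: the paper declares the proposition "immediate from definitions and Proposition~\ref{FFR}," and your argument is precisely that unpacking, with the right update formulas for $F^F(1,i)$, $F^R(1,i)$, and the powers of $r$, and the correct observation that $r^{j-i+1} = r^{j+1}\cdot r^{-(i-1)}\cdot r^{-1}$ is available in constant time given the precomputed constant $r^{-1}\bmod p$.
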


\subsection{Additive Error}
\label{ssec:add}

\begin{theorem} \label{add}
There is a real-time Monte Carlo algorithm solving the problem $\LPSP(S)$ with the additive error $E=E(n)$ using $\bigo(n/E)$ space, where $n=|S|$. 
\end{theorem}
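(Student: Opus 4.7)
The plan is to combine Karp--Rabin fingerprints with a sparse anchor set. The algorithm will maintain $\Theta(n/E)$ anchor positions spaced $\lfloor E/2 \rfloor$ apart in the stream, storing the tuple $I(a_k)$ from Proposition~\ref{fast} at each anchor $a_k$ and updating $I(j+1)$ at the current position $j$ in $O(1)$ per step by Proposition~\ref{fast}(1). Proposition~\ref{fast}(2) will then allow testing in $O(1)$ time whether $S[a_k..j]$ is a palindrome, for any stored anchor $a_k$ and the current position $j$; a running maximum of detected palindrome lengths (together with a witnessing starting position) is updated after each such check and returned on demand. The total space is $O(n/E)$ words, as required.

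For the approximation guarantee, suppose $S[a..b]$ is a longest palindrome with $L = b - a + 1 \ge E$ (the case $L < E$ is trivial since any answer in $\{0,\dots,L\}$ is within $E$ of $L$). By the anchor spacing, there exists some anchor $a_k$ with $a \le a_k < a + E/2$, and then $S[a_k..b - (a_k - a)]$ is a sub-palindrome sharing the center $(a+b)/2$ and having length $L - 2(a_k - a) \ge L - E$. Hence it is enough for the algorithm to detect this single palindrome, whose left endpoint is the anchor $a_k$ and whose right endpoint is $j^\star := b - (a_k - a)$; once detected, the running maximum certifies an output within additive error $E$.

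The main obstacle will be the real-time constraint: a naive loop over all $\Theta(n/E)$ anchors at each step costs $\Theta(n/E)$ per symbol, whereas only $O(1)$ is permitted. The plan is to schedule the $O(1)$ per-step verifications in a carefully deamortized round-robin over the anchor-endpoint pairs, so that for every anchor $a_k$ the test $S[a_k..j^\star] \stackrel{?}{=} \text{palindrome}$ is performed when $j^\star$ is reached. Because only one such witness pair needs to fire per candidate longest palindrome, there are $O(n)$ witness checks to distribute across $n$ steps. The technical work will be to define this schedule explicitly (for instance, by cycling through anchors in an order synchronized with the growth of $j$ and handling new anchors as they become active), to handle corner cases such as palindromes near the beginning or end of the stream, and to verify that the worst-case per-symbol cost stays constant rather than merely amortized.
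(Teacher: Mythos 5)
Your anchor setup and approximation analysis coincide with the paper's: anchors spaced $t_E = \lfloor E/2 \rfloor$ apart, fingerprint tuples $I(a_k)$ maintained by Proposition~\ref{fast}, and the observation that a longest palindrome $S[a..b]$ of length $L$ has a centered sub-palindrome $S[a_k..b-(a_k-a)]$ of length $> L - 2t_E \ge L - E$ starting at some anchor $a_k \in [a, a+t_E)$. The gap is entirely in the deamortization, and it is not merely ``technical work'' that remains: the round-robin idea as stated cannot work. Your plan requires the test $S[a_k..j^\star]$ to be executed precisely at the step when $j^\star = b - (a_k - a)$ arrives, but $j^\star$ is determined by the (unknown) location of the longest palindrome, so the algorithm has no way to know which anchor to pair with which arrival time. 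An input-oblivious cyclic schedule visits a given anchor only once every $\Theta(n/E)$ steps and will therefore miss the one witness pair on almost all inputs, while guaranteeing coverage of every possible witness $(a_k, j)$ would require $\Theta(n/E)$ checks per step, which is exactly the naive cost you are trying to avoid.

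The idea you are missing is a \emph{pruning} argument rather than a \emph{scheduling} one, and it is the crux of the paper's proof (Lemma~\ref{a2t}): the running maximum $answer.len$ can increase by at most $2t_E$ in a single iteration, because if $S[j..i]$ with anchor $j$ and length $> 2t_E$ is detected at step $i$, then $S[j+t_E..i-t_E]$ (also anchored) was already detected at step $i - t_E$. Consequently, at step $i$ the only anchors that could possibly improve the answer are those whose candidate length $i - a_k + 1$ falls in the window $(answer.len,\, answer.len + 2t_E]$, and since consecutive anchors are $t_E$ apart there are at most two of them; \emph{all other anchors may be skipped without being checked at all}, so no schedule needs to ever visit them. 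The paper maintains a single pointer $sp$ into the ordered anchor list marking this frontier; because all candidate lengths shift by exactly $1$ per step and $answer.len$ moves by at most $2t_E$, the pointer moves $O(1)$ positions per step, yielding worst-case constant time. In short, the correct approach is not to ensure the specific witness check fires on time, but to prove that at every step only a constant number of anchors near the frontier can matter, so the others can be ignored safely.
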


First we present a simple (and slow) algorithm which solves the posed problem, i.e., finds in $S$ a palindrome of length $\ell(S)\ge L(S)-E$, where $L(S)$ is the length of the longest palindrome in $S$. Later this algorithm will be converted into a real-time one. We store the sets $I(j)$ for some values of $j$ in a doubly-linked list $SP$ in the decreasing order of $j$'s. The longest palindrome currently found is stored as a pair $answer=(pos,len)$, where $pos$ is its initial position and $len$ is its length. Let $ t_E= \floor {\frac {E}{2}} $.

In Algorithm ABasic we add $I(j)$ to the list $SP$ for each $j$ divisible by $t_E$. This allows us to check for palindromicity, at $i$th iteration, all factors of the form $S[kt_E..i]$. We assume throughout the section that at the beginning of $i$th iteration the value $I(i)$ is stored in a variable $I$.

\begin{algorithm*}
\caption{: Algorithm ABasic, $i$th iteration}
\label{alg:ABasic}
\begin{algorithmic}[1]
\If {$i \bmod t_E = 0$}
\State {add $I$ to the beginning of $SP$}
\EndIf
\State {read $S[i]$; compute $I(i+1)$ from $I$; $I\gets I(i+1)$}
\For {all elements $v$ of $SP$} 
	\If {$S[v.i..i]$ is a palindrome and $answer.len < i{-}v.i{+}1$}
		\State {$answer\gets (v.i, i{-}v.i{+}1)$}
	\EndIf
\EndFor
\end{algorithmic}
\end{algorithm*}

\begin{proposition} \label{abasic}
Algorithm ABasic finds in $S$ a palindrome of length $\ell(S)\ge L(S)-E$ using $\bigo(n/E)$ time per iteration and $\bigo(n/E)$ space.
\end{proposition}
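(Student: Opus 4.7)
The plan is to separately establish the approximation guarantee $\ell(S)\ge L(S)-E$ and the $\bigo(n/E)$ time/space bounds. For correctness, let $P=S[a..b]$ be a longest palindrome in $S$, so $L(S)=b-a+1$. The crucial observation is that for any $d$ with $0\le d\le (b-a)/2$, the ``centrally shrunk'' factor $S[a+d..b-d]$ is again a palindrome. I would pick $d$ to be the smallest nonnegative integer such that $a+d$ is a multiple of $t_E$, namely $d=\lceil a/t_E\rceil\,t_E - a$, so $0\le d<t_E\le E/2$. Writing $k=(a+d)/t_E$, the factor $S[kt_E..b-d]$ is then a palindrome of length $L(S)-2d\ge L(S)-E$.

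Next I would verify that the algorithm actually tests this factor. At iteration $i=kt_E$, the tuple $I(kt_E)$ is inserted into $SP$ at the very start of the iteration, so it remains available at every later iteration. At iteration $i=b-d$, which satisfies $i\ge kt_E$ because $b-d\ge a+d$ is equivalent to $2d\le b-a=L(S)-1$ (and when $L(S)\le E$ the whole claim is trivial, as even the length-$1$ palindromes detected when $v.i=i$ suffice), the for-loop inspects the entry $v$ with $v.i=kt_E$ and applies the constant-time test of Proposition~\ref{fast}(2) to $S[kt_E..b-d]$. The test succeeds, so $answer.len$ is set to at least $L(S)-E$; since the update condition is monotone, this lower bound is preserved from then on.

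For the complexity bounds, the list $SP$ receives at most one new entry per iteration, and only at positions $i$ divisible by $t_E$, so $|SP|\le\lfloor n/t_E\rfloor=\bigo(n/E)$ throughout the run. Each entry has constant size, yielding the claimed $\bigo(n/E)$ space. In each iteration we update $I$ in constant time using Proposition~\ref{fast}(1) and then scan $SP$, performing one constant-time palindrome test per element via Proposition~\ref{fast}(2); hence the per-iteration cost is $\bigo(|SP|)=\bigo(n/E)$.

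I expect the only delicate point to be the off-by-one bookkeeping around the choice of $d$: one has to check simultaneously that $kt_E\ge a$ so that the shrunk factor lies inside $P$, that $d<t_E$ so that the length loss is at most $2t_E\le E$, and that $kt_E\le b-d$ so that the verifying iteration $i=b-d$ indeed finds $I(kt_E)$ already in $SP$. All three are straightforward consequences of $d=\lceil a/t_E\rceil t_E - a$ together with the harmless assumption $L(S)>E$, but they should be spelled out to make the argument airtight.
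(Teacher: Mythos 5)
Your proposal is correct and follows essentially the same route as the paper: both arguments shift the left endpoint of a longest palindrome $S[a..b]$ to the next multiple $kt_E=a+d$ of $t_E$, observe that $I(kt_E)$ is in $SP$ from iteration $kt_E$ onward, and conclude that the centrally shrunk palindrome $S[kt_E..b-d]$ of length $L(S)-2d> L(S)-2t_E\ge L(S)-E$ is tested (and hence recorded) at iteration $b-d$; the space/time bounds come from $|SP|\le n/t_E$ together with Proposition~\ref{fast} in both cases. Your treatment is slightly more explicit about the degenerate case $L(S)\le E$ and about verifying $kt_E\le b-d$, but these are routine checks the paper leaves implicit rather than a different argument.
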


\begin{proof}
Both the time and space bounds arise from the size of the list $SP$, which is bounded by $n/t_E=\bigo(n/E)$; the number of operations per iteration is proportional to this size due to Proposition~\ref{fast}. Now let $S[i..j]$ be a longest palindrome in $S$. Let $k = \ceil[\big]	{		\frac{i}{t_E}	}	t_E$. Then $ i \le k < i + t_E$. At the $k$th iteration, $I(k)$ was added to $SP$; then the palindrome $S[k .. j {-} (k {-} i)]$ was found at the iteration $j - (k - i)$. Its length is
\[
j - (k - i) - k + 1 = j - i - 2 (k - i) + 1 > (j - i + 1) - 2 t _E = L(S) - 2\floor[\Big]	{\frac{E}{2}}	\ge L(S) - E,
\]
as required.
\end{proof}

The resource to speed up Algorithm ABasic stems from the following 

\begin{lemma} \label{a2t}
During one iteration, the length $answer.len$ increases by at most $2 \cdot t_E$.
\end{lemma}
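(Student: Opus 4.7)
The plan is to exploit the elementary fact that shrinking a palindrome symmetrically at both ends yields another palindrome: if $S[a..b]$ is a palindrome and $2k \le b-a$, then $S[a{+}k..b{-}k]$ is also a palindrome. I will use this to argue that whenever $answer.len$ jumps to a new, large value $L_1$ at iteration $i$, a slightly shorter palindrome (of length $L_1 - 2t_E$) must already have been reported by the algorithm at an earlier iteration, forcing the previous value of $answer.len$ to be at least $L_1 - 2 t_E$.

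Concretely, suppose that at iteration $i$ the update in line~7 fires with some $v\in SP$, producing the new value $L_1 = i - v.i + 1$. If $L_1 \le 2 t_E$ there is nothing to prove, since $answer.len$ starts at $0$ and never decreases, so its increase is bounded by $L_1 \le 2 t_E$. Otherwise the substring $u = S[v.i + t_E .. i - t_E]$ is nonempty and, being the central factor of the palindrome $S[v.i..i]$, is itself a palindrome of length $L_1 - 2 t_E$. Its left endpoint $v.i + t_E$ is still a multiple of $t_E$ (because $v.i$ is), so by line~1--2 the tuple $I(v.i + t_E)$ was appended to $SP$ at iteration $v.i + t_E$ and is still present at iteration $i - t_E$, since entries are never removed. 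Therefore the inner loop at iteration $i - t_E$ examines $u$ (the palindromicity test runs in $\bigo(1)$ by Proposition~\ref{fast}), confirms it is a palindrome, and updates $answer.len$ to a value of at least $L_1 - 2 t_E$. By monotonicity, $answer.len \ge L_1 - 2t_E$ at the start of iteration $i$, so the increase during iteration $i$ is at most $2t_E$.

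The only subtlety is the bookkeeping check that the shorter palindrome $u$ is actually one of the substrings inspected by ABasic, and this boils down to two observations: the divisibility statement $t_E \mid (v.i + t_E)$, which guarantees membership in $SP$, and the monotone, append-only structure of $SP$, which guarantees that the entry is still present at the later iteration $i - t_E$. Both are immediate from the definition of the algorithm, so no further work beyond the argument above is required.
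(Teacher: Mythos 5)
Your proposal is correct and follows the same route as the paper's own proof: shrink the newly found palindrome by $t_E$ on each side, observe that the left endpoint remains a multiple of $t_E$ so the corresponding tuple is in $SP$, and conclude that the shorter palindrome was already detected at iteration $i - t_E$. The paper states this in two sentences; you simply spell out the divisibility and membership bookkeeping that the paper leaves implicit.
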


\begin{proof}
Let $S[j .. i]$ be the longest palindrome found at the $i$th iteration. If $i - j + 1 \le 2 t_E$ then the statement is obviously true. Otherwise the palindrome $S[j {+} t_E .. i {-} t_E]$ of length $i - j + 1 - 2 t_E$ was found before (at the $(i {-} t_E)$th iteration), and the statement holds again. 
\end{proof}

Lemma~\ref{a2t} implies that at each iteration $SP$ contains only two elements that can increase $answer.len$. Hence we get the following Algorithm A.

\begin{algorithm*}
\caption{: Algorithm A, $i$th iteration}
\label{alg:A}
\begin{algorithmic}[1]
\If {$i \bmod t_E = 0$}
\State {add $I$ to the beginning of $SP$}
\If {$i = t_E$} 
\State {$sp\gets first(SP)$}
\EndIf
\EndIf
\State {read $S[i]$; compute $I(i+1)$ from $I$; $I\gets I(i+1)$}
\State {$sp\gets previous(sp)$} \Comment {if exists}
\While {$i-sp.i+1 \le answer.len$\ and\ $(sp \ne last(SP))$}
\State {$sp\gets next(sp)$}
\EndWhile
\For {all existing $v$ in $\{sp, next(sp)\}$} 
	\If {$S[v.i..i]$ is a palindrome and $answer.len < i{-}v.i{+}1$}
		\State {$answer\gets (v.i, i{-}v.i{+}1)$}
	\EndIf
\EndFor
\end{algorithmic}
\end{algorithm*}

Due to Lemma~\ref{a2t}, the cycle at lines 9--11 of Algorithm A computes the same sequence of values of $answer$ as the cycle at lines 4--6 of Algorithm ABasic. Hence it finds a palindrome of required length by Proposition~\ref{abasic}. Clearly, the space used by the two algorithms differs by a constant. To prove that an iteration of Algorithm A takes $\bigo(1)$ time, it suffices to note that the cycle in lines 7--8 performs at most two iterations. Theorem~\ref{add} is proved.

\subsection{Multiplicative Error for $\varepsilon \le 1$} \label{ss:mul1}

\begin{theorem} \label{mult}
There is a real-time Monte Carlo algorithm solving the problem $\LPSP(S)$ with multiplicative error $\varepsilon=\varepsilon(n)\in (0,1]$ using $\bigo\big(\frac{\log (n \varepsilon)}{\varepsilon}\big)$ space, where $n=|S|$. 
\end{theorem}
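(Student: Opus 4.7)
The plan is to handle short and long palindromes by two complementary mechanisms, both clocked by the same real-time scaffolding used in Algorithm~A. Fix a threshold $\ell_0=\Theta(1/\varepsilon)$: below this length the multiplicative slack $\varepsilon L$ falls below one symbol, so short palindromes must be handled exactly; above this length a sparse geometric sample of candidate starts suffices. The backbone of both mechanisms is the universal observation used in the proof of Lemma~\ref{a2t}: $answer.len$ can grow by at most $2$ between consecutive iterations, because the sub-palindrome obtained by stripping one symbol from each end of the current best is already available one step earlier. Consequently at each iteration only a constant number of candidate starts can possibly beat $answer.len$ and need to be tested via Proposition~\ref{fast}, which is what will give real-time behaviour.

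Short palindromes (length $\le\ell_0$) I would handle by storing the fingerprint tuples $I(j)$ for every $j$ in a sliding window of width $\ell_0$; updates cost $\bigo(1)$ per step and the total space is $\bigo(1/\varepsilon)$. By the growth bound only the two starts corresponding to candidate lengths $answer.len+1,\,answer.len+2$ need probing at iteration~$i$, so all palindromes of length at most $\ell_0$ are detected exactly. Long palindromes (length $>\ell_0$) I would handle by a second structure, a list of anchors $q_1<q_2<\cdots<q_m$ (each carrying its fingerprint tuple) maintained with the invariant that the distances $d_k=i-q_k$ form, at every iteration~$i$, an approximately geometric progression of ratio $1+\varepsilon/2$, ranging from $d_m\approx\ell_0$ up to $d_1\le i$. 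A direct count then gives $m=\bigo(\log_{1+\varepsilon/2}(n/\ell_0))=\bigo(\log(n\varepsilon)/\varepsilon)$, matching the claimed space bound. Anchors are born as they fall off the back of the short window, and two consecutive anchors whose distance ratio has shrunk below the threshold are merged; the same bounded-growth argument caps palindromicity tests at a constant per step.

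Correctness of the long regime is a coverage calculation: for a maximal palindrome of length $L\ge\ell_0$ with centre $c$ and half-length $R$, the spacing invariant guarantees that for some $r\in[R/(1+\varepsilon),R]$ the position $q=c-r$ is an active anchor at iteration $i^*=c+r$, whence the test of $S[q..i^*]$ at step $i^*$ recognises a sub-palindrome of length at least $L/(1+\varepsilon)$ centred at $c$. The main obstacle I expect is not the correctness, which reduces to the invariant, but the simultaneous worst-case $\bigo(1)$ de-amortization of two tasks: (i) maintaining the anchor list under insertions and merges while preserving the geometric spacing, and (ii) locating at each step the constant-sized sub-window of anchors whose tuples actually need to be inspected. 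The first is handled by a carefully scheduled merge routine that spreads the work across steps; the second requires a pointer-dance adaptation of lines~7--8 of Algorithm~A, and together they turn what is essentially a standard exponential-histogram argument into a real-time algorithm.
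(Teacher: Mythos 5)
Your proposal captures the right high-level structure --- maintain a dense recent window plus a geometrically-spaced set of older anchor positions, test only $\bigo(1)$ candidates per step thanks to the bounded-growth observation, and bound the anchor count by $\bigo(\log(n\varepsilon)/\varepsilon)$ --- and your space and coverage calculations are correct. But you route the construction through an exponential-histogram design with explicit anchor merges, and you correctly flag that the worst-case $\bigo(1)$ de-amortization of these merges is the crux you have not resolved. The paper avoids this difficulty entirely: instead of merging, it assigns to each inserted position $i$ a predetermined time-to-live $\ttl(i)=2^{q_\varepsilon+2+\beta(i)}$ where $\beta(i)$ is the index of the lowest set bit of $i$. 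This single rule simultaneously creates the dense recent window and the geometric sparsification (Lemma~\ref{SPdistr}), in one list rather than your two, and the map $i\mapsto i+\ttl(i)$ is injective (Lemma~\ref{inject}), so at most one element expires per iteration. The expiring element is then located in $\bigo(1)$ worst-case time using a run-length representation $BS$ of the binary counter together with one FIFO queue per value of $\beta$. In short, your approach is sound as a correctness argument, but where you would need a ``carefully scheduled merge routine'' the paper uses a structural property of the $\ttl$ function that makes the real-time guarantee essentially automatic; the paper also unifies your two structures and your two growth bounds into a single list $SP$ and a single ``at most three valuable elements'' lemma (Lemma~\ref{noMoreThan3}), which cleans up the short/long boundary you leave implicit.
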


As in the previous section, we first present a simpler algorithm MBasic with non-linear working time and then upgrade it to a real-time algorithm. The algorithm must find a palindrome of length $\ell(S)\ge \frac{L(S)}{1+\varepsilon}$. The next lemma is straightforward.

\begin{lemma} \label{lemell}
If $\varepsilon\in (0,1]$, the condition $\ell(S)\ge L(S)(1-\varepsilon/2)$ implies $\ell(S)\ge \frac{L(S)}{1+\varepsilon}$.
\end{lemma}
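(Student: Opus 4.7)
The plan is to reduce the lemma to a purely numerical inequality and verify it by an elementary algebraic manipulation. Since $L(S)\ge 0$ always, and the case $L(S)=0$ is trivial (both sides are zero), I would assume $L(S)>0$, divide through by $L(S)$, and thereby reduce the claim to showing
\[
1 - \frac{\varepsilon}{2} \;\ge\; \frac{1}{1+\varepsilon} \qquad \text{for all } \varepsilon \in (0,1].
\]

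First I would multiply both sides by the positive quantity $1+\varepsilon$, turning the target inequality into $(1-\varepsilon/2)(1+\varepsilon) \ge 1$. Expanding the left-hand side gives
\[
(1 - \tfrac{\varepsilon}{2})(1+\varepsilon) \;=\; 1 + \tfrac{\varepsilon}{2} - \tfrac{\varepsilon^2}{2} \;=\; 1 + \tfrac{\varepsilon}{2}(1-\varepsilon).
\]
Since $\varepsilon \in (0,1]$, both factors $\varepsilon/2$ and $1-\varepsilon$ are nonnegative, so the extra term $\tfrac{\varepsilon}{2}(1-\varepsilon)$ is nonnegative and the inequality $(1-\varepsilon/2)(1+\varepsilon) \ge 1$ holds. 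Reversing the reduction yields $1-\varepsilon/2 \ge 1/(1+\varepsilon)$, and multiplying by $L(S)$ completes the proof.

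There is no real obstacle here: the lemma is an elementary one-line computation whose whole content is that the restriction $\varepsilon \le 1$ ensures the quadratic correction term has the right sign. The only thing worth being slightly careful about is to note explicitly that the hypothesis $\varepsilon \le 1$ is used precisely to guarantee $1-\varepsilon \ge 0$; for $\varepsilon > 1$ the inequality can fail, which is why the paper splits the treatment of the multiplicative error into two regimes and handles the case $\varepsilon > 1$ separately.
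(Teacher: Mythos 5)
Your proof is correct, and it is the obvious verification of a lemma that the paper itself labels ``straightforward'' and leaves unproved. Your algebra checks out: $(1-\varepsilon/2)(1+\varepsilon) = 1 + \tfrac{\varepsilon}{2}(1-\varepsilon) \ge 1$ for $\varepsilon\in(0,1]$, and you correctly identify that $\varepsilon\le 1$ is exactly what keeps the correction term nonnegative, which is why the paper treats $\varepsilon>1$ separately.
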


We set $q_\varepsilon=\ceil[\big]{\log\frac2{\varepsilon}	}$. The main difference in the construction of algorithms with the multiplicative and additive error is that here all sets $I(i)$ are added to the list $SP$, but then, after a certain number of steps, are deleted from it. The number of iterations the set $I(i)$ is stored in $SP$ is determined by the time-to-live function $\ttl(i)$ defined below. This function is responsible for both the correctness of the algorithm and the space bound.

\begin{algorithm*}
\caption{: Algorithm MBasic, $i$th iteration}
\label{alg:MBasic}
\begin{algorithmic}[1]
\State {add $I$ to the beginning of $SP$}
\For {all $v$ in $SP$} 
\If {$v.i+\ttl(v.i) = i$} 
\State {delete $v$ from $SP$}
\EndIf
\EndFor
\State {read $S[i]$; compute $I(i+1)$ from $I$; $I\gets I(i+1)$}
\For {all $v$ in $SP$} 
	\If {$S[v.i..i]$ is a palindrome and $answer.len < i{-}v.i{+}1$}
		\State {$answer\gets (v.i, i{-}v.i{+}1)$}
	\EndIf
\EndFor
\end{algorithmic}
\end{algorithm*}

Let $\beta(i)$ be the position of the rightmost 1 in the binary representation of $i$ (the position 0 corresponds to the least significant bit). We define
\begin{equation} \label{ttl1}
\ttl(i) = 2^{	q_\varepsilon + 2 + \beta(i)}\,.
\end{equation}
The definition is illustrated by Fig.~\ref{SP}. Next we state a few properties of the list $SP$.

\begin{figure}[!htb]
\centerline{\includegraphics[trim=70 740 137 65,clip]{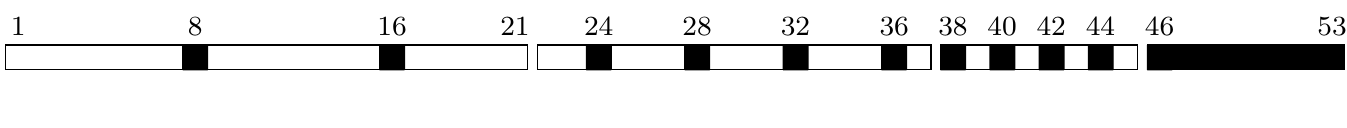} }
\caption{The state of the list $SP$ after the iteration $i=53$ ($q_\varepsilon = 1$ is assumed). Black squares indicate the numbers $j$ for which $I(j)$ is currently stored. For example, \eqref{ttl1} implies $\ttl(28)=2^{1+2+2}=32$, so $I(28)$ will stay in $SP$ until the iteration $28+32=60$.}
\label{SP}
\end{figure}

\begin{lemma} \label{ttl_big}
For any integers $a \ge 1$ and $b \ge 0$, there exists a unique integer $j \in [a, a+2^b)$ such that $\ttl(j) \ge 2^{q_\varepsilon+2+b}$.
\end{lemma}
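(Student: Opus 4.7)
The plan is to unwind the definition of $\ttl$ and reduce the statement to an elementary divisibility fact. By~\eqref{ttl1}, $\ttl(j) = 2^{q_\varepsilon+2+\beta(j)}$, so the condition $\ttl(j) \ge 2^{q_\varepsilon+2+b}$ is equivalent to $\beta(j) \ge b$. Since $\beta(j)$ is the position of the least significant $1$-bit of $j$, the condition $\beta(j) \ge b$ is in turn equivalent to the lowest $b$ bits of $j$ being zero, i.e., to $2^b \mid j$.

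Thus the lemma reduces to showing that every half-open interval $[a, a+2^b)$ of length $2^b$ contains exactly one multiple of $2^b$. For existence, take $j = 2^b \ceil{a/2^b}$: then $j \ge a$ and $j < a + 2^b$ (since $\ceil{a/2^b} < a/2^b + 1$). For uniqueness, any two distinct multiples of $2^b$ differ by at least $2^b$, so they cannot simultaneously lie in an interval whose length is $2^b$.

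There is no real obstacle here; the only care needed is translating $\ttl(j) \ge 2^{q_\varepsilon+2+b}$ faithfully into a divisibility condition via $\beta$, after which the statement is immediate. The result will be used later to argue both that $SP$ is nonempty at the right moments (existence of a stored $I(j)$ with sufficient $\ttl$) and that its size does not blow up (uniqueness bounds the density of long-lived entries).
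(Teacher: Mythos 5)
Your proof is correct and follows essentially the same route as the paper's: reduce $\ttl(j) \ge 2^{q_\varepsilon+2+b}$ to $\beta(j)\ge b$, hence to $2^b \mid j$, and conclude that exactly one such $j$ lies in any block of $2^b$ consecutive integers. You merely spell out the existence/uniqueness of that multiple in a bit more detail than the paper does.
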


\begin{proof}
By \eqref{ttl1}, $\ttl(j) \ge 2^{q_\varepsilon + 2 + b}$ if and only if $\beta(j)\ge b$, i.e., $j$ is divisible by $2^b$ by the definition of $\beta$. Among any $2^b$ consecutive integers, exactly one has this property.
\end{proof}

Figure \ref{SP} shows the partition of the range $(0,i]$ into intervals having lengths that are powers of 2 (except for the leftmost interval). In general, this partition consists of the following intervals, right to left:
\begin{equation} \label{intervals}
(i - 2^{q_\varepsilon + 2}, i], (i - 2^{q_\varepsilon + 3}, i - 2^{q_\varepsilon + 2}], \ldots, 
(i - 2^{m}, i - 2^{m-1}],(0, i - 2^{m}], \text{ where } m=\Big\lceil \log \frac{n}{2^{q_\varepsilon + 2}}\Big\rceil - 1.
\end{equation}
Lemma~\ref{ttl_big} and \eqref{ttl1} imply the following lemma on the distribution of the elements of $SP$.

\begin{lemma} \label{SPdistr}
After each iteration, the first interval (resp., the last interval; each of the remaining intervals) in \eqref{intervals} contains $2^{q_\varepsilon + 2}$ (resp., at most $2^{q_\varepsilon + 1}$; exactly $2^{q_\varepsilon + 1}$) elements of the list $SP$.
\end{lemma}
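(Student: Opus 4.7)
The plan is to combine the explicit formula \eqref{ttl1} for $\ttl$, the fact that every $\ttl$-value is a power of $2$ of size at least $2^{q_\varepsilon+2}$, and Lemma~\ref{ttl_big}, which implies that in every length-$2^b$ window there is exactly one $j$ with $\ttl(j)\ge 2^{q_\varepsilon+2+b}$. Recall that, after iteration $i$, $j\in SP$ iff $j\le i$ and $j+\ttl(j)>i$.

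First I would handle the first interval $(i-2^{q_\varepsilon+2},i]$: every integer $j$ here satisfies $i-j<2^{q_\varepsilon+2}\le \ttl(j)$ by \eqref{ttl1}, so $j+\ttl(j)>i$ and $j\in SP$. The interval contains exactly $2^{q_\varepsilon+2}$ integers, which proves the first-interval part of the claim.

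For a middle interval $(i-2^b,i-2^{b-1}]$ with $b\in\{q_\varepsilon+3,\dots,m\}$, any $j$ here has $i-j\in[2^{b-1},2^b)$; since $\ttl(j)$ is a power of $2$, $j\in SP$ is equivalent to $\ttl(j)\ge 2^b$, which by \eqref{ttl1} amounts to $\beta(j)\ge b-q_\varepsilon-2$, i.e., $j$ is a multiple of $2^{b-q_\varepsilon-2}$. An interval of length $2^{b-1}$ contains exactly $2^{b-1}/2^{b-q_\varepsilon-2}=2^{q_\varepsilon+1}$ such multiples, matching the middle-interval claim. For the last interval $(0,i-2^m]$, the same reasoning forces $\ttl(j)\ge 2^{m+1}$ for each $j\in SP$ inside it, so every such $j$ must be a multiple of $2^{m-q_\varepsilon-1}$, and a length-$L$ interval contains at most $\lceil L/2^{m-q_\varepsilon-1}\rceil$ such multiples.

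I expect the main obstacle to be bounding the length $i-2^m$ of the last interval by $2^m$, equivalently showing $i\le 2^{m+1}$. This must be extracted from $i\le n$ together with the defining inequality of $m$ in \eqref{intervals}. Once that bound is in place, the last interval contains at most $2^m/2^{m-q_\varepsilon-1}=2^{q_\varepsilon+1}$ elements of $SP$, which finishes the third case and hence the lemma.
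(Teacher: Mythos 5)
Your handling of the first and middle intervals is correct, and it matches the argument the paper evidently has in mind (the paper gives no proof, stating only that the lemma follows from Lemma~\ref{ttl_big} and \eqref{ttl1}).

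The step you flag yourself as the ``main obstacle'' is, however, a genuine gap, and it cannot be closed from the definition of $m$ as literally written in \eqref{intervals}. With $m=\big\lceil\log\frac{n}{2^{q_\varepsilon+2}}\big\rceil-1$ one has $2^{m+1}<\frac{n}{2^{q_\varepsilon+1}}$, which is well below $n$; so for $i$ close to $n$ the inequality $i\le 2^{m+1}$ fails, the last interval $(0,i-2^m]$ is far longer than $2^m$, and your count of multiples of $2^{m-q_\varepsilon-1}$ can greatly exceed $2^{q_\varepsilon+1}$. Indeed, the lemma itself fails under that literal reading: take $q_\varepsilon=1$, $n=300$ (so $m=5$) and $i=250$; then $(0,218]$ contains eleven elements of $SP$, namely $32,64,96,128,144,160,176,192,200,208,216$, against the claimed bound $2^{q_\varepsilon+1}=4$. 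Conceptually, divisibility of $j$ by $2^{m-q_\varepsilon-1}$ is only a \emph{necessary} condition for $j\in SP$ in the last interval; when $i-j$ exceeds $2^{m+1}$ one must also have $\ttl(j)>i-j$, and your argument never exploits this. Both your proof and the lemma are salvaged by choosing $m$ so that $2^m<i\le 2^{m+1}$, i.e.\ $m=\lceil\log i\rceil-1$: this keeps the number of intervals $O(\log(n\varepsilon))$ as Lemma~\ref{SPsize} needs, makes the last interval shorter than $2^m$, and then your bound $\lfloor(i-2^m)/2^{m-q_\varepsilon-1}\rfloor\le 2^{q_\varepsilon+1}$ goes through. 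So either flag the paper's formula for $m$ as a typo and repair it, or find a last-interval argument that does not assume $i\le 2^{m+1}$.
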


The number of the intervals in \eqref{intervals} is $\bigo(\log (n\varepsilon))$, so from Lemma~\ref{SPdistr} and the definition of $q_\varepsilon$ we have the following.
\begin{lemma} \label{SPsize}
After each iteration, the size of the list $SP$ is $\bigo\big(\frac {\log (n\varepsilon)}{\varepsilon}\big)$.
\end{lemma}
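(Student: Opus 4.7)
The plan is to combine the per-interval bound from Lemma~\ref{SPdistr} with a count of the number of intervals in the partition \eqref{intervals}, and then substitute the value of $q_\varepsilon=\lceil\log(2/\varepsilon)\rceil$ to translate the bound into the parameters appearing in the statement.

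First I would bound the contribution of each interval. By Lemma~\ref{SPdistr}, the leftmost interval in \eqref{intervals} contains at most $2^{q_\varepsilon+2}$ elements of $SP$ and every other interval contains at most $2^{q_\varepsilon+1}$ elements, so in any case each interval contributes $\bigo(2^{q_\varepsilon})$ elements. Consequently, if $N$ denotes the number of intervals, then $|SP| = \bigo(N\cdot 2^{q_\varepsilon})$.

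Second, I would count the intervals. From the explicit formula $m=\lceil\log(n/2^{q_\varepsilon+2})\rceil-1$, the total number of intervals in \eqref{intervals} is $m+2 = \bigo(\log n - q_\varepsilon)$. Plugging in $q_\varepsilon = \lceil\log(2/\varepsilon)\rceil$ gives $2^{q_\varepsilon} = \Theta(1/\varepsilon)$ and $\log n - q_\varepsilon = \log(n\varepsilon) + \bigo(1)$. Multiplying the per-interval bound by the number of intervals yields $|SP| = \bigo\big(\frac{\log(n\varepsilon)+1}{\varepsilon}\big) = \bigo\big(\frac{\log(n\varepsilon)}{\varepsilon}\big)$, as claimed.

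The only real subtlety, and the single point I expect to require care, is the regime where $\varepsilon$ is so small that $2^{q_\varepsilon+2}\ge n$: here the partition \eqref{intervals} degenerates to just the last (leftmost) interval, and $m$ as defined need not be positive. In that corner case the bound follows trivially, because $|SP|\le n$ and the quantity $\frac{1}{\varepsilon}$ already absorbs $n$; the implicit constant in the $\bigo$-notation can be chosen to cover the additive $\bigo(1)$ slack in the estimate $\log n - q_\varepsilon = \log(n\varepsilon)+\bigo(1)$.
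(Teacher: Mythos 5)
Your proof is correct and follows essentially the same route as the paper's (which is a one-liner: count the intervals in \eqref{intervals} as $\bigo(\log(n\varepsilon))$, multiply by the per-interval bound $\bigo(2^{q_\varepsilon})=\bigo(1/\varepsilon)$ from Lemma~\ref{SPdistr}). You merely spell out the substitution $q_\varepsilon=\lceil\log(2/\varepsilon)\rceil$ and the corner case explicitly, which the paper leaves implicit.

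One small slip in wording: in Lemma~\ref{SPdistr} it is the \emph{first} interval of \eqref{intervals} (the one written first, i.e.\ the rightmost on the number line, $(i-2^{q_\varepsilon+2},i]$) that holds $2^{q_\varepsilon+2}$ elements, while the \emph{last} (leftmost in position, $(0,i-2^m]$) holds at most $2^{q_\varepsilon+1}$. You reversed these labels, but since you only use the bound $\bigo(2^{q_\varepsilon})$ per interval, the conclusion is unaffected.

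Also note that the step $\bigo(\log(n\varepsilon)+1)=\bigo(\log(n\varepsilon))$ (and similarly the corner-case handling) tacitly assumes $n\varepsilon$ is bounded below by a constant greater than $1$; this is the same implicit assumption the paper makes, and it is harmless in the regime of interest (the remark after Theorem~\ref{mult} treats $\varepsilon\geq\varphi(n)/n$ with $\varphi\to\infty$, so $\log(n\varepsilon)\geq 1$ for $n$ large enough). So this is not a gap relative to the paper.
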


\begin{proposition} \label{cEasyLog}
Algorithm MBasic finds a palindrome of length $\ell(S)\ge \frac{L(S)}{1+\varepsilon}$ using $\bigo(\frac{\log (n\varepsilon)}{\varepsilon})$ time per iteration and $\bigo(\frac{\log (n\varepsilon)}{\varepsilon})$ space.
\end{proposition}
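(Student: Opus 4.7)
The plan is to split the claim into the resource bound and the approximation guarantee; only the latter requires any real work. For the resources, I would note that each iteration of MBasic visits every element of $SP$ a constant number of times---adding $I(i)$ at the front, scanning once for expired entries, and invoking the $\bigo(1)$ palindromicity test of Proposition~\ref{fast}---so combined with Lemma~\ref{SPsize} both the per-iteration running time and the total space are $\bigo(|SP|)=\bigo(\log(n\varepsilon)/\varepsilon)$.

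For correctness, by Lemma~\ref{lemell} it is enough to exhibit, at some iteration, a detected palindrome of length at least $L(1-\varepsilon/2)$, where $L=L(S)$. I would fix a longest palindrome $S[a..b]$ of length $L$ and exploit the fact that, for every $t\ge 0$, the inward factor $S[a+t..b-t]$ is itself a palindrome of length $L-2t$, which gets tested at iteration $b-t$ precisely when $I(a+t)$ has not yet been evicted from $SP$, i.e.\ when $\ttl(a+t)\ge L-2t$. The task thus reduces to producing a $t$ that is simultaneously small enough, $t\le L\varepsilon/4$ (forcing length $\ge L(1-\varepsilon/2)$), and alive enough, $\ttl(a+t)\ge L-2t$ (forcing detection).

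To produce such a $t$ I would apply Lemma~\ref{ttl_big} with the exponent $B=\max(0,\lceil\log L\rceil-q_\varepsilon-2)$; this supplies $t\in[0,2^B)$ with $\ttl(a+t)\ge 2^{q_\varepsilon+2+B}\ge L\ge L-2t$. In the small regime $L\le 2^{q_\varepsilon+2}$ this collapses to $B=0$, $t=0$, and Lemma~\ref{SPdistr} already guarantees $I(a)\in SP$ at iteration $b$, so $S[a..b]$ itself is detected. In the opposite regime the elementary estimates $2^{\lceil\log L\rceil}\le 2L$ and $2^{q_\varepsilon}\ge 2/\varepsilon$ combine to give $2^B\le L\varepsilon/4$, so the chosen $t$ automatically satisfies $t<L\varepsilon/4$, and the palindrome $S[a+t..b-t]$ of length $L-2t\ge L(1-\varepsilon/2)$ is detected at iteration $b-t$, updating $answer$ accordingly. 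The main obstacle I expect is precisely this calibration of $B$: the window $[a,a+2^B)$ has to be short enough that every $t$ inside satisfies $t\le L\varepsilon/4$ yet wide enough for Lemma~\ref{ttl_big} to force $\ttl(a+t)\ge L$, and reconciling both sides of this balance is the one place where the specific constants in the definitions of $q_\varepsilon$ and $\ttl$ are actually used.
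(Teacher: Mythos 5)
Your proposal is correct and follows essentially the same route as the paper: both arguments fix a longest palindrome, invoke Lemma~\ref{ttl_big} to locate a nearby left endpoint $k=a+t$ with time-to-live at least $L(S)$, and then bound the shrinkage $2t$ using $q_\varepsilon=\lceil\log(2/\varepsilon)\rceil$ before closing with Lemma~\ref{lemell}. The only cosmetic differences are your use of $\lceil\log L\rceil$ where the paper uses $d=\lfloor\log L\rfloor$ (with a correspondingly shifted choice of the exponent passed to Lemma~\ref{ttl_big}), and your appeal to Lemma~\ref{SPdistr} in the small case, where the paper instead directly observes $\ttl(i)\ge 2^{q_\varepsilon+2}\ge L$; both are valid and the calculations you carry out are sound.
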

\begin{proof}
Both the time per iteration and the space are dominated by the size of the list $SP$. Hence the required complexity bounds follow from Lemma~\ref{SPsize}. For the proof of correctness, let $S[i..j]$ be a palindrome of length $L(S)$. Further, let $d = \lfloor\log L(S)\rfloor$.

If $d < q_\varepsilon + 2$, the palindrome $S[i..j]$ will be found exactly, because $I(i)$ is in $SP$ at the $j$th iteration:
\[
i + \ttl(i) \ge i + 2^{q_\varepsilon + 2} \ge i + 2^{d + 1} > i + L(S) > j\,.
\]
Otherwise, by Lemma~\ref{ttl_big} there exists a unique $k \in [i, i + 2^{d-q_\varepsilon-1} )$ such that $\ttl(k) \ge 2^{d + 1}$. Hence at the iteration $j - (k - i)$ the palindrome $S[i {+} (k {-} i) .. j {-} (k {-} i)]$ will be found, because $I(k)$ is in $SP$ at this iteration:
\[
k + \ttl(k) \ge i + \ttl(k) \ge i + 2^{d + 1} > j \ge j - (k - i)\,.
\]
The length of this palindrome satisfies the requirement of the proposition:
\[
j - (k - i) - (i + (k - i) )  + 1 = L(S) - 2 (k - i) \ge L(S) - 2^{d - q_\varepsilon} \ge L(S) - \frac{L(S)}{2^{q_\varepsilon}} \ge L(S) \Big(1 - \frac{\varepsilon}{2}\Big) \,.
\]
The reference to Lemma~\ref{lemell} finishes the proof.
\end{proof}

Now we speed up Algorithm MBasic. It has two slow parts: deletions from the list $SP$ and checks for palindromes. Lemmas~\ref{abcd} and~\ref{noMoreThan3} show that, similar to Sect.~\ref{ssec:add}, $\bigo(1)$ checks are enough at each iteration.

\begin{lemma} \label{abcd}
Suppose that at some iteration the list $SP$ contains consecutive elements $I(d), I(c), I(b), I(a)$. Then $b - a \le d - b$.
\end{lemma}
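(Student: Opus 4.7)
Plan. The proof will exploit the interval decomposition of $(0, i]$ introduced in~\eqref{intervals}. Within each $I_k = (i - 2^{q_\varepsilon + 2 + k}, i - 2^{q_\varepsilon + 1 + k}]$ for $k \ge 1$, the members of $SP$ are exactly the multiples of $2^k$, so consecutive in-$SP$ distances within $I_k$ all equal $2^k$; within $I_0$ every integer belongs to $SP$, so the gap is $1$. A short arithmetic check --- writing the right endpoint $L = i - 2^{q_\varepsilon + 1 + (k+1)}$ of $I_{k+1}$ modulo $2^{k+1}$ --- shows that the gap between the last $SP$-element of $I_{k+1}$ and the first one of $I_k$ is either $2^k$ or $2^{k+1}$ (respectively $1$ or $2$ when $k=0$), so in particular it lies in $[2^{\max(k,0)}, 2^{k+1}]$.

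Because the section treats $\varepsilon \le 1$, we have $q_\varepsilon = \lceil \log(2/\varepsilon) \rceil \ge 1$, and Lemma~\ref{SPdistr} then guarantees that every $I_k$ with $k \ge 1$ contains at least $2^{q_\varepsilon+1} \ge 4$ elements of $SP$. The key structural consequence is: four consecutive $SP$-elements $a < b < c < d$ cross at most one interval boundary, because two crossings would squeeze some intermediate $I_k$ between two of them, leaving at most two $SP$-elements in it and contradicting the lower bound of~$4$.

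It remains to verify $b - a \le d - b$ in the four scenarios parameterized by where the (at most one) crossing lies. Writing $g_1 = b - a$, $g_2 = c - b$, $g_3 = d - c$: (0) no crossing --- all gaps equal; (1) crossing at $(a, b)$ --- $g_1 \le 2^{k+1}$ while $g_2 = g_3 = 2^k$, so $g_2 + g_3 = 2^{k+1} \ge g_1$; (2) crossing at $(b, c)$ --- $g_1 = 2^{k+1}$, $g_2 \ge 2^k$ (boundary lower bound), $g_3 = 2^k$, so $g_2 + g_3 \ge 2^{k+1} = g_1$; (3) crossing at $(c, d)$ --- $g_1 = g_2 = 2^{k+1}$, so $g_2 + g_3 \ge g_2 = g_1$. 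The leftmost catch-all interval of~\eqref{intervals} needs only a cosmetic tweak: its $SP$-elements are multiples of a common power of~$2$, and the same case split carries over.

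The main obstacle --- and the reason the hypothesis $\varepsilon \le 1$ (equivalently $q_\varepsilon \ge 1$) is essential --- is ruling out two-crossing configurations. Without the $\ge 4$ density bound the lemma genuinely fails: for instance at $q_\varepsilon = 0$ and $i = 34$, the quadruple $(24, 28, 30, 31)$ in $SP$ yields $b - a = 4$ but $d - b = 3$. The boundary-gap computation is the only mildly technical ingredient and reduces to a one-line check modulo powers of two.
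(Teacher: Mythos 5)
Your proof is correct and follows essentially the same route as the paper: both arguments rest on the dyadic interval decomposition~\eqref{intervals} together with Lemma~\ref{SPdistr}'s density bound of $2^{q_\varepsilon+1}\ge 4$ elements per non-leftmost interval, which forces $a,b,c,d$ to span at most two adjacent intervals. Where the paper bounds $b-a\le 2^x$ from above and $c-b,\,d-c\ge 2^{x-1}$ from below directly via the $\beta$-values (so $d-b\ge 2^x\ge b-a$), you instead enumerate the four possible locations of the single boundary crossing; this is a more explicit but equivalent packaging of the same inequalities, and your observation that the bound fails at $q_\varepsilon=0$ (the quadruple $(24,28,30,31)$ at $i=34$) is a nice illustration that the density $\ge 4$ is exactly what the argument needs.
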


\begin{proof}
Let $j$ be the number of the considered iteration. Note that $a < b < c < d$. Consider the interval in \eqref{intervals} containing $a$. If $a \in (j - 2^{q_\varepsilon + 2}, j]$, then $b - a = 1$ and $d - b = 2$, so the required inequality holds. Otherwise, let $a \in (j - 2^{q_\varepsilon + 2 + x}, j - 2^{q_\varepsilon + 2 + x - 1}]$. Then by \eqref{ttl1} $\beta(a)\ge x$; moreover, any $I(k)$ such that $a<k\le j$ and $\beta(k)\ge x$ is in $SP$. Hence, $b - a \le 2^x$. By Lemma~\ref{SPdistr} each interval, except for the leftmost one, contains at least $2^{q_\varepsilon+1}\ge 4$ elements. Thus each of the numbers $b,c,d$ belongs either to the same interval as $a$ or to the previous interval $(j- 2^{q_\varepsilon + 2 + x - 1}, j - 2^{q_\varepsilon + 2 + x - 2}]$. Again by \eqref{ttl1} we have $\beta(b),\beta(c),\beta(d)\ge x-1$. So $c{-}b,d {-} c \ge 2^{x - 1}$, whence the result.
\end{proof}

We call an element $I(a)$ of $SP$ \emph{valuable at $i$th iteration} if $i-a+1>answer.len$ and $S[a..i]$ can be a palindrome. (That is, Algorithm MBasic does not store enough information to predict that the condition in its line 7 is false for $v=I(a)$.)

\begin{lemma} \label{noMoreThan3}
At each iteration, $SP$ contains at most three valuable elements. Moreover, if $I(d'),I(d)$ are consecutive elements of $SP$ and $i-d'<answer.len\le i-d$, where $i$ is the number of the current iteration, then the valuable elements are consecutive in $SP$, starting with $I(d)$. 
\end{lemma}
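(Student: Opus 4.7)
The plan is to prove the bound by contradiction: a fourth valuable element would force Algorithm MBasic to have already updated $answer.len$ beyond $\ell_1$ at an earlier iteration, contradicting the valuability of the shortest valuable palindrome. I will combine a classical combinatorial fact about palindromic suffixes with a ``shift by $\delta$'' argument that reuses work done by MBasic in the past.

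First I locate where valuable elements can lie. The length condition $i-a+1>answer.len$ rearranges to $a\le i-answer.len$, and the defining inequalities $d\le i-answer.len<d'$ together with the decreasing-index ordering of $SP$ place every valuable element at or after $I(d)$ in $SP$-order. Next I bound the valuable lengths within a factor of two: applying Proposition~\ref{cEasyLog} to the already-processed prefix $S[1..i-1]$ gives $answer.len\ge L(S[1..i-1])(1-\varepsilon/2)$, and any valuable palindrome $S[a..i]$ of length $\ell$ contains an embedded palindrome $S[a+1..i-1]$ of length $\ell-2$ inside $S[1..i-1]$, so $\ell-2\le answer.len/(1-\varepsilon/2)$. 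Combined with $\ell>answer.len$, for $\varepsilon\le 1$ the valuable lengths lie in a factor-$2$ range, and the classical theorem on palindromic suffixes (three palindromic suffixes whose longest is at most twice the shortest form an AP of lengths) forces any four of them to satisfy $\ell_j=\ell_1+(j-1)\delta$, with $a_j=a_1-(j-1)\delta$ for some $\delta\ge 1$.

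Now the heart of the argument. Removing $\delta$ characters from each end of the palindrome $S[a_4..i]$ yields that $S[a_3..i-\delta]$ is a palindrome of length $\ell_4-2\delta=\ell_1+\delta$. Because $I(a_3)$ is alive in $SP$ at iteration $i$, a fortiori it is alive at iteration $i-\delta$; hence Algorithm MBasic, when processing iteration $i-\delta$, checks $I(a_3)$, detects this palindrome via the hash equality from Proposition~\ref{FFR}, and sets $answer.len\ge\ell_1+\delta$. By monotonicity of $answer.len$ this bound persists to the start of iteration $i$, giving $answer.len\ge\ell_1+\delta>\ell_1$, which contradicts the valuability of $I(a_1)$. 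Thus there are at most three valuable elements.

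For the ``consecutive in $SP$ starting with $I(d)$'' claim, an analogous iteration-$(i-\delta')$ argument applied between any two valuable elements separated in $SP$-order by a non-valuable one derives the same contradiction (the periodicity forced by the two bounding palindromes, combined with the $SP$-membership of the separator, forces MBasic to have updated $answer.len$ too far at an earlier iteration), so the valuable block must be contiguous in $SP$ starting at the first length-satisfying element $I(d)$. The main obstacle will be the edge case $\ell_4\in\{2\ell_1+1,2\ell_1+2\}$ which just escapes the factor-two hypothesis of the classical AP theorem; I plan to resolve it by applying the theorem separately to the triples $(\ell_1,\ell_2,\ell_3)$ and $(\ell_2,\ell_3,\ell_4)$, each of which does satisfy the factor-two hypothesis, and matching common differences.
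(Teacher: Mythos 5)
Your proposed argument contains two genuine gaps, both stemming from replacing the paper's combinatorial control of the $SP$ list with a combinatorial fact about palindromic suffixes of the input string.

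First, ``valuable'' does not mean ``palindromic''. An element $I(a)$ is valuable if $i-a+1>answer.len$ and the algorithm cannot rule out that $S[a..i]$ is a palindrome; it does not mean $S[a..i]$ actually is one. The paper's proof assumes, purely for contradiction, that only the \emph{single} element $I(a)$ beyond $I(b)$ has $S[a..i]$ a palindrome. Your argument needs all four factors $S[a_1..i],\dots,S[a_4..i]$ to be genuine palindromic suffixes in order to invoke the arithmetic-progression theorem, and this is not given.

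Second, even granting all four are palindromes, the AP theorem controls the set of \emph{all} palindromic suffixes of $S[1..i]$ whose lengths lie in a factor-two window; your four positions $a_1>a_2>a_3>a_4$ are just four \emph{consecutive elements of $SP$}, and $SP$ membership is governed entirely by the $\ttl$ function, not by the string. So the $a_j$ may be a sparse, non-equispaced subsequence of the palindromic AP, and the claim $a_j=a_1-(j-1)\delta$ does not follow. In that case the quantity you want, namely $a_1-a_3 \ge a_3-a_4$ (equivalently that removing $\delta=a_3-a_4$ from both ends of $S[a_4..i]$ still leaves a palindrome longer than $\ell_1-1\ge answer.len$), is not guaranteed by the AP theorem alone: one can have $a_1-a_3$ much smaller than $a_3-a_4$ if the $SP$ positions skip many rungs of the palindromic AP at the far end.

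The paper sidesteps both problems via Lemma~\ref{abcd}, which is the missing idea here: for any four consecutive $SP$ elements $I(d),I(c),I(b),I(a)$, one has $b-a\le d-b$. This is a purely arithmetic property of the $\ttl$ scheme (positions with $\beta(\cdot)\ge x$ are equispaced with spacing $2^x$, and the intervals in \eqref{intervals} enforce monotone spacing), completely independent of the input string. It yields exactly the inequality $a_1-a_3 \ge a_3-a_4$ that you need, with no assumption on which suffixes are palindromic beyond the single one being disproved. Your reduction of the length bound via Proposition~\ref{cEasyLog} and the ``remove $\delta$ and look at iteration $i-\delta$'' step are fine, but the missing bridge is a structural bound on consecutive $SP$ gaps, not the palindromic AP theorem.
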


\begin{proof}
Let $d$ be as in the condition of the lemma. If $I(d)$ is followed in $SP$ by at most two elements, we are done. If it is not the case, let the next three elements be $I(c), I(b)$, and $I(a)$, respectively. If $S[a..i]$ is a palindrome then $S[a {+} (b {-} a) .. i {-} (b {-} a)]$ is also a palindrome. At the iteration $i {-} (b {-} a)$ the tuple $I(b)$ was in $SP$, so this palindrome was found. Hence, at the $i$th iteration the value $answer.len$ is at least the length of this palindrome, which is $i - a + 1 - 2 (b - a)$. By Lemma~\ref{abcd}, $b-a\le d-b$, implying $answer.len\ge i - a + 1 - (b - a) - (d - b) = i - d + 1$. This inequality contradicts the definition of $d$; hence, $S[a..i]$ is not a palindrome. By the same argument, the elements following $I(a)$ in $SP$ do not produce palindromes as well. Thus, only the elements $I(d),I(c), I(b)$ are valuable.
\end{proof}

Now we turn to deletions. The function $\ttl(x)$ has the following nice property.

\begin{lemma} \label{inject}
The function $x \rightarrow x + \ttl(x)$ is injective.
\end{lemma}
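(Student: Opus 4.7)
The plan is to assume, for contradiction, that there are two distinct values $x < y$ with $x + \ttl(x) = y + \ttl(y)$, and to derive a contradiction by computing the 2-adic valuation of $y - x$ in two different ways.

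First, I would set $a = \beta(x)$ and $b = \beta(y)$, so by definition $\ttl(x) = 2^{q_\varepsilon + 2 + a}$ and $\ttl(y) = 2^{q_\varepsilon + 2 + b}$. The assumption rewrites as $y - x = \ttl(x) - \ttl(y)$. Since $y - x > 0$, we must have $\ttl(x) > \ttl(y)$ and hence $a > b$; in particular $a \ne b$, so the case $\ttl(x) = \ttl(y)$ (which would force $x = y$) is ruled out immediately. Then
\[
y - x = 2^{q_\varepsilon + 2 + a} - 2^{q_\varepsilon + 2 + b} = 2^{q_\varepsilon + 2 + b}\bigl(2^{a-b} - 1\bigr),
\]
and since $a > b$ the factor $2^{a-b} - 1$ is odd, so the 2-adic valuation of the right-hand side is exactly $q_\varepsilon + 2 + b$.

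Next, I would compute the same valuation directly from $x$ and $y$. By the definition of $\beta$, we can write $x = 2^a m$ with $m$ odd and $y = 2^b n$ with $n$ odd. Then
\[
y - x = 2^b\bigl(n - 2^{a-b} m\bigr),
\]
and because $n$ is odd while $2^{a-b} m$ is even (as $a > b$), the parenthesized expression is odd. Therefore the 2-adic valuation of $y - x$ equals $b$. Equating the two expressions for the valuation gives $b = q_\varepsilon + 2 + b$, which is impossible since $q_\varepsilon \ge 0$ (in fact $q_\varepsilon \ge 1$ for $\varepsilon \le 1$), yielding the desired contradiction.

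The only real choice in the proof is to work with 2-adic valuations rather than trying to manipulate the equation $x + \ttl(x) = y + \ttl(y)$ arithmetically; once that viewpoint is adopted, both sides of $y - x = \ttl(x) - \ttl(y)$ have their valuations determined by $\beta(x)$ and $\beta(y)$, and the constant offset $q_\varepsilon + 2$ in the exponent of $\ttl$ forces the two valuations to disagree. There is no real obstacle beyond picking this viewpoint.
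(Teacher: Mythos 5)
Your proof is correct and rests on the same core fact as the paper's: the 2-adic valuation of $x$ (that is, $\beta(x)$) is preserved by $x \mapsto x + \ttl(x)$, because $\ttl(x) = 2^{q_\varepsilon + 2 + \beta(x)}$ has strictly larger 2-adic valuation than $x$ itself. The paper states this directly --- $\beta(x+\ttl(x)) = \beta(x)$, so equal images force equal $\beta$, hence equal $\ttl$ by \eqref{ttl1}, hence equal arguments --- whereas you package the same observation as a proof by contradiction, computing the valuation of $y-x = \ttl(x)-\ttl(y)$ in two ways and finding that the constant offset $q_\varepsilon+2$ makes them disagree. Both are valid; the paper's phrasing is a touch more economical.
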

\begin{proof}
Note that $\beta(x + \ttl(x) ) = \beta(x)$ from the definition of $\ttl$. Hence the equality $x + \ttl(x)=y + \ttl(y)$ implies $\beta(x)=\beta(y)$, then $\ttl(x)=\ttl(y)$ by \eqref{ttl1}, and finally $x=y$. 
\end{proof}

Lemma \ref{inject} implies that at most one element is deleted from $SP$ at each iteration. To perform this deletion in $\bigo(1)$ time, we need an additional data structure. By $BS(x)$ we denote a linked list of maximal segments of 1's in the binary representation of $x$. For example, the binary representation of $x=12345$ and $BS(x)$ are as follows:
\[
\arraycolsep=4pt
\begin{array}{|c|c|c|c|c|c|c|c|c|c|c|c|c|c|}
\hline
13 & 12 & 11 & 10 & 9 & 8 & 7 & 6 & 5 & 4 & 3 & 2 & 1 & 0\\
\hline
1 & 1 & 0 & 1 & 0 & 0 & 0 & 0 & 1 & 1 & 1 & 0 & 0 & 1\\
\hline
\end{array}
\hspace*{4mm} BS(12345) = \{[0, 0], [3, 5], [10, 10], [12, 13]\} 
\] 
Clearly, $BS(x)$ uses $\bigo(\log x)$ space.

\begin{lemma}
\label{BSrecalc}
Both $\beta(x)$ and $BS(x + 1)$ can be obtained from $BS(x)$ in $\bigo(1)$ time.
\end{lemma}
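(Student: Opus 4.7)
The plan is to read $\beta(x)$ directly off $BS(x)$ and then compute $BS(x+1)$ by a short case analysis on the head of the list. Since the segments in $BS(x)$ are stored in increasing order of left endpoint (as in the displayed example for $x=12345$), the value $\beta(x)$ equals the left endpoint $\ell_1$ of the first segment, which is available in $O(1)$ time via the head pointer. For $BS(x+1)$, I would invoke the standard rule for incrementing a binary number: locate the lowest $0$-bit of $x$, flip every $1$-bit below it to $0$, and set this $0$-bit to $1$. All of this information is encoded in the first (and possibly the second) segment of $BS(x)$.

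Concretely, there are two cases, distinguished by inspecting only the first segment $[\ell_1, r_1]$ of $BS(x)$. If $\ell_1 > 0$, then bit $0$ of $x$ is $0$ and the increment only sets bit $0$ to $1$: when $\ell_1 = 1$ this extends the first segment to $[0, r_1]$, otherwise it prepends a fresh singleton segment $[0,0]$ to the head of the list. If $\ell_1 = 0$, then bits $0,\ldots,r_1$ of $x$ are $1$ and bit $r_1+1$ is $0$; the increment clears bits $0,\ldots,r_1$ and sets bit $r_1+1$ to $1$. Letting $[\ell_2, r_2]$ denote the second segment (if it exists), if $\ell_2 = r_1+2$ then I would delete the first segment and update $\ell_2$ to $r_1+1$, merging the two; otherwise I would replace the first segment by the singleton $[r_1+1, r_1+1]$.

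Each of these updates touches only the first two nodes of the doubly-linked list, so $O(1)$ pointer manipulations suffice. The only point that requires a bit of care is the merge condition in the second case, together with boundary situations such as an empty list or a missing second segment; beyond these minor bookkeeping details, the lemma follows by a routine verification on the binary representation of $x$.
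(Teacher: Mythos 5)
Your proof is correct and follows essentially the same approach as the paper: read $\beta(x)$ off the head of the list, then update $BS(x)$ to $BS(x+1)$ by a case analysis on the first segment (and, when $\ell_1 = 0$, on whether the second segment is adjacent after the carry). The case split is identical in substance; if anything your write-up is slightly cleaner, since the paper's phrasing ``If $a>2$'' inadvertently leaves the case $a=2$ uncovered, whereas your ``otherwise'' branch handles all $\ell_1 \ge 2$ uniformly.
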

\begin{proof}
The first number in $BS(x)$ is $\beta(x)$. Let us construct $BS(x+1)$. Let $[a, b]$ be the first segment in $BS(x)$. If $a > 2$, then $BS(x + 1)=[0,0]\cup BS(x)$. If $a = 1$, then $BS(x + 1)=[0, b]\cup(BS(x)\backslash[1,b])$. Now let $a = 0$. If $BS(x)=\{[0,b]\}$ then $BS(x+1)=\{[b{+}1,b{+}1]\}$. Otherwise let the second segment in $BS(x)$ be $[c, d]$. If $c > b + 2$, then $BS(x + 1)=[b {+} 1, b {+} 1]\cup(BS(x)\backslash [0,b])$. Finally, if $c = b + 2$, then $BS(x + 1)=[b{+}1, d]\cup (BS(x)\backslash \{[0,b],[c,d]\})$.
\end{proof}

Thus, if we support one list $BS$ which is equal to $BS(i)$ at the end of the $i$th iteration, we have $\beta(i)$. If $I(a)$ should be deleted from $SP$ at this iteration, then $\beta(a)=\beta(i)$ (see Lemma~\ref{inject}). The following lemma is trivial.

\begin{lemma} \label{orderTrivial}
If $a < b$ and $\ttl(a) = \ttl(b)$, then $I(a)$ is deleted from $SP$ before $I(b)$.
\end{lemma}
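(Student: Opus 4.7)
The statement is essentially a one-line observation, so the plan is correspondingly short. The key fact I need is that an element $I(j)$, once inserted into $SP$ at iteration $j$, remains in $SP$ for exactly $\ttl(j)$ iterations and is then deleted at iteration $j + \ttl(j)$. This is immediate from the deletion rule in lines 2--4 of Algorithm MBasic, which removes $v$ precisely when $v.i + \ttl(v.i) = i$.

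Given this, the proof reduces to comparing two deletion times. Under the hypothesis $a < b$ and $\ttl(a) = \ttl(b)$, adding the equal quantities $\ttl(a) = \ttl(b)$ to the strict inequality $a < b$ yields $a + \ttl(a) < b + \ttl(b)$. Hence the iteration at which $I(a)$ is deleted strictly precedes the iteration at which $I(b)$ is deleted, which is the claim.

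I expect no real obstacle here: the only thing to verify is that the algorithm's schedule does indeed delete $I(j)$ exactly at step $j+\ttl(j)$ (and not earlier for some other reason), but this is immediate from inspection of the pseudocode, since no other line of Algorithm MBasic removes elements from $SP$. So the full proof will be two sentences, and I would present it exactly in this form.
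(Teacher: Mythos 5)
Your proof is correct and matches what the paper has in mind: the paper explicitly labels this lemma as trivial and gives no proof, and the intended argument is exactly yours, namely that $I(j)$ is deleted at iteration $j+\ttl(j)$, so equal $\ttl$ values with $a<b$ give $a+\ttl(a)<b+\ttl(b)$.
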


By Lemma~\ref{orderTrivial}, the information about the positions with the same $\ttl$ (in other words, with the same $\beta$) are added to and deleted from $SP$ in the same order. Hence it is possible to keep a queue $QU(x)$ of the pointers to all elements of $SP$ corresponding to the positions $j$ with $\beta(j) = x$. These queues constitute the last ingredient of our real-time Algorithm M.

\begin{algorithm*}
\caption{: Algorithm M, $i$th iteration}
\label{alg:M}
\begin{algorithmic}[1]
\State {add $I$ to the beginning of $SP$}
\If {$i = 1$} 
\State {$sp\gets first(SP)$}
\EndIf
\State {compute $BS[i]$ from $BS$; $BS\gets BS[i]$; compute $\beta(i)$ from $BS$}
\If {$QU(\beta(i) )$ is not empty}
\State {$v\gets \text{ element of }SP\text{ pointed by }first(QU(\beta(i) ) )$}
\If {$v = sp$}
\State {$sp\gets next(sp)$}
\EndIf
\State {delete $v$; delete $first(QU(\beta(i) ) )$}
\EndIf
\State {add pointer to $first(SP)$ to $QU(\beta(i) )$}
\State {read $S[i]$; compute $I(i+1)$ from $I$; $I\gets I(i+1)$}
\State {$sp\gets previous(sp)$} \Comment{if exists}
\While {$i-sp.i+1 \le answer.len$ and $sp \ne last(SP)$}
\State {$sp\gets next(sp)$}
\EndWhile
\For {all existing $v$ in $\{sp, next(sp), next(next(sp) )\}$} 
	\If {$S[v.i..i]$ is a palindrome and $answer.len < i{-}v.i{+}1$}
		\State {$answer\gets (v.i, i{-}v.i{+}1)$}
	\EndIf
\EndFor
\end{algorithmic}
\end{algorithm*}

\begin{proof}[Proof of Theorem~\ref{mult}]
After every iteration, Algorithm M has the same list $SP$ (see Fig.~\ref{SP}) as Algorithm MBasic, because these algorithms add and delete the same elements. Due to Lemma~\ref{noMoreThan3}, Algorithm M returns the same answer as Algorithm MBasic. Hence by Proposition~\ref{cEasyLog} Algorithm M finds a palindrome of required length. 
Further, Algorithm M supports the list $BS$ of size $\bigo(\log n)$ and the array $QU$ containing $\bigo(\log n)$ queues of total size equal to the size of $SP$. Hence, it uses $\bigo(\frac{\log (n\varepsilon)}{\varepsilon})$ space in total by Lemma~\ref{SPsize}.
The cycle in lines 13--14 performs at most three iterations. Indeed, let $z$ be the value of $sp$ after the previous iteration. Then this cycle starts with $sp=previous(z)$ (or with $sp=z$ if $z$ is the first element of $SP$) and ends with $sp=next(next(z))$ at the latest. By Lemma~\ref{BSrecalc}, both $BS(i)$ and $\beta(i)$ can be computed in $\bigo(1)$ time. Therefore, each iteration takes $\bigo(1)$ time.
\end{proof}

\paragraph{Remark}
Since for $n^{-0.99} \le \varepsilon\le 1$ the classes $\bigo\big(\frac{\log n}{\log(1+\varepsilon)}\big)$ and $\bigo\big(\frac{\log (n\varepsilon)}{\varepsilon}\big)$ coincide, Algorithm M uses space within a $\log n$ factor from the lower bound of Theorem~\ref{th:montecarlo_multiplicative_lowerbound}. Furthermore, for an arbitrarily slowly growing function $\varphi$ Algorithm M uses $o(n)$ space whenever $\varepsilon=\frac{\varphi(n)}{n}$.

\subsection{Multiplicative Error for $\varepsilon > 1$}

\begin{theorem} \label{mult2}
There is a real-time Monte Carlo algorithm solving the problem $\LPSP(S)$ with multiplicative error $\varepsilon=\varepsilon(n)\in (1,n]$ using $\bigo\big(\frac{\log (n)}{\log(1+\varepsilon)}\big)$ space, where $n=|S|$. 
\end{theorem}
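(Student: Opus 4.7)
The plan is to adapt Algorithm M from Section~\ref{ss:mul1} to the regime $\varepsilon>1$ by replacing the base $2$ of the multiplicative hierarchy of time-to-live values by a larger integer base $b=\Theta(1+\varepsilon)$, say $b=\lceil 1+\varepsilon\rceil$. Let $\beta_b(i)$ denote the position of the rightmost nonzero digit in the base-$b$ representation of $i$, and define $\ttl(i)=c\cdot b^{\beta_b(i)+1}$ for an absolute constant $c$. The rest of the construction mirrors MBasic and Algorithm M: at every iteration we add $I(i)$ to the list $SP$, delete any element whose $\ttl$ has expired, and scan a constant number of candidates in $SP$ for a palindrome ending at the current position.

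First I would establish the space bound by proving the base-$b$ analogues of Lemmas~\ref{ttl_big} and~\ref{SPdistr}: between two consecutive multiples of $b^{m+1}$ there are exactly $b-1$ multiples of $b^m$ whose level is exactly $m$, so the number of level-$m$ elements in $SP$ at any time is $\bigo(c)$. Summing over the $\bigo(\log_b n)$ levels yields $|SP|=\bigo(\log n/\log(1+\varepsilon))$. Next I would prove an analogue of Proposition~\ref{cEasyLog}: for a longest palindrome of length $L$ with $b^m\le L<b^{m+1}$, there is a unique position $k\in[i,i+b^m)$ with $\beta_b(k)\ge m$; this $k$ is still in $SP$ at iteration $j-(k-i)$ since $\ttl(k)\ge c\,b^{m+1}>L$, so the nested palindrome $S[k..j-(k-i)]$ is inspected, and its length is at least $L-2b^m\ge L(1-2/b)\ge L/(1+\varepsilon)$ whenever $b$ exceeds a small absolute constant. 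The residual range $\varepsilon\in(1,C]$ (finitely many values of $b$) is handled by simply running Algorithm~M with parameter $1$, whose $\bigo(\log n)$ space coincides with $\bigo(\log n/\log(1+\varepsilon))$ up to constants.

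For the real-time speedup I would transport the auxiliary structures of Section~\ref{ss:mul1} to base $b$: the list $BS(i)$ now records maximal runs of nonzero digits in the base-$b$ expansion of $i$, is updated in $\bigo(1)$ time per increment by a direct analogue of Lemma~\ref{BSrecalc}, and yields $\beta_b(i)$ on demand; the queues $QU(x)$ for $x=0,\dots,\lfloor\log_b n\rfloor$ locate the unique element to be deleted at each iteration in $\bigo(1)$ time; and a base-$b$ analogue of Lemma~\ref{noMoreThan3} shows that only a constant number of elements of $SP$ are \emph{valuable} at any given iteration, so that the candidates to be checked for palindromicity can be enumerated in $\bigo(1)$ time using a pointer analogous to $sp$ in Algorithm~M. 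The main obstacle I foresee is calibrating the constants $b$ and $c$ simultaneously so that per-level density is genuinely $\bigo(1)$ while the worst-case error $2b^m$ at level $m$ remains within the $(1+\varepsilon)$-budget for every $L\in[b^m,b^{m+1})$; the key technical step, analogous to Lemma~\ref{abcd}, is to show that any four consecutive elements $I(d),I(c'),I(b'),I(a)$ of $SP$ satisfy $b'-a\le d-b'$ in the new base-$b$ hierarchy, from which the constant bound on the number of useful candidates follows.
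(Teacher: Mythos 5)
Your high-level plan—replace the base-$2$ hierarchy of Algorithm~M by a base-$b$ hierarchy with $b=\Theta(1+\varepsilon)$, keep the same $SP$/$BS$/$QU$ machinery, and argue that only $\bigo(1)$ candidates are valuable per step—is exactly the route the paper takes. But the calibration you flag as "the main obstacle" is in fact a genuine gap, and the specific inequalities you wrote down do not close it.

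First, the correctness bound. You pick $m$ with $b^m\le L<b^{m+1}$ and a position $k\in[i,i+b^m)$ at level $\ge m$, then claim the detected palindrome has length at least $L-2b^m\ge L(1-2/b)$. The second inequality is false: it would require $b^m\le L/b$, i.e.\ $L\ge b^{m+1}$, which contradicts $L<b^{m+1}$. In fact when $L$ is just above $b^m$ the quantity $L-2b^m$ is negative, so the argument as written shows nothing. Second, your $\ttl$ cannot be calibrated by an absolute constant $c$. If $\ttl(i)=c\,b^{\beta_b(i)+1}$ (as written), the number of alive level-$m$ elements is $\Theta(c\,b)$, not $\bigo(c)$, and the total space becomes $\Theta\big((1+\varepsilon)\frac{\log n}{\log(1+\varepsilon)}\big)$, not the claimed bound. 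If instead $\ttl(i)=c\,b^{\beta_b(i)}$ (which does give $\bigo(c)$ per level), then for $k$ at level $m$ you only have $\ttl(k)\ge c\,b^m$, and you need this to exceed $L$, which can be as large as $b^{m+1}$; that forces $c>b$, again not an absolute constant. The two requirements (per-level density $\bigo(1)$ and TTL long enough) cannot be met simultaneously with your parameterization.

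The paper resolves this by not aiming for a $(1-2/b)^{-1}$-type ratio at all. It chooses $d=\lfloor\log_k (L/4)\rfloor$ so that $4k^d\le L<4k^{d+1}$, sets $\ttl'$ at level $j>0$ to $\tfrac92 k^{j}$ (a constant multiple of $k^j$, \emph{not} $k^{j+1}$), and then considers five consecutive multiples $a_1<\dots<a_5$ of $k^d$ straddling the center. The palindrome $S[a_1..(i+j-a_1)]$ fits inside $[a_1,a_5)$, which is covered by $\ttl'(a_1)\ge\tfrac92 k^d>4k^d$, and has length $\ge 2k^d$; since $L<4k^{d+1}$, the ratio is $<2k$, and choosing $k$ even with $k\le\tfrac12(1+\varepsilon)$ gives $2k\le 1+\varepsilon$. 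So the loss factor is accepted to be $\Theta(k)$, and $k$ is chosen small relative to $1+\varepsilon$ to absorb it; this is fundamentally different from trying to make the loss $1-2/b$. A further wrinkle you omit: with this $\ttl'$, the map $x\mapsto x+\ttl'(x)$ is only $2$-to-$1$ (not injective as in the base-$2$ case), so Algorithm~M$'$ must perform up to two deletions per iteration via two separate queues; your claim of a "unique element to be deleted" would need to be revisited.
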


We transform Algorithm M into real-time Algorithm M' which solves $\LPSP(S)$ with the multiplicative error $\varepsilon>1$ using $\bigo\big(\frac{\log n}{\log(1+\varepsilon)}\big)$ space. The basic idea of transformation is to replace all binary representations with those in base proportional to $1+\varepsilon$, and thus shrink the size of the lists $SP$ and $BS$.

First, we assume without loss of generality that $\varepsilon \ge 7$, as otherwise we can fix $\varepsilon = 1$ and apply Algorithm~M.
Fix $k \le \frac{1}{2}(1+\varepsilon)$ as the largest such even integer (in particular, $k \ge 4$). Let $\beta'(i)$ be the position of the rightmost non-zero digit in the $k$-ary representation of $i$. We define
\begin{equation} \label{ttl'}
\ttl'(i) = 
\begin{cases}
\frac92 \cdot k^ {\beta'(i)} \quad \text{ if } \beta'(i) > 0,\\
4 \quad \text{ otherwise.}
\end{cases}
\end{equation}

The list $SP'$ is the analog of the list $SP$ from Sect.~\ref{ss:mul1}. It contains, after $i$th iteration, the tuples $I(j)$ for all positions $j\le i$ such that $j+ttl'(j)>i$. Similar to \eqref{intervals}, we partition the range $(0; i]$ into intervals and then count the elements of $SP'$ in these intervals. The intervals are, right to left,
\begin{equation} \label{int'}
(i - 4, i], \big(i - \tfrac92 k, i - 4\big], \big(i - \tfrac92k^2, i - \tfrac92k\big], \ldots,
\big(i - \tfrac92k^m, i - \tfrac92k^{m-1}\big], \big(0, i - \tfrac92k^m\big].
\end{equation}

For convenience, we enumerate the intervals starting with 0.

\begin{lemma}
\label{lem:modified_intervals}
Each interval in \eqref{int'} contains at most 5 elements of $SP'$. Each of the intervals $0,...,m$ contains at least 3 elements of $SP'$.
\end{lemma}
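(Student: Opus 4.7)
The plan is to reduce the question to counting multiples of powers of $k$ in each interval. The key observation is that $I(j)$ belongs to $SP'$ after iteration $i$ if and only if $\ttl'(j) > i - j$. Using the piecewise definition of $\ttl'$ in \eqref{ttl'}, this is equivalent to: either $\beta'(j) = 0$ and $i - j < 4$, or $\beta'(j) = b \geq 1$ and $i - j < \tfrac{9}{2} k^b$. So the first step is to describe, for each interval in \eqref{int'}, which digit-valuations $\beta'(j)$ suffice for membership in $SP'$.

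Going interval by interval: In interval $0$, every $j$ satisfies $i - j \le 3 < 4 \le \ttl'(j)$, so all $4$ of its positions lie in $SP'$. In interval $1$, we have $i - j \in [4, \tfrac92 k - 1]$, so positions with $\beta'(j) = 0$ fail but every $j$ with $\beta'(j) \ge 1$ (i.e.\ every multiple of $k$) qualifies, since $\ttl'(j) \ge \tfrac92 k$. For interval $d$ with $2 \le d \le m$, we have $i - j \in [\tfrac92 k^{d-1}, \tfrac92 k^d - 1]$, and $j \in SP'$ iff $\beta'(j) \ge d$, i.e.\ $j$ is a multiple of $k^d$. For the last interval, membership requires $\beta'(j) \ge m+1$.

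The second step is to count multiples of $k^d$ in an interval of length $L_d$, which is always either $\lfloor L_d / k^d \rfloor$ or $\lceil L_d / k^d \rceil$. I plug in: interval $1$ has length $\tfrac92 k - 4$, giving $L_1/k = 4.5 - 4/k \in [3.5, 4.5)$ for $k \ge 4$, so between $3$ and $5$ multiples of $k$. Interval $d \in [2,m]$ has length $\tfrac{9(k-1)}{2} k^{d-1}$, giving $L_d/k^d = 4.5(1 - 1/k) \in [3.375, 4.5)$, again between $3$ and $5$. For the last interval, by the choice of $m$ we have $i < \tfrac92 k^{m+1}$, hence its length is strictly less than $\tfrac92(k^{m+1} - k^m)$, giving at most $\lceil 4.5(1 - 1/k) \rceil \le 5$ multiples of $k^{m+1}$.

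No step is genuinely an obstacle; the only care needed is (i) to separate the exceptional case $\beta'(j) = 0$ when handling interval $1$, where we must verify that positions that are not multiples of $k$ do \emph{not} contribute (because $\ttl'(j) = 4 \le i - j$ there); and (ii) to use the lower bound $k \ge 4$ at the right moment so that both $\lfloor L/k^d \rfloor \ge 3$ and $\lceil L/k^d \rceil \le 5$ hold uniformly. Once these are in place, combining the per-interval counts yields both halves of the lemma.
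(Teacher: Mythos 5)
Your proof is correct and follows essentially the same strategy as the paper's: characterize membership in $SP'$ within each interval as divisibility by the appropriate power of $k$, then bound the count of such multiples by comparing interval length to $k^d$. You are somewhat more careful than the paper about interval $1$ (using its true length $\tfrac92 k-4$ rather than the slightly smaller $\tfrac92 k-\tfrac92$, and explicitly noting that $\beta'(j)=0$ positions are excluded), but the arithmetic and conclusion are the same in both cases.
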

\begin{proof}
The $0$th interval contains exactly 4 elements. For any $j=1,\ldots,m{+}1$, an element $x$ of the $j$th interval is in $SP'$ if and only if its position is divisible by $k^j$; see \eqref{ttl'}. The length of this interval is less than $\frac92 k^j$, giving us upper bound of $\big\lceil \frac92 \big\rceil = 5$ elements. Similarly, if $j\ne m{+}1$, the $j$th interval has the length $\frac{9}{2} k^j - \frac{9}{2} k^{j-1}$ and thus contains at least $\big\lfloor \frac{9}{2} \frac{k-1}{k} \big\rfloor$ elements of $SP'$. Since $k \ge 4$, the claim follows.
\end{proof}

Next we modify Algorithm MBasic, replacing $ttl$ by $ttl'$ and $SP$ by $SP'$.

\begin{proposition}
Modified Algorithm MBasic finds a palindrome of length $\ell(S) \ge \frac{L(S)}{1+\varepsilon}$ using $\bigo\big(\frac{\log n}{\log(1+\varepsilon)}\big)$ space.
\end{proposition}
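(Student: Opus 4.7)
\emph{Space.} Both storage and per-iteration time are dominated by $|SP'|$. By Lemma~\ref{lem:modified_intervals}, each of the $O(\log_k n)$ intervals of~\eqref{int'} holds at most five elements of $SP'$. Since $k$ is an even integer with $4 \le k \le (1+\varepsilon)/2$, we have $k = \Theta(1+\varepsilon)$ and hence $\log k = \Theta(\log(1+\varepsilon))$, so $|SP'| = O(\log n / \log(1+\varepsilon))$, as required.

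\emph{Correctness.} The plan is to mirror the proof of Proposition~\ref{cEasyLog}, replacing the $2$-adic structure with the $k$-adic one. Fix a longest palindrome $S[i..j]$ of length $L = L(S)$. I will exhibit a position $p \in [i, \lfloor(i+j)/2 \rfloor]$ such that $I(p) \in SP'$ at iteration $t := j - (p-i)$; the algorithm then discovers the palindrome $S[p..t]$ of length $L - 2(p-i)$. Membership of $I(p)$ at iteration $t$ is equivalent to $\ttl'(p) + 2(p-i) > L - 1$, while the approximation guarantee $L - 2(p-i) \ge L/(1+\varepsilon)$ reduces to $p - i \le L\varepsilon/(2(1+\varepsilon))$. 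For $L \le 4$, $\ttl'(i) \ge 4 \ge L$ immediately lets me take $p = i$ and find $S[i..j]$ exactly. For $L > 4$ I pick $b \ge 1$ so that $k^b$ lies in a window of the form $[\Theta(L/(1+\varepsilon)),\, \Theta(L)]$ coming from the two inequalities above; the ratio of the endpoints is $\Theta(\varepsilon)$, which exceeds $k$ because $k \le (1+\varepsilon)/2$, so a power of $k$ always fits inside. Taking $p$ as an appropriate multiple of $k^b$ yields $\ttl'(p) \ge \frac{9}{2} k^b$ by $\beta'(p) \ge b$ and~\eqref{ttl'}, and both inequalities above follow by direct substitution.

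The main obstacle is to split the choice of $p$ into two sub-cases depending on the sign of $\frac{9}{2} k^b - L$. When $\frac{9}{2} k^b \ge L$, I take $p$ to be the first multiple of $k^b$ in $[i, i+k^b)$, so $p - i < k^b$ and the length bound follows from $k^b$ being close to the lower end of the window. When $\frac{9}{2}k^b < L$, the membership condition forces $p$ to sit near the centre of $S[i..j]$, specifically the smallest multiple of $k^b$ above $i + (L - 1 - \frac{9}{2}k^b)/2$, and the length bound follows from $k^b$ being close to the upper end of the window. The constants $\frac{9}{2}$ and $4$ in the definition of $\ttl'$, combined with $k \le (1+\varepsilon)/2$, are calibrated exactly so that these two sub-cases glue together to cover every $L > 4$; the boundary values of $L$ for which the window degenerates satisfy $L/(1+\varepsilon) < 1$ and are met trivially.
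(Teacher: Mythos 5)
Your space bound matches the paper's. For correctness, your argument is sound but takes a genuinely different route from the paper's. The paper chooses $d = \lfloor \log_k \frac{L}{4}\rfloor$ (so $4k^d \le L < 4k^{d+1}$) and considers five consecutive multiples $a_1<\dots<a_5$ of $k^d$ with $a_2 \le \frac{i+j}{2} < a_3$; since $a_5 - a_1 = 4k^d < \frac92 k^d \le \ttl'(a_1)$, the palindrome $S[a_1..\,i{+}j{-}a_1]$ is detected, giving $\ell(S) > 2k^d$ and $L/\ell(S) < 2k \le 1+\varepsilon$ — a single argument, no case split, and the position is chosen by straddling the center rather than by scanning from the left end. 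You instead transplant the scheme of Proposition~\ref{cEasyLog}: search for a starting position $p$ at a chosen scale $k^b$, with two sub-cases according to whether $\ttl'(p)\approx\frac92 k^b$ outlasts the whole palindrome (then $p$ sits within $k^b$ of $i$) or only the part past the center (then $p$ sits just right of $i + \frac{L-1-\frac92 k^b}{2}$). I checked the arithmetic you leave to "direct substitution": the membership and length requirements translate to $k^b$ lying in (roughly) $[\frac{2(L-1-\varepsilon)}{5(1+\varepsilon)},\ \frac{L\varepsilon}{2(1+\varepsilon)}+1]$, whose endpoint ratio is at least $\frac{5\varepsilon}{4} > \frac{1+\varepsilon}{2} \ge k$, and whose upper end drops below $k$ only when $L < \frac{2(1+\varepsilon)(k-1)}{\varepsilon} \le 1+\varepsilon$, confirming your final "degenerate" remark. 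So the approach is correct, but it carries more bookkeeping (the two sub-cases, the existence-of-$b$ argument, the separate boundary check) than the paper's five-multiples argument, and your sketch leaves several of these verifications implicit; the paper's route is the one to prefer for economy, while yours buys uniformity with the $\varepsilon\le 1$ proof at the cost of the case analysis.
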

\begin{proof}
Let $S[i..j]$ be a palindrome of length $L(S)$. Let $d = \big\lfloor \log_k \frac{L(S)}{4} \big\rfloor$. Without loss of generality we assume $d \ge 0$, as otherwise $L(S) < 4\le \ttl'(i)$ and the palindrome $S[i..j]$ will be detected exactly. Since $L(S) \ge 4 k^d$, let $a_1<a_2<a_3<a_4<a_5$ be consecutive positions which are multiples of $k^d$ (i.e., $\beta'(a_1),\ldots,\beta'(a_5) \ge d$) such that $a_2 \le \frac{i+j}2 < a_3$. Then in particular $i < a_1$, and there is a palindrome $S[ a_1 .. (i+j-a_1) ]$ such that $a_3 \le (i+j-a_1) < a_5$. Since $a_1 + \ttl'(a_1) \ge a_5$, this particular palindrome will be detected by the modified Algorithm MBasic; thus
$\ell(S) \ge a_3-a_1=2 k^d$. However, we have $L(S) < 4k^{d+1}$, hence $\frac{L(S)}{\ell(S)} < 2k \le (1+\varepsilon)$.

Space complexity follows from bound on size of the list $SP'$, which is at most $5 \big\lceil \frac{\log n}{\log k} \big\rceil = \bigo\big( \frac{\log n}{\log (1+\varepsilon)}\big)$.
\end{proof}

To follow the framework from the case of $\varepsilon \le 1$, we provide analogous speedup to the checks for palindromes. We adopt the same notion of an element valuable at $i$th iteration as in Sect.~\ref{ss:mul1}. First we need the following property, which is a more general analog of Lemma~\ref{abcd}; an analog of Lemma~\ref{noMoreThan3} is then proved with its help. 

\begin{lemma} \label{l:abc}
Suppose that at some iteration the list $SP'$ contains consecutive elements $I(d),I(c)$, and $d\le i-answer.len$, where $i$ is the number of the current iteration. Further, let $I(a)$ be another element of $SP'$ at this iteration and $a<c$. If $c,d$ belong to the same interval of \eqref{int'}, then $I(a)$ is not valuable.
\end{lemma}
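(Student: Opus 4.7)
Plan: Suppose, for a contradiction, that $I(a)$ is valuable, so in particular $S[a..i]$ is a palindrome. I will exhibit a position $b$ with $a < b \le (a+d)/2$ such that $I(b) \in SP'$ at the earlier iteration $i' := i - (b-a)$. At iteration $i'$ Algorithm MBasic detects the sub-palindrome $S[b..i']$ of length $i - a + 1 - 2(b-a)$, which is at least $i - d + 1$ since $b \le (a+d)/2$; hence by iteration $i$ we have $answer.len \ge i - d + 1$, contradicting $answer.len \le i - d$ (i.e., the hypothesis $d \le i - answer.len$ rewritten). This forces $S[a..i]$ not to be a palindrome, so $I(a)$ is not valuable.

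To construct $b$ I use the divisibility structure of $SP'$. Since $I(a) \in SP'$ and $a < c$ with $c$ in interval $j$ of \eqref{int'}, the position $a$ must lie in some interval $j_a \ge j$, and is therefore a multiple of $k^j$. I look for $b = a + m k^j$ with integer $m \ge 1$; then $b$ is a multiple of $k^j$, so $\beta'(b) \ge j$ and $ttl'(b) \ge \tfrac{9}{2} k^j$ (the edge case $j = 0$, where $ttl'(b) \ge 4$, requires only a minor numerical adjustment). The conditions $b \le (a+d)/2$ and $I(b) \in SP'$ at $i'$ translate, respectively, into $m \le (d-a)/(2k^j)$ and $i - a - 2 m k^j < \tfrac{9}{2} k^j$, i.e., $m > (i-a)/(2k^j) - 9/4$. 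It therefore suffices to exhibit an integer $m \ge 1$ in the half-open interval
\[
\mathcal{I} := \bigl((i-a)/(2k^j) - 9/4,\ (d-a)/(2k^j)\bigr].
\]

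Two structural facts yield such an $m$, and this is the main obstacle to the proof. First, the right endpoint $R := (d-a)/(2k^j)$ is a half-integer (both $a$ and $d$ are multiples of $k^j$), and $R \ge 1$ because $d - a = (d-c) + (c-a) \ge 2 k^j$: here $d - c = k^j$, while $c - a \ge k^j$ since $c$ and $a$ are both multiples of $k^j$ with $c > a$. Second, $|\mathcal{I}| = 9/4 - (i-d)/(2k^j) > 1/2$: the hypothesis that $c$ and $d$ belong to the same interval of \eqref{int'} gives $c > i - \tfrac{9}{2} k^j$, hence $d = c + k^j > i - \tfrac{7}{2} k^j$, so $(i-d)/(2k^j) < 7/4$. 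Consequently, if $R \in \mathbb{Z}$ then $R \in \mathcal{I}$; otherwise $R = n + 1/2$ with integer $n \ge 1$, and $n = R - 1/2 > R - |\mathcal{I}|$ puts $n \in \mathcal{I}$. Either way, an $m \ge 1$ exists. The key quantitative point is that the same-interval hypothesis is exactly what forces $i - d < \tfrac{7}{2} k^j$, which makes $|\mathcal{I}| > 1/2$ and thereby guarantees an integer in the feasible range.
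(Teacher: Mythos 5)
Your proof is correct, and it reaches the same conclusion by a related but more roundabout route. The paper's proof directly constructs $b$: it observes that $a,c,d$ are all multiples of $k^j$ with $d-c=k^j$, hence exactly one of $\frac{c+a}{2}$ and $\frac{d+a}{2}$ is a multiple of $k^j$, and simply takes that one as $b$. Your proof instead parametrizes $b=a+mk^j$ and shows the feasible set $\mathcal{I}$ of $m$'s contains an integer via a length-and-parity argument ($|\mathcal{I}|>1/2$ and the right endpoint $R$ is a half-integer). In effect, the paper's two candidates $\frac{c-a}{2k^j}$ and $\frac{d-a}{2k^j}$ are exactly the two half-integers in your $\mathcal{I}$ nearest $R$, so your existence argument is a non-constructive version of their explicit choice. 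The paper's formulation buys one further simplification: by phrasing membership of $I(b)$ in $SP'$ as ``the left border of the $j$th interval at iteration $i-\delta$ is smaller than $b$,'' it handles $j=0$ and $j\ge 1$ uniformly, whereas your argument uses $\ttl'(b)\ge\frac92 k^j$ and therefore needs the separate case $j=0$ (where $\ttl'(b)\ge 4$). You flag this case but do not carry it out; it does go through — for $j=0$ one has $d\ge i-2$ because $c>i-4$ and $d=c+1$, which makes the interval length $(d-i)/2+2\ge 1$, enough to guarantee an integer — but spelling it out would close the one visible gap. Also note you must say explicitly that $a$ is a multiple of $k^j$ because it is an element of $SP'$ lying in interval $j$ or further left (by the structure lemma on $SP'$), not merely because $a<c$; you gesture at this but it is worth stating precisely since it underlies the whole parametrization.
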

\begin{proof}
Let $c,d$ belong to the $j$th interval. Thus they are divisible by $k^j$ and $d-c=k^j$. Since $a<c$, $a$ is divisible by $k^j$ as well. One of the numbers $\frac{d+a}2, \frac{c+a}2$ is divisible by $k^j$; take it as $b$. Let $\delta=b-a$. If $S[a..i]$ is a palindrome, then $S[b..i-\delta]$ is also a palindrome. Since at the $i$th iteration the left border of the $j$th interval was smaller than $c$, then at the $(i-\delta)$th iteration this border was smaller than $b$; hence, $I(b)$ was in $SP'$ at that iteration, and the palindrome $S[b..i-\delta]$ was found. Its length is 
$$
i-\delta-b+1=i+1+a-2b\ge i+1+a-(d+a)\ge i-d+1> answer.len,
$$
which is impossible by the definition of $answer.len$. So $S[a..i]$ is not a palindrome, and the claim follows.
\end{proof}

\begin{lemma}
At each iteration, $SP'$ contains at most three valuable elements. Moreover, if $I(d'),I(d)$ are consecutive elements of $SP'$ and $i-d'<answer.len\le i-d$, where $i$ is the number of the current iteration, then the valuable elements are consecutive in $SP'$, starting with $I(d)$. 
\end{lemma}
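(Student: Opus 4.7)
The plan is to mirror the structure of the proof of Lemma~\ref{noMoreThan3}, using the newly-proved Lemma~\ref{l:abc} as the main tool and Lemma~\ref{lem:modified_intervals} to control the geometry of $SP'$. First I would observe that every $I(v) \in SP'$ with $v \ge d'$ satisfies $i - v + 1 \le i - d' + 1 \le answer.len$ by the hypothesis $i - d' < answer.len$, so such $I(v)$ is not valuable. Hence it suffices to examine $I(d)$ together with the elements $I(c), I(b), I(a), \ldots$ following it in $SP'$ in order of decreasing index (if fewer than four such elements exist, the claim is immediate).

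The core step is to show that $I(a)$, and therefore every later element of $SP'$, is not valuable. I claim that at least one of the consecutive pairs $(I(d), I(c))$ or $(I(c), I(b))$ lies entirely within a single interval of~\eqref{int'}. Suppose instead that both pairs cross interval boundaries; then $c$ is simultaneously the largest-index and the smallest-index element of its containing interval $J$ that currently belongs to $SP'$, so $J$ holds only one element of $SP'$. This would force $J$ to be the leftmost interval of~\eqref{int'}, because by Lemma~\ref{lem:modified_intervals} every other interval contains at least three elements of $SP'$; but then no interval lies to the left of $J$, contradicting the assumption that $(I(c), I(b))$ crosses a boundary to a neighbouring left interval.

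Given an in-interval pair, I invoke Lemma~\ref{l:abc}: if the pair is $(I(d), I(c))$, the hypothesis $d \le i - answer.len$ holds directly; if it is $(I(c), I(b))$, then $c \le d - 1 \le i - answer.len$ by integrality, and Lemma~\ref{l:abc} applies with $d$ replaced by $c$. In either case the lemma certifies that $I(a)$, and hence every element of $SP'$ further to the right (smaller index), is not valuable. Thus only $I(d), I(c), I(b)$ can be valuable, and they form a consecutive block of $SP'$ beginning at $I(d)$, establishing both parts of the lemma. The main obstacle is precisely the geometric step of ruling out two consecutive boundary crossings; this is where the three-elements-per-interval lower bound in Lemma~\ref{lem:modified_intervals} is essential and where the argument genuinely departs from its binary-base analogue in Lemma~\ref{noMoreThan3}.
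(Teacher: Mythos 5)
Your proof is correct and follows essentially the same route as the paper: reduce to four consecutive entries $I(d),I(c),I(b),I(a)$, argue that $(d,c)$ or $(c,b)$ lie in a single interval of~\eqref{int'}, and invoke Lemma~\ref{l:abc}. The only difference is that you spell out the interval-boundary argument (via Lemma~\ref{lem:modified_intervals}) that the paper leaves implicit; that elaboration is sound and does not change the approach.
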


\begin{proof}
Let $a<b<c<d$ be such that the elements $I(d),I(c),I(b)$ are consecutive in $SP'$ and $I(a)$ belongs to $SP'$. Then either $b,c$ or $c,d$ are in the same interval of \eqref{int'}, and thus $a$ is not valuable by Lemma~\ref{l:abc}.
\end{proof}

To complete the proof, we now turn to deletions, proving the following analog of Lemma~\ref{inject}.

\begin{lemma}
The function $h(x)= x + \ttl'(x)$ maps at most two different values of $x$ to the same value. Moreover, if $h(x)=h(y)$ and $\beta'(x)\ge \beta'(y)$, then $\beta'(x)=\beta'(h(x))+1$ and $\beta'(y)=0$.
\end{lemma}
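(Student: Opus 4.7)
The plan is to reduce the lemma to a single key claim: for every $x$ with $\beta'(x) \ge 1$, one has $\beta'(h(x)) = \beta'(x)-1$. Both conclusions of the lemma will follow mechanically from this claim.

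First, I would prove the key claim by inspecting the base-$k$ representations. Writing $b = \beta'(x) \ge 1$ and $x = k^b(kq + r)$ with $r \in \{1,\ldots,k{-}1\}$, the number $x$ has digit $r$ at position $b$ and zeros at positions $0,\ldots,b-1$. For the added quantity I would use the identity $\frac{9}{2} k^b = 4 k^b + \frac{k}{2} \cdot k^{b-1}$ to read off its base-$k$ digits: digit $\frac{k}{2}$ at position $b-1$ and digit $4$ at position $b$ (for $k=4$ the $4$ at position $b$ equals $k$ itself, so it carries into position $b+1$, leaving $0$ at position $b$ and digit $2$ at position $b-1$). In every case the digit of $\frac{9 k^b}{2}$ at position $b-1$ is a positive integer strictly less than $k$, namely $k/2$ for $k \ge 6$ and $2$ for $k = 4$. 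Since positions $0,\ldots,b-2$ of both summands vanish, no carry reaches position $b-1$, so position $b-1$ of $h(x)=x+\frac{9 k^b}{2}$ is this same nonzero digit while all lower positions are zero. Hence $\beta'(h(x))=b-1$.

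With the key claim in hand, the rest is short. For each fixed $b$, the restriction of $h$ to $\{x : \beta'(x)=b\}$ is a translation (by $4$ if $b=0$ and by $\frac{9}{2}k^b$ otherwise) and therefore injective, so any collision $h(x)=h(y)$ with $x\ne y$ forces $\beta'(x)\ne\beta'(y)$; assume WLOG $\beta'(x)>\beta'(y)$. The subcase $\beta'(y)\ge 1$ is ruled out because the key claim would then give $\beta'(h(x))=\beta'(x)-1$ and $\beta'(h(y))=\beta'(y)-1$, contradicting $h(x)=h(y)$. Therefore $\beta'(y)=0$, and the key claim applied to $x$ yields $\beta'(x)=\beta'(h(x))+1$, completing the ``moreover'' part. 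Since a given target $t$ admits at most one preimage at each $\beta'$-level and only the levels $0$ and $\beta'(t)+1$ can contribute, $t$ has at most two preimages under $h$.

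The main obstacle I foresee is the minor case split between $k=4$ and $k\ge 6$ inside the key claim, since the digit $4$ appearing in the addition coincides with $k$ itself in the first case and must be carried. This is a bookkeeping nuisance rather than a conceptual difficulty; conceptually the proof rests entirely on identifying which digit of $\frac{9k^b}{2}$ lands at position $b-1$.
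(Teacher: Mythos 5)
Your proof is correct and follows essentially the same route as the paper, resting on the same key claim that $\beta'(h(x))=\beta'(x)-1$ whenever $\beta'(x)\ge 1$, from which injectivity within each $\beta'$-level and the ``at most two preimages'' bound follow just as you describe. The case split between $k=4$ and $k\ge 6$ that you flag as a bookkeeping nuisance is actually avoidable: writing $h(x)=(j+4)k^{\beta'(x)}+\frac{k}{2}\cdot k^{\beta'(x)-1}$ as the paper does, the first summand is a multiple of $k^{\beta'(x)}$ regardless of whether $j+4\ge k$, so the digit at position $\beta'(x)-1$ is exactly $k/2\ne 0$ and all lower positions vanish, uniformly in $k$.
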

\begin{proof}
Let $h(x)=h(y)$. If $\beta'(x)=\beta'(y)$ then $ttl'(x)=ttl'(y)$ by \eqref{ttl'}, implying $x=y$. Hence all preimages of $h(x)$ have different values of $\beta'$. Assume $\beta'(x)>\beta'(y)$. Then we have, for some integer $j$, $x=j\cdot k^{\beta'(x)}$ and $h(x)=(j+4)k^{\beta'(x)} + \frac k2\cdot k^{\beta'(x)-1}$ by \eqref{ttl'}. Since $k$ is even, we get $\beta'(h(x))=\beta'(x)-1$. If $\beta'(y)>0$, we repeat the same argument and obtain $\beta'(x)=\beta'(y)$, contradicting our assumption. Thus $\beta'(y)=0$. The claim now follows.
\end{proof}

We also define a list $BS'(x)$, which maintains an RLE encoding of the $k$-ary representation of $x$. The list $BS'(x)$ has length $\bigo\big(\frac {\log n}{\log k}\big)$, can be updated to $BS'(x{+}1)$ in $\bigo(1)$ time, and provides the value $\beta'(x)$ in $\bigo(1)$ time also (cf. Lemma~\ref{BSrecalc}). Further, Lemma~\ref{orderTrivial} holds for the function $ttl'$, so we introduce the queues $QU'(x)$ in the same way as the queues $QU(x)$ in Sect.~\ref{ss:mul1}. Having all the ingredients, we present Algorithm~M' which speeds up the modified Algorithm MBasic and thus proves Theorem~\ref{mult2}. The only significant difference between Algorithm~M and Algorithm~M' is in the deletion of tuples from the list (compare lines 5--9 of Algorithm~M against lines 5--15 of Algorithm~M').

\begin{algorithm*}
\caption{: Algorithm M', $i$th iteration}
\label{alg:M'}
\begin{algorithmic}[1]
\State {add $I$ to the beginning of $SP'$}
\If {$i = 1$} 
\State {$sp\gets first(SP')$}
\EndIf
\State {compute $BS'[i]$ from $BS'$; $BS'\gets BS'[i]$; compute $\beta(i)$ from $BS'$}
\If {$QU'(\beta(i) + 1)$ is not empty}
\State {$v\gets \text{ element of }SP'\text{ pointed by }first(QU'(\beta(i) ) )$}
\If {$v.i+ttl'(v.i) = i$}
\If {$v = sp$}
\State {$sp\gets next(sp)$}
\EndIf
\State {delete $v$; delete $first(QU'(\beta(i) + 1 ) )$}
\EndIf
\EndIf
\State {$v\gets \text{ element of }SP'\text{ pointed by }first(QU'(0 ) )$}
\If {$v.i+ttl'(v.i) = i$}
\If {$v = sp$}
\State {$sp\gets next(sp)$}
\EndIf
\State {delete $v$; delete $first(QU'(0 ) )$}
\EndIf
\State {add pointer to $first(SP')$ to $QU'(\beta(i) )$}
\State {read $S[i]$; compute $I(i+1)$ from $I$; $I\gets I(i+1)$}
\State {$sp\gets previous(sp)$} \Comment{if exists}
\While {$i-sp.i+1 \le answer.len$ and $sp \ne last(SP')$}
\State {$sp\gets next(sp)$}
\EndWhile
\For {all existing $v$ in $\{sp, next(sp), next(next(sp) )\}$} 
	\If {$S[v.i..i]$ is a palindrome and $answer.len < i{-}v.i{+}1$}
		\State {$answer\gets (v.i, i{-}v.i{+}1)$}
	\EndIf
\EndFor
\end{algorithmic}
\end{algorithm*}

\subsection{The Case of Short Palindromes}

A typical string contains only short palindromes, as Lemma~\ref{longpal} below shows (for more on palindromes in random strings, see \cite{RuSh16b}). Knowing this, it is quite useful to have a deterministic real-time algorithm which finds a longest palindrome exactly if it is ``short'', otherwise reporting that it is ``long''. The aim of this section is to describe such an algorithm (Theorem~\ref{exact}).

\begin{lemma} \label{longpal}
If an input stream $S\in\Sigma^*$ is picked up uniformly at random among all strings of length $n$, where $n\ge|\Sigma|$, then for any positive constant $c$ the probability that $S$ contains a palindrome of length greater than $\frac{2(c+1)\log n}{\log|\Sigma|}$ is $\bigo(n^{-c})$. 
\end{lemma}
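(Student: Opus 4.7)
The plan is a straightforward union bound over all candidate palindromic occurrences.

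First I would observe that for any fixed starting position $i$ and any length $L\ge 1$, the event that $S[i..i{+}L{-}1]$ is a palindrome is determined by $\lfloor L/2\rfloor$ independent equalities of the form $S[i{+}k]=S[i{+}L{-}1{-}k]$, so under the uniform distribution on $\Sigma^n$ this event has probability exactly $|\Sigma|^{-\lfloor L/2\rfloor}$. This is the only probabilistic input to the argument.

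Next I would apply a union bound over all starting positions (at most $n$) and all lengths $L$ with $L>\ell$, where $\ell=\frac{2(c+1)\log n}{\log|\Sigma|}$. The probability of interest is bounded by
\[
\sum_{L>\ell}\,(n-L+1)\,|\Sigma|^{-\lfloor L/2\rfloor}\;\le\; n\sum_{L>\ell}|\Sigma|^{-\lfloor L/2\rfloor}\;\le\; 2n\sum_{k\ge\lceil\ell/2\rceil}|\Sigma|^{-k},
\]
where the factor $2$ absorbs the fact that two consecutive values of $L$ share the same $\lfloor L/2\rfloor$. Summing the geometric series (valid since $|\Sigma|\ge 2$) yields a bound of $O\!\bigl(n\cdot|\Sigma|^{-\ell/2}\bigr)$.

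Finally I would plug in the definition of $\ell$: since $(\ell/2)\log|\Sigma|=(c+1)\log n$, we get $|\Sigma|^{-\ell/2}=n^{-(c+1)}$, hence the total probability is $O(n\cdot n^{-(c+1)})=O(n^{-c})$, as claimed. The assumption $n\ge|\Sigma|$ is only used implicitly to guarantee that $\ell\le n$ so that the relevant range of $L$ is non-vacuous and the bound is meaningful. There is no real obstacle here; the only mildly delicate point is the handling of the floor in $\lfloor L/2\rfloor$ versus $\ell/2$, which is why I bound the sum by doubling and starting at $k=\lceil\ell/2\rceil$ rather than trying to match the rounding exactly.
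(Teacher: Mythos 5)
Your proof is correct and rests on the same first-moment idea as the paper's: bound the probability by the expected number of ``long'' palindromic occurrences, using that a length-$L$ factor is a palindrome with probability $|\Sigma|^{-\lfloor L/2\rfloor}$. The one genuine difference is in how the lengths are handled. You union-bound over \emph{all} lengths $L>\ell$, which forces a geometric-series summation (and requires $|\Sigma|\ge 2$ for convergence, as you note). The paper instead observes that $S$ contains a palindrome longer than $m$ if and only if it contains one of length exactly $m{+}1$ or $m{+}2$ (peel off matching outer symbols from a longer palindrome until the length drops into that window), so only \emph{two} terms need to be bounded and no geometric series appears. That observation is slightly cleaner and also sidesteps the small rounding slack your version has: your replacement of the smallest attainable exponent $\lfloor(\lfloor\ell\rfloor{+}1)/2\rfloor$ by $\lceil\ell/2\rceil$ can be off by one when $\ell$ is non-integral with $\lfloor\ell\rfloor$ even, costing a spurious factor of up to $|\Sigma|$; this does not affect the stated $O(n^{-c})$ asymptotics in any essential way (it amounts to a constant shift in $c$), but the paper's two-term expectation, with $m$ taken to be an integer, avoids it outright. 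In short: same method, equivalent conclusion, with the paper's version being a touch tighter on constants and a touch shorter.
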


\begin{proof}
A string $S$ contains a palindrome of length greater than $m$ if and only if $S$ contains a palindrome of length $m{+}1$ or $m{+}2$. The probability $P$ of containing such a palindrome is less than the expected number $M$ of palindromes of length $m{+}1$ and $m{+}2$ in $S$. A factor of $S$ of length $l$ is a palindrome with probability $1/|\Sigma|^{\lfloor l/2\rfloor}$; by linearity of expectation, we have
$$
M=\frac{n-m}{|\Sigma|^{\lfloor (m+1)/2\rfloor}}+\frac{n-m-1}{|\Sigma|^{\lfloor (m+2)/2\rfloor}}\,.
$$ 
Substituting $m=\frac{2(c+1)\log n}{\log|\Sigma|}$, we get $M=\bigo(n^{-c})$, as required.
\end{proof}

\begin{theorem} \label{exact}
Let $m$ be a positive integer. There exists a deterministic real-time algorithm working in $\bigo(m)$ space, which\\
- solves $\LPSP(S)$ exactly if $L(S)<m$;\\
- finds a palindrome of length $m$ or $m{+}1$ as an approximated solution to $\LPSP(S)$ if $L(S)\ge m$.
\end{theorem}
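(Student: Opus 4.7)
The plan is to run a length-capped, online variant of Manacher's algorithm and de-amortize it to meet the real-time requirement.

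The state kept by the algorithm consists of (i) a circular buffer holding the last $\bigo(m)$ symbols of the stream; (ii) Manacher's palindromic radii $d[c]$ for every odd and even center $c$ still ``alive'' inside the buffer, with each $d[c]$ capped so that the corresponding palindrome has length at most $m+1$; (iii) the rightmost-reach pointers $(C,R)$ of classical Manacher; and (iv) the best palindrome found so far as $(pos,len)$. Because of the cap, the live portion of the stream fits inside a window of length at most $m+2$, so the total state occupies $\bigo(m)$ words.

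Processing a new symbol $S[i]$ amounts to a constant amount of Manacher work: sliding the buffer, initializing the new center's radius via the mirror rule $d[c] \leftarrow \min(d[2C-c],\, R-c)$ when $c\le R$, and performing a few character comparisons aimed at extending the current rightmost palindrome. Whenever a radius is finalized (either because an extension fails or because the cap is reached), the resulting palindrome is compared against $(pos,len)$ and the latter is updated if it is longer. For correctness, observe that when $L(S)<m$ the cap never fires and the algorithm coincides with classical Manacher, so $(pos,len)$ ends equal to the true longest palindrome; when $L(S)\ge m$, the center $c^\star$ of some longest palindrome eventually saturates its cap, at which point a palindrome of length exactly $m$ or $m+1$ (matching the parity of $c^\star$) is recorded in $(pos,len)$, as required by the theorem.

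The principal obstacle is real-time operation: classical Manacher is only amortized $\bigo(1)$ per position, since a single extension loop may cost $\Theta(m)$ character comparisons in the worst case. I address this by standard de-amortization: each input step spends a fixed constant comparison budget, and any overflow work is placed on a queue drained at a constant rate on later steps. Manacher's monotone reach $R$, combined with the per-center cap, bounds the queue at $\bigo(m)$ entries, all of them referring to characters still present in the circular buffer. To avoid stale mirror lookups, the algorithm advances its ``processed prefix'' to the next center only once all queued work pertaining to earlier centers is finalized, so every mirror read returns a final value. Verifying that these invariants hold simultaneously --- buffered characters suffice for every queued comparison, mirror reads never consult an unfinalized radius, and the queue empties within $\bigo(m)$ post-stream steps --- is the key technical step that completes the proof.
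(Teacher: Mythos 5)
Your high-level plan — cap Manacher radii at $m+1$, keep a circular $\bigo(m)$-window of symbols and radii, and de-amortize the extension loop via a work queue — is essentially the approach the paper takes (the paper credits the de-amortization idea to Galil). But your write-up stops exactly where the difficulty begins. You end by listing three invariants (buffered characters suffice for every queued comparison, mirror reads never hit an unfinalized radius, the queue drains within $\bigo(m)$ steps) and state that verifying them ``is the key technical step that completes the proof.'' That verification is the proof; asserting that it is needed does not discharge it.

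Concretely, the nontrivial part is showing that the backlog stays $\bigo(m)$ and that the answer is already correct when the stream ends. The paper does this with a potential-function argument: with $q$ the queue length, $n$ the number of symbols Manacher has consumed, and $c$ the current center, it tracks $X = q + n - c$ and shows (using the fact that each step with a full 3-iteration budget satisfies $\Delta X = -(\Delta q)/2$, while any step that stalls on an empty queue has $\Delta X > 0$) that whenever $q>0$, the current $X$ is strictly below the value $X'$ it had the last time $q$ was zero. Since $n-c$ is bounded by a palindrome radius, $X' \le m/2$, so $q \le X < X' \le m/2$, giving the $\bigo(m)$ queue bound. The paper then shows that the symbols left unprocessed in $Q$ at end-of-stream cannot produce a longer palindrome than one already recorded, so no post-stream drain is needed at all. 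Your proposal instead relies on an $\bigo(m)$ post-stream drain, which is both unproven in your sketch and arguably outside the real-time model (the answer must be available when the last symbol is read). Without the potential argument (or an equivalent one) and without the end-of-stream correctness argument, the proposal is a plausible outline but not a proof.
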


\begin{proof}
To prove Theorem~\ref{exact}, we present an algorithm based on the Manacher algorithm \cite{Man75}. We add two features: work with a sliding window instead of the whole string to satisfy the space requirements and lazy computation to achieve real time. (The fact that the original Manacher algorithm admits a real-time version was shown by Galil \cite{Gal76}; we adjusted Galil's approach to solve $\LPSP$.) The details follow. 

We say that a palindromic factor $S[i..j]$ has \emph{center} $\frac{i+j}2$ and \emph{radius} $\frac{j-i}2$. Thus, odd-length (even-length) palindromes have integer (resp., half integer) centers and radiuses. This looks a bit weird, but allows one to avoid separate processing of these two types of palindromes. Manacher's algorithm computes, in an online fashion, an array of maximal radiuses of palindromes centered at every position of the input string $S$. A variation, which outputs the length $L$ of the longest palindrome in a string $S$, is presented as Algorithm EBasic below. This variation is similar to the one of \cite{KRS13}. Here, $n$ stays for the length of the input processed so far, $c$ is the center of the longest suffix-palindrome of the processed string. The array of radiuses $Rad$ has length $2n{-}1$ and its elements are indexed by all integers and half integers from the interval $[1,n]$. Initially, $Rad$ is filled with zeroes. The left endmarker is added to the string for convenience. After each iteration, the following invariant holds: the element $Rad[i]$ has got its true value if $i<c$ and equals zero if $i>c$; the value $Rad[c]=n-c$ can increase at the next iteration. Note that the longest palindrome in $S$ coincides with the longest suffix-palindrome of $S[1..i]$ for some $i$. At the moment when the input stream ends, the algorithm has already found all such suffix-palindromes, so it can stop without filling the rest of the array $Rad$.

\begin{algorithm*}
\caption{: Algorithm EBasic, $i$th iteration}
\label{alg:EBasic}
\begin{algorithmic}[1]
\Procedure{Manacher}{}
\State{$c\gets 1$; $L\gets 1$; $n\gets 1$; $S[0]\gets${'\#'}}
\While {not (end of input)}
\State{read($S[n+1]$); {\sc	AddLetter}($S[n+1]$)}
\If {$2*Rad[c]+1>L$} 
\State{$L\gets 2*Rad[c]+1$}
\EndIf
\EndWhile
\State{return $L$}		
\EndProcedure
\Procedure{AddLetter}{$a$}
\State{$s\gets c$}
\While {$c < n+1$}
\State{$Rad[c]\gets min(Rad[2*s-c], n-c) $}
\If {$c+Rad[c] = n$ and $S[c-Rad[c]-1] = a$}
\State{$Rad[c]\gets Rad[c]+1$} 
\State{break} \Comment{longest suf-pal of $S[1..n+1]$ is found}
\EndIf
\State{$c\gets c+0.5$}	\Comment{next candidate for the center}
\EndWhile
\State{$n\gets n+1$}
\EndProcedure
\end{algorithmic}
\end{algorithm*}

Note that $n$ calls to AddLetter perform at most $3n$ iterations of the cycle in lines 10--15 (each call performs the first iteration plus zero or more ``additional'' iterations; the value of $c$ gets increased before each additional iteration and never decreases). So, Algorithm EBasic works in $\bigo(n)$ time but not in real time; for example, reading the last letter of the string $a^nb$ requires $n$ iterations of the cycle.

By conditions of the theorem, we are not interested in palindromes of length $>m{+}1$. Thus, processing a suffix-palindrome of length $m$ or $m{+}1$ we assume that the symbol comparison in line 12 fails. So the procedure AddLetter needs no access to $S[i]$ or $Rad[i]$ whenever $i<n-m$. Hence we store only recent values of $S$ and $Rad$ and use circular arrays $CS$ and $CRad$ of size $\bigo(m)$ for this purpose. For example, the symbol $S[n{-}i]$ is stored in $CS[(n{-}i)\bmod (m{+}1)]$ during $m{+}1$ successful iterations of the outer cycle (lines 3--6), and then is replaced by $S[n{-}i{+}m{+}1]$; the same scheme applies to the array $Rad$. In this way, all elements of $S$ and $Rad$, needed by Algorithm EBasic, are accessible in constant time. Further, we define a queue $Q$ of size $q$ for lazy computations; it contains symbols that are read from the input and await processing. 

Now we describe real-time Algorithm E. It reads input symbols to $Q$ and stops when the end of the input is reached. After reading a symbol, it runs procedure Manacher, requiring the latter to pause after three ``inner iterations'' (of the cycle in lines 10--15). The procedure reads symbols from $Q$; if it tries to read from the empty queue, it pauses. When the procedure is called next time, it resumes from the moment it was stopped. Algorithm E returns the value of $L$ after the last call to procedure Manacher. 

To analyze Algorithm E, consider the value $X=q+n-c$ after some iteration (clearly, this iteration has number $q{+}n$) and look at the evolution of $X$ over time. Let $\Delta f$ denote the variation of the quantity $f$ at one iteration. Note that $\Delta(q{+}n)=1$. Let us describe $\Delta X$. First assume that the iteration contains three inner iterations. Then $\Delta n=0,1,2$ or 3 and, respectively, $\Delta c=1.5,1,0.5$ or 0. Hence 
$$
\Delta X=1-\Delta c= 1+(\Delta n -3)/2=1-(1-\Delta q -3)/2=-(\Delta q)/2. 
$$
If the number of inner iterations is one or two, then $q$ becomes zero (and was 0 or 1 before this iteration); hence $\Delta n=1-\Delta q\ge 1$. Then $\Delta c\le 0.5$ and finally $\Delta X > 0$. From these conditions on $\Delta X$ it follows that
\begin{itemize}
\item[($*$)] if the current value of $q$ is positive, then the current value of $X$ is less than the value of $X$ at the moment when $q$ was zero for the last time.
\end{itemize}
Let $X'$ be the previous value of $X$ mentioned in ($*$). Since the difference $n-c$ does not exceed the radius of some palindrome, $X'\le m/2$. Since $q\le X<X'$, the queue $Q$ uses $\bigo(m)$ space. Therefore the same space bound applies to Algorithm E.

It remains to prove that Algorithm E returns the same number $L$ as Algorithm EBasic with a sliding window, in spite of the fact that Algorithm E stops earlier (the content of $Q$ remains unprocessed by procedure Manacher). Suppose that Algorithm E stops with $q>0$ after $n$ iterations. Then the longest palindrome that could be found by processing the symbols in $Q$ has the radius $X=n+q-c$. Now consider the iteration mentioned in ($*$) and let $n'$ and $c'$ be the values of $n$ and $c$ after it; so $X'=n'-c'$. Since $q$ was zero after that iteration, procedure Manacher read the symbol $S[n']$ during it; hence, it tried to extend a suffix-palindrome of $S[1..n'{-}1]$ with the center $c''\le c'$. If this extension was successful, then a palindrome of radius at least $X'$ was found. If it was unsuccessful, then $c'\ge c''+1/2$ and hence $S[1..n'{-}1]$ has a suffix-palindrome of length at least $X'-1/2$. Thus, a palindrome of length $X\le X'-1/2$ is not longer than a longest palindrome seen before, and processing the queue cannot change the value of $L$. Thus, Algorithm E is correct.

The center and the radius of the longest palindrome in $S$ can be updated each time the inequality in line 6 of procedure Manacher holds. Theorem~\ref{exact} is proved.
\end{proof}

\paragraph{Remark}
Lemma~\ref{longpal} and Theorem~\ref{exact} show a practical way to solve $\LPSP$. Algorithm E is fast and lightweight ($2m$ machine words for the array $Rad$, $m$ symbols in the sliding window and at most $m$ symbols in the queue; compare to 17 machine words per one tuple $I(i)$ in the Monte Carlo algorithms). So it makes direct sense to run Algorithm M and Algorithm E, both in $\bigo(\log n)$ space, in parallel. Then either Algorithm E will give an exact answer (which happens with high probability if the input stream is a ``typical'' string) or both algorithms will produce approximations: one of fixed length and one with an approximation guarantee (modulo the hash collision).

\bibliography{my_bib}

\newcommand{\noopsort}[1]{} \newcommand{\singleletter}[1]{#1}
\begin{thebibliography}{10}

\bibitem{ABG95}
A.~Apostolico, D.~Breslauer, and Z.~Galil.
\newblock Parallel detection of all palindromes in a string.
\newblock {\em Theoret. Comput. Sci.}, 141:163--173, 1995.

\bibitem{BEMS14}
P.~Berenbrink, F.~Erg{\"u}n, F.~Mallmann-Trenn, and E.~Sadeqi Azer.
\newblock Palindrome recognition in the streaming model.
\newblock In {\em STACS 2014}, volume~25 of {\em LIPIcs}, pages 149--161,
  Germany, 2014. Schloss Dagstuhl-Leibniz-Zentrum fuer Informatik, Dagstuhl
  Publishing.

\bibitem{BrGa11}
D.~Breslauer and Z.~Galil.
\newblock Real-time streaming string-matching.
\newblock In {\em Combinatorial Pattern Matching}, volume 6661 of {\em LNCS},
  pages 162--172, Berlin, 2011. Springer.

\bibitem{CFPSS15}
R.~Clifford, A.~Fontaine, E.~Porat, B.~Sach, and T.~A. Starikovskaya.
\newblock Dictionary matching in a stream.
\newblock In {\em ESA 2015}, volume 9294 of {\em LNCS}, pages 361--372.
  Springer, 2015.

\bibitem{CFPSS16}
R.~Clifford, A.~Fontaine, E.~Porat, B.~Sach, and T.~A. Starikovskaya.
\newblock The \emph{k}-mismatch problem revisited.
\newblock In {\em SODA 2016}, pages 2039--2052. {SIAM}, 2016.

\bibitem{EJS10}
F.~Erg{\"{u}}n, H.~Jowhari, and M.~Saglam.
\newblock Periodicity in streams.
\newblock In {\em RANDOM 2010}, volume 6302 of {\em LNCS}, pages 545--559.
  Springer, 2010.

\bibitem{Gal76}
Z.~Galil.
\newblock Real-time algorithms for string-matching and palindrome recognition.
\newblock In {\em Proc. 8th annual ACM symposium on Theory of computing
  (STOC'76)}, pages 161--173, New York, USA, 1976. ACM.

\bibitem{GaSe78}
Z.~Galil and J.~Seiferas.
\newblock A linear-time on-line recognition algorithm for ``{Palstar}''.
\newblock {\em J. ACM}, 25:102--111, 1978.

\bibitem{GMN14}
P.~Gawrychowski, F.~Manea, and D.~Nowotka.
\newblock Testing generalised freeness of words.
\newblock In {\em STACS 2014}, volume~25 of {\em LIPIcs}, pages 337--349.
  Dagstuhl Publishing, 2014.

\bibitem{JPS13}
M.~Jalsenius, B.~Porat, and B.~Sach.
\newblock Parameterized matching in the streaming model.
\newblock In {\em STACS 2013}, volume~20 of {\em LIPIcs}, pages 400--411.
  Dagstuhl Publishing, 2013.

\bibitem{KaRa87}
R.~Karp and M.~Rabin.
\newblock Efficient randomized pattern matching algorithms.
\newblock {\em IBM Journal of Research and Development}, 31:249--260, 1987.

\bibitem{KMP77}
D.~E. Knuth, J.~Morris, and V.~Pratt.
\newblock Fast pattern matching in strings.
\newblock {\em SIAM J. Comput.}, 6:323--350, 1977.

\bibitem{KRS13}
D.~Kosolobov, M.~Rubinchik, and A.~M. Shur.
\newblock Finding distinct subpalindromes online.
\newblock In {\em Proc. Prague Stringology Conference. PSC 2013}, pages 63--69.
  Czech Technical University in Prague, 2013.

\bibitem{Man75}
G.~Manacher.
\newblock A new linear-time on-line algorithm finding the smallest initial
  palindrome of a string.
\newblock {\em J. ACM}, 22(3):346--351, 1975.

\bibitem{PoPo09}
B.~Porat and E.~Porat.
\newblock Exact and approximate pattern matching in the streaming model.
\newblock In {\em FOCS 2009}, pages 315--323. IEEE Computer Society, 2009.

\bibitem{RuSh16b}
M.~Rubinchik and A.~M. Shur.
\newblock The number of distinct subpalindromes in random words.
\newblock {\em Fundamenta Informaticae}, 145:371--384, 2016.

\bibitem{Yao77}
A.~C.-C. Yao.
\newblock Probabilistic computations: Toward a unified measure of complexity
  (extended abstract).
\newblock In {\em FOCS 1977}, pages 222--227. IEEE Computer Society, 1977.

\end{thebibliography}

\end{document}